\title{Hilbert geometry of the Siegel disk: The Siegel-Klein disk model}
\author{Frank Nielsen\\ Sony Computer Science Laboratories Inc, Tokyo, Japan}
\date{}
\begin{document}
\maketitle

\begin{abstract}
We study the Hilbert geometry induced by the Siegel disk domain, an open bounded convex set of complex square matrices of operator norm strictly less than one.
This Hilbert geometry yields a generalization of the Klein disk model of hyperbolic geometry, henceforth called the Siegel-Klein disk model to differentiate it with the classical Siegel upper plane and disk domains.
In the Siegel-Klein disk, geodesics are by construction always unique and Euclidean straight, allowing one to design
efficient geometric algorithms and data-structures from computational geometry.
For example, we show how to approximate the smallest enclosing ball of a set of complex square matrices in the Siegel disk domains:
We compare two generalizations of the iterative core-set algorithm of Badoiu and Clarkson (BC) in the Siegel-Poincar\'e disk and in the Siegel-Klein disk:
We demonstrate that geometric computing in the Siegel-Klein disk allows one (i) to bypass the time-costly recentering operations to the  disk origin required at each iteration of the BC algorithm in the Siegel-Poincar\'e disk model, and (ii) to approximate fast and numerically the Siegel-Klein distance with guaranteed lower and upper bounds derived from nested Hilbert geometries.
\end{abstract}

\noindent {\bf Keywords}: Hyperbolic geometry; symmetric positive-definite matrix manifold; symplectic group; Siegel upper space domain; Siegel disk domain; Hilbert geometry; Bruhat-Tits space; smallest enclosing ball.

\def\dbarw{\mathrm{d}\bar{w}}
\def\dbarW{\mathrm{d}\bar{W}}
\def\FD{\mathbb{FD}}
\def\vec{\mathrm{vec}}
\def\SK{\mathbb{SK}}
\def\dP{\mathrm{d}P}
\def\SpO{\mathrm{SpO}}
\def\barK{\bar{K}}
\def\dW{\mathrm{d}W}
\def\PD{\mathrm{PD}}
\def\HPD{\mathrm{HPD}}
\def\ray{\mathrm{ray}}
\def\Exp{\mathrm{Exp}}
\def\CR{\mathrm{CR}}
\def\innerlabel#1#2#3{{\langle #1,#2\rangle_{#3}}}
\def\calX{\mathcal{X}}
\def\SD{\mathbb{SD}}
\def\SH{\mathbb{SH}}
\def\bbF{\mathbb{F}}
\def\arccosh{\mathrm{arccosh}}
\def\arctanh{\mathrm{arctanh}}
\def\PU{\mathbb{PU}}
\def\dY{\mathrm{d}Y}
\def\Im{\mathrm{Im}}
\def\Re{\mathrm{Re}}
\def\dx{\mathrm{d}x}
\def\dy{\mathrm{d}y}
\def\dw{\mathrm{d}w}
\def\dz{\mathrm{d}z}
\def\dwbar{\mathrm{d}\overline{w}}
\def\dzbar{\mathrm{d}\overline{z}}
\def\dbarz{\mathrm{d}\overline{z}}
\def\barz{\overline{z}}
\def\barw{\overline{w}}
\def\bbRP{\mathbb{RP}}
\def\PSL{\mathrm{PSL}}
\def\bbD{\mathbb{D}}
\def\ds{\mathrm{d}s}
\def\dS{\mathrm{d}S}
\def\dZ{\mathrm{d}Z}
\def\bbU{\mathbb{U}}
\def\dZbar{\mathrm{d}\bar{Z}}
\def\st{\ : \ }
\def\PSp{\mathrm{PSp}}
\def\Aut{\mathrm{Aut}}
\def\max{\mathrm{max}}
\def\min{\mathrm{min}}
\def\bbSU{\mathbb{SU}}
\def\bbS{\mathbb{S}}
\def\Moeb{\mathrm{Moeb}}
\def\Isom{\mathrm{Isom}}
\def\SU{\mathrm{SU}}
\def\SO{\mathrm{SO}}
\def\SL{\mathrm{SL}}
\def\Sp{\mathrm{Sp}}
\def\GL{\mathrm{GL}}
\def\Sym{\mathrm{Sym}}
\def\tr{\mathrm{tr}}
\def\bbR{\mathbb{R}}
\def\bbC{\mathbb{C}}
\def\bbF{\mathbb{F}}
\def\bb\SU{\mathbb{SU}}
\def\diag{\mathrm{diag}}
\def\mattwotwo#1#2#3#4{\left[\begin{array}{cc}#1 & #2\cr #3 & #4\end{array}\right]}
\def\tanh{\mathrm{tanh}}
\def\sinh{\mathrm{sinh}}
\def\cosh{\mathrm{cosh}}
\def\smap#1{{\langle #1\rangle}}
\def\bbP{\mathbb{P}}
\def\KL{\mathrm{KL}}
\def\barM{\overline{M}}
\def\barZ{\overline{Z}}
\def\barS{\overline{S}}
\def\barW{\overline{W}}
\def\innerlabel#1#2#3{{\langle #1,#2\rangle_{#3}}}
\def\Log{\mathrm{Log}}
\def\dP{\mathrm{d}P}

\def\bbH{\mathbb{H}}
\def\bbP{\mathbb{P}}
\def\bbSP{\mathbb{SP}}
\def\dt{\mathrm{d}t}
\def\Sym{\mathrm{Sym}}
\def\bbN{\mathbb{N}}
\def\tSigma{\tilde{\Sigma}}
\def\tmu{\tilde{\mu}}
\def\Log{\mathrm{Log}}

\def\st{\ :\ }
\def\Im{\mathrm{Im}}
\def\Re{\mathrm{Re}}
\def\C{\mathbb{C}}
\def\f12{{\frac{1}{2}}}
\def\bP{{\bar P}}
\def\bQ{{\bar Q}}
\def\bbC{\mathbb{C}}
\def\UP{\mathbb{UP}}
\def\Diag{\mathrm{Diag}}

\def\bbR{\mathbb{R}}
\def\bbC{\mathbb{C}}

\def\bbSH{\mathbb{SH}}
\def\bbSD{\mathbb{SD}}

\def\bbP{\mathbb{P}}
\def\bbX{\mathbb{X}}

\def\calN{\mathcal{N}}
\def\calM{\mathcal{M}}
\def\calP{\mathcal{P}}
\def\etal{et al.}

\def\defeq{:=}
\def\eqdef{=:}
\def\tr{\mathrm{tr}}
\def\Sym{\mathrm{Sym}}
\def\ball{\mathrm{ball}}
\def\SPD{\mathrm{SPD}}
\def\FR{\mathrm{FR}}
\def\CO{\mathrm{CO}}

\def\Spectrum{\mathrm{Spectrum}}
\def\Mult{\mathrm{Mult}}
\def\dv{\mathrm{d}v}

\def\eps{\epsilon}
\def\ceil#1{{\lceil #1\rceil}}
\def\dZ{\mathrm{d}Z}
\def\dZbar{\mathrm{d}\bar Z}
\def\ds{\mathrm{d}s}
\def\dx{\mathrm{d}x}

\def\Sp{\mathrm{Sp}}
\def\smap#1{{\langle {#1}\rangle}}

\def\dS{\dot{S}}
\def\bS{\bar{S}}
\def\on{\mathrm{on}}

\def\rank{\mathrm{rank}}
 \newtheorem{Problem}{Problem}
\newtheorem{Corollary}{Corollary}
\newtheorem{Lemma}{Lemma}
\newtheorem{Theorem}{Theorem}

\newtheorem{Definition}{Definition}
\newtheorem{Proposition}{Proposition}
\newtheorem{Property}{Property}
\newenvironment{`roof}{\paragraph{Proof:}}{\hfill$\square$}
 

\sloppy

\section{Introduction}

German mathematician Carl Ludwig Siegel~\cite{Siegel-1943} (1896-1981) and Chinese mathematician Loo-Keng Hua~\cite{Hua-1944} (1910-1985)    have introduced independently the {\em symplectic geometry} in the 1940's (with a preliminary work of Siegel~\cite{siegel1939einfuhrung} released in German in 1939).
The adjective {\em symplectic} stems from the greek, and means ``complex'': That is, mathematically the number field $\bbC$ instead of the ordinary real field $\bbR$.  
Symplectic geometry was originally motivated by the study of {\em complex multivariate functions} in the two landmark papers of Siegel~\cite{Siegel-1943} and Hua~\cite{Hua-1944}.
As we shall see soon, the naming ``symplectic geometry'' for the geometry of complex matrices originally stems from the relationships with the symplectic groups (and their matrix representations). Nowadays, 
symplectic geometry is mainly understood as the study of {\em symplectic manifolds}~\cite{blair2010riemannian} which are even-dimensional differentiable manifolds equipped with a closed and nondegenerate differential $2$-form $\omega$, called the {\em symplectic form}, studied in geometric mechanics.

We refer the reader to the PhD thesis~\cite{Freitas-1999,koufany2006analyse} for an overview of Siegel bounded domains.
More generally, the Siegel-like bounded domains have been studied and classified into $6$ types in the most general setting of {\em bounded symmetric irreducible homogeneous domains}  by Elie Cartan~\cite{Cartan-1935} in 1935 (see also~\cite{koszul1959exposes,berezin1975quantization}).

The Siegel upper space and the Siegel disk domains  provide generalizations of the complex Poincar\'e upper plane and the complex Poincar\'e disk to spaces of symmetric square complex matrices. 
In the remainder, we shall term them the {\em Siegel-Poincar\'e upper plane} and the {\em Siegel-Poincar\'e disk}.
The Siegel upper space includes the well-studied cone of real symmetric positive-definite (SPD) matrices~\cite{forstner2003metric} (SPD manifold). The celebrated affine-invariant SPD Riemannian metric~\cite{harandi2014manifold} can be recovered as a {\em special case} of the Siegel metric.

Applications of the geometry of Siegel upper/disk domains are found in radar processing~\cite{GeometrySPD-2008,Barbaresco-2011,Barbaresco-MIG-2013,barbaresco2013information} specially for dealing with Toepliz matrices~\cite{Jeuris-2016,ToepliztSiegel-2019}, probability density estimations~\cite{KDESiegel-2016} and probability metric distances~\cite{burbea1984informative,DistMVN-1990,CalvoOller-2002,tang2015information}, information fusion~\cite{tang2018information}, neural networks~\cite{RiemannTheta-2020}, theoretical physics~\cite{ohsawa2015siegel,froese2006transfer,ohsawa2017geometry}, and image morphology operators~\cite{SiegelDescriptor-2016}, just to cite a few.

In this paper, we extend the {\em Klein disk model}~\cite{richter2011perspectives} of the hyperbolic geometry to the Siegel disk domain by considering the {\em Hilbert geometry}~\cite{Hilbert-1895}
 induced by the open bounded convex Siegel disk~\cite{papadopoulos2014handbook,HilbertSPD-1994}.
We call the Hilbert metric distance of the Siegel disk the {\em Siegel-Klein distance}.
We term this model the {\em Klein-Siegel model} for short to contrast it with the Poincar\'e-Siegel upper plane model and 
the Poincar\'e-Siegel disk model.
The main advantages of using the Klein-Siegel disk model instead of the usual Siegel-Poincar\'e upper plane or the Siegel-Poincar\'e disk  
 are that the geodesics are unique and always {\em straight} by construction.
Thus this Siegel-Klein disk model
is very well-suited for designing efficient algorithms and data-structures by borrowing techniques of Euclidean computational geometry~\cite{boissonnat1998algorithmic}.
Moreover, in the Siegel-Klein disk model, we have an {\em efficient and robust method} to {\em approximate with guarantees} the calculation of the Siegel-Klein distance: This is specially useful when handling high-dimensional square complex matrices.
The algorithmic advantage of the Hilbert geometry was already observed for real hyperbolic geometry (included as a special case of the Siegel-Klein model):
For example, the hyperbolic Voronoi diagrams can be efficiently computed as an affine power diagram clipped to the boundary circle~\cite{HVD-2010,CKclassification-2016,nielsen2014visualizing,nielsen2012hyperbolic}.
To demonstrate the advantage of the Siegel-Klein disk model (Hilbert distance) over the Siegel-Poincar\'e disk model (Kobayashi distance), we consider approximating the Smallest Encloding Ball (SEB) of the a set of square complex matrices in the Siegel disk domain.
This problem finds potential applications in {\em image morphology}~\cite{angulo2014morphological,SiegelDescriptor-2016} or {\em anomaly detection} of covariance matrices~\cite{tavallaee2008novel,cont2010information}.
Let us state the problem as follows:

\begin{Problem}[Smallest Enclosing Ball (SEB)]
Given a metric space $(X,\rho)$ and a finite set $\{p_1,\ldots, p_n\}$ of $n$ points in $X$, find the smallest-radius enclosing ball with 
 circumcenter $c^*$ minimizing the following objective function:
\begin{equation}
\min_{c\in X}\max_{i\in \{1,\ldots,n\}}\ \rho(c,p_i).
\end{equation}
\end{Problem}

In general, the SEBs may not be unique in a metric space: 
For example, the SEBs are not unique in a  {\em discrete Hamming metric space}~\cite{mazumdar2013chebyshev} making it notably NP-hard to calculate.
We note in passing that the set-complement of a Hamming ball is a Hamming ball in a Hamming metric space.
However, the SEB is proven unique in the Euclidean geometry~\cite{welzl1991smallest}, the hyperbolic geometry~\cite{hyperbolicSEB-2015}, the Riemannian positive-definite matrix manifold~\cite{lang2012math,MIG-2013}, and more generally in any {\em Cartan-Hadamard manifold}~\cite{arnaudon2013approximating} (Riemannian manifold that is complete and simply connected with non-positive sectional curvatures).
The SEB is guaranteed to be unique in any {\em Bruhat-Tits space}~\cite{lang2012math} (i.e., complete metric space with a semi-parallelogram law) which includes the Riemannian SPD manifold.

A fast $(1+\epsilon)$-approximation algorithm which requires  $\left\lceil{\frac{1}{\epsilon^2}}\right\rceil$ iterations was reported in~\cite{badoiu2003smaller,arnaudon2013approximating} to approximate the SEB in the Euclidean space: That is a {\em covering ball} of radius $(1+\epsilon)r^*$ where $r^*=\max_{i\in \{1,\ldots,n\}}\ \rho(c^*,p_i)$ for $c^*=\arg\min_{c\in X}\max_{i\in \{1,\ldots,n\}}\ \rho(c,p_i)$.
Since the approximation factor does {\em not} depend on the dimension, this SEB approximation algorithm found many applications in machine learning~\cite{tsang2006generalized} (e.g., in 
Reproducing Kernel Hilbert Spaces~\cite{RKHS-2016}, RKHS).

\subsection{Paper outline and contributions}

In Section~\ref{sec:hyperbolicgeo}, we concisely recall the usual models of the {\em hyperbolic complex plane}: 
The {\em Poincar\'e upper plane} model,
and the {\em Poincar\'e disk model}, and the {\em Klein disk model}.
We then briefly review the geometry of the Siegel upper plane domain in~\S\ref{sec:SiegelUpper} and the Siegel disk domain in~\S\ref{sec:SiegelDisk}.
Section~\ref{sec:SiegelKlein} introduces the novel {\em Siegel-Klein model}  using the Hilbert geometry and its Siegel-Klein distance.
To demonstrate the algorithmic advantage of using the Siegel-Klein disk model over the Siegel-Poincar\'e disk model in practice, 
we compare in~\ref{sec:SEB} the two implementations of the Badoiu and Clarkson's SEB approximation algorithm~\cite{badoiu2003smaller} in these models.
Finally, we conclude this work in \S\ref{sec:concl}.
In the Appendix, we first list the notations used in this work, recall the deflation method for calculating numerically the eigenvalues of a  Hermitian matrix (\S\ref{sec:deflation}), and provide some basic snippet code for calculating the Siegel distance (\S\ref{sec:snippet}).

Our main contributions are summarized as follows:

\begin{itemize}

\item  First, we formulate a generalization of the Klein disk model of hyperbolic geometry to the Siegel disk domain in Definition~\ref{def:SiegelKlein} using the framework of Hibert geometry. 
 We report the formula of the Siegel-Klein distance to the origin in~Theorem~\ref{thm:SKdistorigin} (and more generally a closed-form expression for the Siegel-Klein distance between two points whose supporting line passes through the origin), describe how to convert the Siegel-Poincar\'e disk to the Siegel-Klein disk and vice versa in Proposition~\ref{prop:convPK}, report an exact algorithm to calculate the Siegel-Klein distance for diagonal matrices in Theorem~\ref{thm:SKdiagformula}.
In practice, we show how to obtain a {\em fast guaranteed approximation of the Siegel-Klein distance} using geodesic bisection searches with guaranteed lower and upper bounds (Theorem~\ref{prop:LBboundSK} whose proof is obtained by considering nested Hilbert geometries).

\item Second, we report the exact solution to a geodesic cut problem in the Siegel-Poincar\'e/Siegel-Klein disks in Proposition~\ref{prop:SPgeodesicOrigin}.
This result yields an explicit equation for the geodesic linking the origin of the Siegel disk domain to any other matrix point of the Siegel disk domain (Proposition~\ref{prop:SPgeodesicOrigin} and Proposition~\ref{prop:SKgeodesicOrigin}).
We then report an implementation of the Badoiu and Clarkson's iterative algorithm~\cite{badoiu2003smaller} for approximating the smallest enclosing ball tailored to the Siegel-Poincar\'e and Siegel-Klein disk domains. 
In particular, we show in~\S\ref{sec:SEB} that the implementation in the Siegel-Klein model yields a fast algorithm which bypasses the  costly operations of recentering to the origin required in the Siegel-Poincar\'e disk model.
\end{itemize}

Let us now introduce a few notations on matrices and their norms.

\subsection{Matrix spaces and matrix norms}

Let $\bbF$ be a {\em number field} considered in the remainder to be either  the {\em real number field} $\bbR$ or the {\em complex number field} $\bbC$.
For a complex number $z=a+ib\in\bbC$ (with imaginary number $i^2=-1$), we denote by $\barz=a-ib$ its {\em complex conjugate}, and  by $|z|=\sqrt{z\barz}=\sqrt{a^2+b^2}$ its {\em modulus}.
Let $\Re(z)=a$ and $\Im(z)=b$ denote the {\em real part} and the {\em imaginary part} of the complex number $z=a+ib$, respectively.

Let $M(d,\bbF)$ be the space of $d\times d$ square matrices with coefficients in $\bbF$,  and let $\GL(d,\bbF)$ denote its subspace of invertible matrices.  Let $\Sym(d,\bbF)$ denote the vector space of $d\times d$ symmetric matrices with coefficients in $\bbF$.
The identity matrix is denoted by $I$ (or $I_d$ when we want to emphasize its $d\times d$ dimension). 
The conjugate of a matrix $M=[M_{i,j}]_{i,j}$ is the matrix of complex conjugates: $\barM:=[\barM_{i,j}]_{i,j}$.
The {\em conjugate transpose} of a matrix $M$ is $M^H=(\bar{M})^\top=\overline{M^\top}$, the adjoint matrix.
Conjugate transposition is also denoted by the star operator (i.e., $M^*$) or the dagger symbol (i.e., $M^\dagger$) in the literature.
A complex matrix is said {\em Hermitian} when $M^H=M$ (hence $M$ has real diagonal elements). For any $M\in M(d,\bbC)$, 
Matrix $MM^H$ is Hermitian: $(MM^H)^H=(M^H)^H(M)^H=MM^H$.

A real matrix $M\in M(d,\bbR)$ is said {\em symmetric positive-definite} (SPD) if and only if $x^\top Mx>0$ for all $x\in\bbR^d$ with $x\not=0$.
This positive-definiteness property is written $M\succ 0$, where $\succ$ denotes the partial {\em L\"owner ordering}~\cite{nielsen2017fast}.
Let $\PD(d,\bbR)=\{ P\succ 0 \ :\ P\in\Sym(d,\bbR)\}$ be the space of real symmetric positive-definite matrices~\cite{forstner2003metric,Moakher-2005,lang2012math,niculescu2018convex} of dimension $d\times d$. 
This space is not a vector space but a {\em cone}, i.e., if $P_1,P_2\in\PD(d,\bbR)$ then $P_1+\lambda P_2\in\PD(d,\bbR)$ for all $\lambda>0$.
The boundary of the cone consists of rank-deficient symmetric positive semi-definite matrices.

The (complex/real) {\em eigenvalues} of a square complex matrix $M$ are ordered such that $|\lambda_1(M)| \geq \ldots \geq |\lambda_d(M)|$, where $|\cdot|$ denotes the complex modulus. The {\em spectrum} $\lambda(M)$ of a matrix $M$ is its set of eigenvalues: 
$\lambda(M)=\{\lambda_1(M),\ldots,\lambda_d(M) \}$.
In general, real matrices may have complex eigenvalues but symmetric matrices (including SPD matrices) have always real eigenvalues.
The {\em singular values} $\sigma_i(M)$ of $M$ are always real: 
\begin{equation}
\sigma_i(M)=\sqrt{\lambda_i(M\barM)}=\sqrt{\lambda_i(\barM M)},
\end{equation}
 and ordered as follows: $\sigma_1(M)\geq \ldots\geq \sigma_d(M)$ with $\sigma_\max(M)=\sigma_1(M)$ and $\sigma_\min(M)=\sigma_d(M)$.
We have $\sigma_{d-i+1}(M^{-1})=\frac{1}{\sigma_i(M)}$, and in particular $\sigma_d(M^{-1})=\frac{1}{\sigma_1(M)}$.

Any {\em matrix norm} $\|\cdot\|$  (including the operator norm) satisfies:
\begin{itemize}
\item $\|M\|\geq 0$ with equality if and only if $M=0$ (where $0$ denotes the matrix with all its entries equal to zero),
\item $\|\alpha M\|=|\alpha| \|M\|$,  
\item $\|M_1+M_2\|\leq \|M_1\|+\|M_2\|$, and
\item $\|M_1 M_2\|\leq \|M_1\|\ \|M_2\|$.
\end{itemize}
Let us define two usual matrix norms: The Fr\"obenius norm and the operator norm.
The {\em Fr\"obenius norm} of $M$ is: 
\begin{eqnarray}
\|M\|_F &:=& \sqrt{\sum_{i,j} |M_{i,j}|^2},\\
&=& \sqrt{\tr(M M^H)} = \sqrt{\tr(M^H M)}.
\end{eqnarray}
The induced Fr\"obenius distance between two complex matrices $C_1$ and $C_2$ is
$\rho_E(C_1,C_2)=\|C_1-C_2\|_F$.

The {\em operator norm} or {\em spectral norm} of a matrix $M$ is:
\begin{eqnarray}
\|M\|_O &=& \max_{x\not=0} \frac{\|Mx\|_2}{\|x\|_2},\\
&=& \sqrt{\lambda_{\max}(M^H M)},\\
&=& \sigma_\max(M).
\end{eqnarray}
Notice that $M^H M$ is a Hermitian positive semi-definite matrix.
The operator norm coincides with the {\em spectral radius} $\rho(M)=\max_i \{|\lambda_i(M)|\}$ of the matrix $M$ and 
is upper bounded by the Fr\"obenius norm: $\|M\|_O\leq \|M\|_F$, and we have $\|M\|_O\geq \max_{i,j} |M_{i,j}|$.
When the dimension $d=1$, the operator norm of $[M]$ coincides with the complex modulus: $\|M\|_O=|M|$.

To calculate the largest singular value $\sigma_\max$, we may use a the (normalized) {\em power method}~\cite{lanczos1950iteration,cullum2002lanczos} which has quadratic convergence for Hermitian matrices (see Appendix~\ref{sec:deflation}). We can also use the more costly {\em Singular Value Decomposition} (SVD) of $M$ which requires cubic time: $M=UDV^H$ where $D=\Diag(\sigma_1,\ldots,\sigma_d)$ is the diagonal matrix with coefficients being the singular values of $M$.

\section{Hyperbolic geometry in the complex plane: The Poincar\'e upper plane and disk models and the Klein disk model}\label{sec:hyperbolicgeo}

We concisely review the three usual models of the hyperbolic plane~\cite{cannon1997hyperbolic,goldman1999complex}:
 Poincar\'e upper plane model in \S\ref{sec:PoincareUpper}, the Poincar\'e disk model in~\S\ref{sec:PoincareDisk}, and the Klein disk model in~\S\ref{sec:KleinDisk}. We then report distance expressions in these models and conversions between these three usual models in~\S\ref{sec:conversions}.
Finally in~\S\ref{sec:ighyp}, we recall the important role of hyperbolic geometry in the Fisher-Rao geometry in 
information geometry~\cite{IG-2016,nielsen2018elementary}.

\subsection{Poincar\'e complex upper plane}\label{sec:PoincareUpper}

The Poincar\'e upper plane domain is defined by
\begin{equation}
\bbH = \left\{ z=a+ib \st z\in\bbC, b=\Im(z)>0 \right\}.
\end{equation}

The Hermitian metric tensor is:
\begin{equation}
\ds^2_U=\frac{\dz\dzbar}{\Im(z)^2},
\end{equation}
or equivalently the Riemannian line element is:
\begin{equation}
\ds^2_U=\frac{\dx^2+\dy^2}{y^2},
\end{equation}

Geodesics between $z_1$ and $z_2$ are either arcs of semi-circles whose centers are located on the  real axis and orthogonal to the real axis, or vertical line segments when $\Re(z_1)=\Re(z_2)$.

The geodesic length distance is
\begin{equation}\label{eq:PoincareDistLogGeneric}
\rho_{U}(z_1,z_2) := \log \left(\frac{|z_1-\barz_2|+|z_1-z_2| }{|z_1-\barz_2|-|z_1-z_2|}\right),
\end{equation}
or equivalently
\begin{equation}
\rho_U(z_1,z_2) = \arccosh\left( \sqrt{ \frac{|z_1-\barz_2|^2}{\Im(z_1)\Im(z_2)}} \right),
\end{equation}
where
\begin{equation}
\mathrm{arccosh}(x)=\log \left(x+\sqrt{x^{2}-1}\right), \quad x\geq 1.
\end{equation}

Equivalent formula can be obtained by using the following identity:
\begin{equation}
\log(x)=\operatorname{arcosh}\left(\frac{x^{2}+1}{2 x}\right)=\operatorname{artanh}\left(\frac{x^{2}-1}{x^{2}+1}\right),
\end{equation}
where
\begin{equation}
\operatorname{artanh}(x)=\frac{1}{2}\log \left(\frac{1+x}{1-x}\right),\quad x<1.
\end{equation}

By interpreting a complex number $z=x+iy$ as a 2D point with Cartesian coordinates $(x,y)$, the metric can be rewritten as
\begin{equation}\label{eq:upconformal}
\ds^2_U=\frac{\dx^2+\dy^2}{y^2}= \frac{1}{y^2}\ds^2_E,
\end{equation}
where $\ds^2_E=\dx^2+\dy^2$ is the Euclidean (flat) metric.
That is, the Poincar\'e upper plane metric $\ds_U$ can be rewritten as a {\em conformal factor} $\frac{1}{y}$ times the Euclidean metric $\ds_E$.
Thus the metric of Eq.~\ref{eq:upconformal} shows that the Poincar\'e upper plane model is a {\em conformal model} of hyperbolic geometry: 
That is, the Euclidean angle measurements in the $(x,y)$ chart coincide with the underlying hyperbolic angles. 

The group of orientation-preserving isometries (i.e., without reflections)  is the {\em real projective special group}
 $\PSL(2,\bbR)=\SL(2,\bbR)/\{\pm I\}$ (quotient group), where $\SL(2,\bbR)$ denotes the {\em special linear group} of matrices with unit determinant:
\begin{equation}
\Isom^+(\bbH)\cong\PSL(2,\bbR).
\end{equation}

The left group action is a fractional linear transformation (also called a M\"obius transformation): 
\begin{equation}
g.z=\frac{a z+b}{c z+d},\quad g=\mattwotwo{a}{b}{c}{d},\quad ad-bc\not=0.
\end{equation}
The condition $ab-cd\not=0$ is to ensure that the  M\"obius transformation is not constant.
The set of M\"obius transformations form a group $\Moeb(\bbR,2)$.
The elements of the M\"obius group can be represented by corresponding $2\times 2$ matrices of $\PSL(2,\bbR)$: 
\begin{equation}
\left\{\mattwotwo{a}{b}{c}{d},\quad ad-bc\not =0\right\}.
\end{equation}
The neutral element $e$ is encoded by the identity matrix $I$.

The fractional linear transformations 
\begin{equation}
w(z) = \frac{a z+b}{c z+d},\quad   a,b,c,d\in\bbR,ad-bc\not=0
\end{equation}
are the analytic mappings $\bbC\cup\{\infty\}\rightarrow \bbC\cup\{\infty\}$ of the Poincar\'e upper plane onto itself.

The group action is {\em transitive} (i.e., $\forall z_1,z_2\in\bbH, \exists g$ such that $g.z_1=z_2$) and {\em faithful} (i.e., if $g.z=z\forall z$ then $g=e$). The {\em stabilizer} of $i$  is the rotation group:
\begin{equation}
\mathrm{SO}(2)=\left\{\left[\begin{array}{cc}
\cos \theta & \sin \theta \\
-\sin \theta & \cos \theta
\end{array}\right] \ :\  \theta \in \mathbb{R}\right\}.
\end{equation}

The unit speed geodesic anchored at $i$ and going upward (i.e., geodesic with initial condition) is:
\begin{equation}
\gamma(t)=\left[\begin{array}{cc}
e^{t / 2} & 0 \\
0 & e^{-t / 2}
\end{array}\right] \times i= i e^{t}.
\end{equation}

Since the other geodesics can be obtained by the action of $\PSL(2,\bbR)$, it follows that the geodesics in $\bbH$ are parameterized by:
\begin{equation}
\gamma(t)=\frac{a i e^{t}+b}{c i e^{t}+d}.
\end{equation}

\subsection{Poincar\'e disk}\label{sec:PoincareDisk}

The Poincar\'e unit disk is
\begin{equation}
\bbD= \left\{ \barw w<1 \st w\in\bbC \right\}.
\end{equation}

The Riemannian Poincar\'e line element (also called Poincar\'e-Bergman line element) is
\begin{equation}\label{eq:PoincareDiskMetric}
\ds^2_D=\frac{4\dw\dwbar}{(1-|w|^2)^2}.
\end{equation}

Since $\ds^2_D=\left(\frac{2}{1-\|x\|^2}\right)^2\ds_E^2$, we deduce that the metric is conformal: The Poincar\'e disk is a conformal model of hyperbolic geometry.
The geodesic between points $w_1$ and $w_2$ are  either arcs of circles intersecting orthogonally the disk boundary $\partial\bbD$, or straight lines passing through the origin $0$ of the disk and clipped to the disk domain.

The geodesic distance in the Poincar\'e disk is
\begin{eqnarray}
\rho_D(w_1,w_2) &=& \arccosh\left( \sqrt{\frac{|w_1\barw_2-1|^2}{(1-|w_1|^2)(1-|w_2|^2)}}\right),\\
 &=& 2\ \arctanh\left| \frac{w_2-w_1}{1-\barw_1w_2}  \right|.
\end{eqnarray}

The group of orientation preserving isometry is the {\em complex projective special group} $\PSL(2,\bbC)=\SL(2,\bbC)/\{\pm I\}$ where 
$\SL(2,\bbC)$ denotes the special group of $2\times 2$ complex matrices with unit determinant.

In the Poincar\'e disk model, the transformation
\begin{equation}\label{eq:transrot}
T_{z_0,\theta}(z) = e^{i\theta} \frac{z-z_0}{1-\barz_0z}
\end{equation}
 corresponds to a {\em hyperbolic motion} (a M\"obius transformation~\cite{ratcliffe1994foundations}) which moves point $z_0$ to the origin $0$, and then makes a rotation of angle $\theta$.
The group of such transformations is the {\em automorphism group} of the disk, $\Aut(\bbD)$,
 and the transformation $T_{z_0,\theta}$ is called a biholomorphic automorphism (i.e., a one-to-one conformal mapping of the disk onto itself).

The Poincar\'e distance is invariant under automorphisms of the disk, and more generally the Poincar\'e distance decreases under holomorphic mappings (Schwarz–Pick theorem): That is, the Poincar\'e distance is contractible under holomorphic mappings $f$:
 $\rho_D(f(w_1),f(w_2))\leq \rho_D(w_1,w_2)$.

\subsubsection{Klein disk}\label{sec:KleinDisk}

The Klein disk model~\cite{cannon1997hyperbolic,richter2011perspectives} (also called the Klein-Beltrami model) is defined on the unit disk domain as the Poincar\'e disk model.
The Klein metric is
\begin{equation}
\ds_K^2 = \left(\frac{\mathrm{d} s_{E}^{2}}{1-\|x\|_{E}^{2}}+\frac{\langle x, \mathrm{d} x\rangle_{\mathrm{E}}}{\left(1-\|x\|_{E}^{2}\right)^{2}}\right).
\end{equation}
It is {\em not} a conformal metric (except at the disk origin), and therefore the Euclidean angles in the $(x,y)$ chart do not correspond to the  underlying hyperbolic angles.

The Klein distance between two points $k_1=(x_1,y_1)$ and $k_2=(x_2,y_2)$ is
\begin{equation}\label{eq:DistKlein1D}
\rho_K(k_1,k_2) = \mathrm{arccosh}\left(
\frac{1-(x_1x_2+y_1y_2)}{\sqrt{(1-\|k_1\|^2)(1-\|k_2\|^2)}}
\right).
\end{equation}
An equivalent formula shall be reported later in page~\pageref{proofKlein1d} in a more setting of Theorem~\ref{thm:SKdiagformula}.

The advantage of the Klein disk over the Poincar\'e disk is that geodesics are {\em straight Euclidean lines clipped to the unit disk domain}.
Therefore this model is well-suited to implement computational geometric algorithms and data structures, see for example~\cite{HVD-2010,jin2018conformal}.
The group of isometries in the Klein model are projective maps $\bbRP^2$ preserving the disk. 
We shall see that the Klein disk model corresponds to the Hilbert geometry of the unit disk.

\subsection{Poincar\'e and Klein distances to the disk origin and conversions}\label{sec:conversions}

In the Poincar\'e disk, the distance of a point $w$ to the origin $0$ is
\begin{equation}\label{eq:PoincareDiskOrigin}
\rho_D(0,w) = \log \left(\frac{1+|w|}{1-|w|}\right).
\end{equation}

Since the Poincar\'e disk model is conformal (and M\"obius transformations are conformal maps), Eq.~\ref{eq:PoincareDiskOrigin} shows that 
Poincar\'e disks have Euclidean disk shapes (however with displaced centers).

In the Klein disk, the distance of a point $k$ to the origin is
\begin{equation}\label{eq:kleindistorigin}
\rho_K(0,k) = \frac{1}{2}\log \left(\frac{1+|k|}{1-|k|}\right)=\frac{1}{2}\rho_D(0,k).
\end{equation}
Observe the multiplicative factor of $\frac{1}{2}$ in Eq.~\ref{eq:kleindistorigin}.

Thus we can easily convert a point $w\in\bbC$ in the Poincar\'e disk to a point $k\in\bbC$ in the Klein disk, and vice-versa as follows:

\begin{eqnarray}
w &=&\frac{1}{1+\sqrt{1-|k|^2}}\ k,\\
k &=&\frac{2}{1+|w|^2}\ w.
\end{eqnarray}

Let $C_{K\rightarrow D}(k)$ and $C_{D\rightarrow K}(w)$ denote these conversion functions with
\begin{eqnarray}
C_{K\rightarrow D}(k) &=& \frac{1}{1+\sqrt{1-|k|^2}}\ k,\\
C_{D\rightarrow K}(w) &=& \frac{2}{1+|w|^2}\ w.
\end{eqnarray}

We can write $C_{K\rightarrow D}(k)=\alpha(k) k$ and $C_{D\rightarrow K}(w)=\beta(w) w$, 
so that $\alpha(k)>1$ is an {\em expansion factor}, and $\beta(w)<1$ is a {\em contraction factor}.

The conversion functions are M\"obius transformations represented by the following matrices:
\begin{eqnarray}
M_{K\rightarrow D}(k) &=& \mattwotwo{\alpha(k)}{0}{0}{1},\\
M_{D\rightarrow K}(w) &=& \mattwotwo{\beta(w)}{0}{0}{1}.
\end{eqnarray}

For sanity check, let $w=r+0i$ be a point in the Poincar\'e disk with equivalent point $k=\frac{2}{1+r^2}r+0i$ in the Klein disk.
Then we have:
\begin{eqnarray}
\rho_K(0,k) &=& \frac{1}{2}\log \left(\frac{1+|k|}{1-|k|}\right),\\
&=&\frac{1}{2}\log \left(\frac{1+\frac{2}{1+r^2}r}{1-\frac{2}{1+r^2}r} \right),\\
&=& \frac{1}{2}\log \left(\frac{1+r^2+2r}{1+r^2-2r}\right),\\
&=& \frac{1}{2}\log \left(\frac{(1+r)^2}{(1-r)^2}\right),\\
&=& \log \left(\frac{1+r}{1-r}\right) = \rho_D(0,w).
\end{eqnarray}

We can convert a point $z$ in the Poincar\'e upper plane to a corresponding point $w$ in the Poincar\'e disk, or vice versa, using the following M\"obius transformations:

\begin{eqnarray}
w&=&\frac{z-i}{z+i},\\
z&=&i\frac{1+w}{1-w}.
\end{eqnarray}

Notice that we compose M\"obius transformations by multiplying their matrix representations.

\subsection{Hyperbolic Fisher-Rao geometry of location-scale families}\label{sec:ighyp}

Consider a parametric  family $\mathcal{P}=\{p_\theta(x)\}_{\theta\in\Theta}$  of probability densities dominated by a positive measure $\mu$ (usually, the Lebesgue measure or the counting measure) defined on a measurable space $(\calX,\Sigma)$, where $\calX$ denotes the support of the densities and $\Sigma$ is a finite $\sigma$-algebra~\cite{IG-2016}.
Hotelling~\cite{Hotelling-1930} and Rao~\cite{Rao-1945} independently considered the Riemannian geometry of $\mathcal{P}$ by using the Fisher Information Matrix (FIM) to define the Riemannian metric tensor~\cite{CRLB-2013} expressed in the (local) coordinates $\theta\in\Theta$, where $\Theta$ denotes the parameter space.
The FIM is defined by the following symmetric positive semi-definite matrix~\cite{IG-2016,RFIM-2017}:
\begin{equation}
I(\theta)=E_{p_\theta}\left[\nabla_\theta \log p_\theta(x) \left(\nabla_\theta \log p_\theta(x)\right)^\top\right].
\end{equation}
When $\cal{P}$ is regular~\cite{IG-2016}, the FIM is guaranteed to be positive-definite, and can thus play the role of a metric tensor field: The so-called {\em Fisher metric}.

Consider the location-scale family induced by a density $f(x)$ symmetric with respect to $0$ such that $\int_\calX f(x)\mathrm{d}\mu(x)=1$, 
$\int_\calX xf(x)\mathrm{d}\mu(x)=0$ and $\int_\calX x^2f(x)\mathrm{d}\mu(x)=1$ (with $\calX=\bbR$):
\begin{equation}
\mathcal{P}=\left\{p_\theta(x)=\frac{1}{\theta_2}f\left(\frac{x-\theta_1}{\theta_2}\right),\quad \theta=(\theta_1,\theta_2)\in \bbR\times\bbR_{++}\right\}.
\end{equation}
The density $f(x)$ is called the standard density, and corresponds to the parameter $(0,1)$: $p_{(0,1)}(x)=f(x)$.
The parameter space $\Theta=\bbR\times\bbR_{++}$ is the upper plane, and the FIM can be structurally calculated~\cite{komaki2007bayesian}
 as the following diagonal matrix:
\begin{equation}
I(\theta)=\left[\begin{array}{ll} a^2 & 0\cr 0 & b^2\end{array}\right]
\end{equation}
with
\begin{eqnarray}
a^{2}:=\int\left(\frac{f^{\prime}(x)}{f(x)}\right)^{2} f(x) \mathrm{d}\mu(x),\\
b^{2}:=\int\left(x \frac{f^{\prime}(x)}{f(x)}+1\right)^{2} f(x) \mathrm{d}\mu(x).
\end{eqnarray}

By rescaling $\theta=(\theta_1,\theta_2)$ as $\theta'=(\theta_1',\theta_2')$ with  $\theta_1'=\frac{a}{b\sqrt{2}}\theta_1$ and $\theta_2'=\theta_2$, we get the FIM with respect to $\theta'$ expressed as:
\begin{equation}
I(\theta)=\frac{b^2}{\theta_2^2} \left[\begin{array}{ll} 1 & 0\cr 0 & 1\end{array}\right],
\end{equation}
a constant time the Poincar\'e metric in the upper plane.
Thus the Fisher-Rao manifold of  a location-scale family (with symmetric standard density $f$) is isometric to the planar hyperbolic space of negative curvature $\kappa=-\frac{1}{b^2}$.

\section{The Siegel upper space and the Siegel distance}\label{sec:SiegelUpper}

The {\em Siegel upper space}~\cite{siegel1939einfuhrung,Siegel-1943,namikawa1980siegel,Freitas-1999} $\SH(d)$ is defined as the space of symmetric complex square matrices of size $d\times d$ which have positive-definite imaginary part:

\begin{equation}
\SH(d) := \left\{ Z = X + iY \st X\in\Sym(d,\bbR), Y\in\PD(d,\bbR)\right\}.
\end{equation}

The space $\SH(d)$ is a tube domain of dimension $d(d+1)$ since 
\begin{equation}
\dim(\SH(d))=\dim(\Sym(d,\bbR))+\dim(\PD(d,\bbR)),
\end{equation}
 with 
$\dim(\Sym(d,\bbR))=\frac{d(d+1)}{2}$ and $\dim(\PD(d,\bbR))=\frac{d(d+1)}{2}$.
We can extract the components $X$ and $Y$ from $Z$ as $X=\frac{1}{2}(Z+\bar Z)$ and $Y=\frac{1}{2i}(Z-\bar Z)=-\frac{i}{2}(Z-\bar Z)$.
The matrix pair $(X,Y)$ belongs to the Cartesian product of a matrix vector space with the symmetric positive-definite (SPD) matrix cone: $(X,Y)\in\Sym(d,\bbR)\times \PD(d,\bbR)$.
When $d=1$, the Siegel upper space coincides with the Poincar\'e upper plane: $\SH(1)=\bbH$.
The geometry of the Siegel upper space was studied independently by Siegel~\cite{Siegel-1943}  and  Hua~\cite{Hua-1944} from different viewpoints in the late 1930's--1940's.
Historically, these classes of complex matrices $Z\in\SH(d)$ were first studied by Riemann~\cite{Riemanntheorie-1857}, 
and later eponymously  called
 {\em Riemann matrices}. 
Riemann matrices are used to define Riemann theta functions~\cite{RiemannTheta-1865,ThetaSage-2016,ThetaJulia-2019,agostini2019discrete}.

The {\em Siegel distance} in the upper plane is induced by the following line element:
\begin{equation}
\ds_U^2 (Z) = 2 \tr\left(Y^{-1}\dZ\ Y^{-1}\dZbar\right).
\end{equation}

The formula for the Siegel upper distance between $Z_1$ and $Z_2\in\SH(d)$ was calculated in Siegel's masterpiece paper~\cite{Siegel-1943} as follows:
\begin{equation}\label{eq:SiegelDistance}
\rho_U(Z_1,Z_2) = \sqrt{\sum_{i=1}^d \log^2\left(\frac{1+\sqrt{r_i}}{1-\sqrt{r_i}}\right)},
\end{equation}
where
\begin{equation}
r_i=\lambda_i\left(R(Z_1,Z_2)\right),
\end{equation}
with $R(Z_1,Z_2)$ denoting  the matrix generalization~\cite{MatrixCrossRationRiccati-1993} of the {\em cross-ratio}:
\begin{equation}\label{eq:matrixcr}
R(Z_1,Z_2)  := (Z_1-Z_2)(Z_1-\barZ_2)^{-1} (\barZ_1 -\barZ_2) (\barZ_1 -Z_2)^{-1},
\end{equation}
and $\lambda_i(M)$ denotes the $i$-th largest (real) eigenvalue of (complex) matrix $M$.
The letter notation 'R' in  $R(Z_1,Z_2)$ is a mnemonic which stands for 'r'atio.

The Siegel distance can also be expressed without explicitly using the eigenvalues as:
\begin{equation}\label{eq:SiegelDistancePowSeries}
\rho_U(Z_1,Z_2) = 2\sqrt{\tr\left(R_{12} \left( \sum_{i=0}^\infty \frac{R_{12}^i}{2i+1} \right)^2\right) },
\end{equation}
where $R_{12}=R(Z_1,Z_2)$.
In particular, we can {\em truncate} the matrix power series of Eq.~\ref{eq:SiegelDistancePowSeries} to get an approximation of the Siegel distance:
\begin{equation}\label{eq:SiegelDistanceTruncPowSeries}
\tilde\rho_{U,l}(Z_1,Z_2) = 2\sqrt{\tr\left(R_{12} \left( \sum_{i=0}^l \frac{R_{12}^i}{2i+1} \right)^2\right) }.
\end{equation}

It costs $O(\Spectrum(d))=O(d^3)$ to calculate the Siegel distance using Eq.~\ref{eq:SiegelDistance} and
 $O(l\Mult(d))=O(l d^{2.3737})$ to approximate it using the truncated series formula of Eq.~\ref{eq:SiegelDistanceTruncPowSeries}, where $\Spectrum(d)$ denotes the cost of performing the spectral decomposition of a $d\times d$ complex matrix, and $\Mult(d)$ denotes the cost of multiplying two $d\times d$ square complex matrices. 
For example, choosing the Coppersmith-Winograd algorithm for $d\times d$ matrix multiplications, we have $\Mult(d)=O(d^{2.3737})$.
Although Siegel distance formula of Eq.~\ref{eq:SiegelDistanceTruncPowSeries} is attractive, the number of iterations $l$ to get an $\epsilon$-approximation of the Siegel distance depends on the dimension $d$. 
In practice, we can define a threshold $\delta>0$, 
and as a rule of thumb iterate on the truncated sum until $\left|\tr\left(\frac{R_{12}^i}{2i+1}\right)\right|<\delta$.

A {\em spectral function}~\cite{niculescu2018convex} of a matrix $M$ is a function $F$ which  is the composition of a
 symmetric function $f$ with the eigenvalue map $\Lambda$: $F(M):=(f\circ\Lambda)(M)=f(\Lambda(M))$.
For example, the Kullback-Leibler divergence between two zero-centered Gaussian distributions is a spectral function distance 
since we have:
\begin{eqnarray}
D_{\KL}(p_{\Sigma_1},p_{\Sigma_2})&=& \int p_{\Sigma_1}(x)\log\frac{p_{\Sigma_1}(x)}{p_{\Sigma_2}(x)} \mathrm{d}x,\\
&=& \frac{1}{2}\left( \log \frac{|\Sigma_2|}{|\Sigma_1|}+\tr(\Sigma_2^{-1}\Sigma_1)-d \right),\\
&=& \frac{1}{2} \left(  \sum_{i=1}^d \left( \log \frac{\lambda_i(\Sigma_2)}{\lambda_i(\Sigma_1)}+\lambda_i(\Sigma_2^{-1}\Sigma_1) \right) -d\right),\\
&=&\frac{1}{2} \sum_{i=1}^d \left(\lambda_i(\Sigma_2^{-1}\Sigma_1)  -\log\lambda_i(\Sigma_2^{-1}\Sigma_1)-1\right),\\
&=& (f_\KL\circ\Lambda)(\Sigma_2^{-1}\Sigma_1),
\end{eqnarray}
where  $|\Sigma|$ and $\lambda_i(\Sigma)$ denotes respectively the determinant of a positive-definite matrix $\Sigma\succ 0$, 
and the $i$-the
  real largest eigenvalue of $\Sigma$, and 
\begin{equation}
p_\Sigma(x)=\frac{1}{\sqrt{(2\pi)^d |\Sigma|}}\exp\left(-\frac{1}{2} x^\top\Sigma^{-1} x\right)
\end{equation}
is the density of the multivariate zero-centered Gaussian of covariance matrix $\Sigma$,  
\begin{equation}
f_\KL(u_1,\ldots,u_d)=\frac{1}{2}\left(\sum_{i=1}^d (u_i-1-\log u_i)\right),
\end{equation}
is a
 symmetric function  invariant under parameter permutations,
and $\Lambda(\cdot)$ denotes the eigenvalue map.

This Siegel distance in the upper plane is also a {\em smooth} spectral distance function since we have
\begin{equation}
\rho_U(Z_1,Z_2)=f\circ \Lambda(R(Z_1,Z_2)),
\end{equation}
where  $f$ is the following
 symmetric function: 
\begin{equation}
f(x_1,\ldots, x_d)=\sqrt{\sum_{i=1}^d \log^2\left(\frac{1+\sqrt{x_i}}{1-\sqrt{x_i}}\right)}.
\end{equation}

A remarkable property is that all eigenvalues of $R(Z_1,Z_2)$ are positive (see~\cite{Siegel-1943}) although $R$ may not necessarily be a Hermitian matrix. In practice, when calculating {\em numerically} the eigenvalues of the complex matrix $R(Z_1,Z_2)$, we  obtain very small imaginary parts which shall be rounded to zero.
Thus calculating  the Siegel distance on the upper plane requires cubic time, i.e., the cost of computing the eigenvalue decomposition.

This Siegel distance in the upper plane $\bbSH(d)$ generalizes several well-known distances:

\begin{itemize}

\item  When $Z_1=iY_1$ and $Z_2=iY_2$, we have 
\begin{equation}
\rho_U(Z_1,Z_2)=\rho_\PD(Y_1,Y_2),
\end{equation}
the Riemannian distance between $Y_1$ and $Y_2$ on the symmetric positive-definite manifold~\cite{forstner2003metric,Moakher-2005}:
\begin{eqnarray}
\rho_\PD(Y_1,Y_2) &=& \|\Log(Y_1Y_2^{-1})\|_F\\
&=& \sqrt{\sum_{i=1}^d \log^2\left(\lambda_i(Y_1Y_2^{-1})\right)}.
\end{eqnarray}

In that case, the Siegel upper metric for $Z=iY$ becomes the affine-invariant metric:
\begin{equation}
\ds_U^2(Z) =  \tr\left((Y^{-1}\dY)^2\right) = \ds_\PD(Y).
\end{equation}

Indeed, we have $\rho_\PD(C^\top Y_1C,C^\top Y_2C)=\rho_\PD(Y_1,Y_2)$ for any $C\in\GL(d,\bbR)$ and 
\begin{equation}
\rho_\PD(Y_1^{-1},Y_2^{-1})=\rho_\PD(Y_1,Y_2).
\end{equation}

\item  In 1D, the Siegel upper distance $\rho_{U}(Z_1,Z_2)$ between $Z_1=[z_1]$ and $Z_2=[z_2]$ (with $z_1$ and $z_2$ in $\bbC$) amounts to the hyperbolic distance on the Poincar\'e upper plane $\bbH$:
\begin{equation}
\rho_{U}(Z_1,Z_2)=\rho_{U}(z_1,z_2),
\end{equation}
where
\begin{equation}\label{eq:PoincareDistLog}
\rho_{U}(z_1,z_2) := \log \frac{|z_1-\barz_2|+|z_1-z_2| }{|z_1-\barz_2|-|z_1-z_2|}.
\end{equation}

\item
The Siegel distance between two diagonal matrices $Z=\diag(z_1,\ldots,z_d)$ and $Z'=\diag(z_1',\ldots,z_d')$ is
\begin{equation}
\rho_{U}(Z,Z')= \sqrt{\sum_{i=1}^d  \rho_{U}^2(z_i,z_i')}.
\end{equation}
Observe that the Siegel distance is a non-separable metric distance, but its squared distance is separable when the matrices are diagonal:
\begin{equation}
\rho_{U}^2(Z,Z')=  \sum_{i=1}^d  \rho_{U}^2(z_i,z_i').
\end{equation}
\end{itemize}

The Siegel  metric in the upper plane is invariant by generalized matrix M\"obius transformations (linear fractional transformations or rational transformations):
\begin{equation}
\phi_S(Z)  := (AZ+B)(CZ+D)^{-1},
\end{equation}
where $S\in M(2d,\bbR)$ is the following $2d\times 2d$ block matrix:
\begin{equation}
S=\mattwotwo{A}{B}{C}{D}.
\end{equation}
which satisfies
\begin{equation}
AB^\top=BA^\top,\quad CD^\top=DC^\top,\quad AD^\top-BC^\top=I.
\end{equation}
The map $\phi_S(\cdot)= \phi(S,\cdot)$ is called a {\em symplectic map}.

The set of  matrices $S$ encoding the symplectic maps forms a group called the {\em real symplectic group} $\Sp(d,\bbR)$~\cite{Freitas-1999} (informally, the group of Siegel motions):
\begin{equation}
\Sp(d,\bbR)=\left\{
 \mattwotwo{A}{B}{C}{D},\quad A,B,C,D\in\ M(d,\bbR) : AB^\top=BA^\top,\quad CD^\top=DC^\top,\quad AD^\top-BC^\top=I
\right\}.
\end{equation}
It can be shown that symplectic matrices have unit determinant~\cite{mackey2003determinant,rim2017elementary}, and therefore  $\Sp(d,\bbR)$ is a subgroup of $\SL({2d},\bbR)$,   the special group of  real invertible  matrices with unit determinant. 
We also check that if $M\in\Sp(d,\bbR)$ then $M^\top\in\Sp(d,\bbR)$.

Matrix $S$ denotes the representation of the group element $g_S$.
The symplectic group operation corresponds to matrix multiplications of their representations,  the neutral element is encoded by $E=\mattwotwo{I}{0}{0}{I}$,
and the group inverse of $g_S$ with $S=\mattwotwo{A}{B}{C}{D}$ is encoded by the matrix:
\begin{equation}
S^{(-1)}\eqdef  \mattwotwo{D^\top}{-B^\top}{-C^\top}{A^\top}.
\end{equation}
Here, we use the parenthesis notation $S^{(-1)}$ to indicate that it is the {\em group inverse} and not the usual matrix inverse $S^{-1}$.
The symplectic group is a Lie group of dimension $d(2d + 1)$.
Indeed, a symplectic matrix of $\Sp(d,\bbR)$ has $2d\times 2d=4d^2$ elements which are constrained from the block matrices as follows:
\begin{eqnarray}
AB^\top&=&BA^\top,\\
CD^\top&=&DC^\top,\\
AD^\top-BC^\top&=&I.
\end{eqnarray}
The first two constraints are independent and of the form $M=M^\top$ which yields each $\frac{d^2-d}{2}$ elementary constraints.
The third constraint is of the form $M_1-M_2=I$, and independent of the other constraints, yielding $d^2$ elementary constraints.
Thus the dimension of the symplectic group is 
\begin{equation}
\dim(\Sp(d,\bbR))=4d^2-(d^2-d)-d^2=2d^2+d=d(2d+1).
\end{equation}

The {\em action} of the group is {\em transitive}:
That is,  for any $Z=A+iB$ and $S(Z)=\mattwotwo{B^{-\frac{1}{2}} }{0}{AB^{-\frac{1}{2}}}{B^{\frac{1}{2}}}$, we have
$\phi_{S(Z)}(iI)=Z$.
Therefore, by taking the group inverse
\begin{equation}
S^{(-1)}=\mattwotwo{(B^{\frac{1}{2}})^\top}{0}{-(AB^{-\frac{1}{2}})^\top}{(B^{-\frac{1}{2}})^\top},
\end{equation}
we get
\begin{equation}
\phi_{S^{(-1)}}(Z)=iI.
\end{equation}
The action $\phi_S(Z)$ can be interpreted as a ``Siegel translation'' moving  matrix $iI$ to matrix $Z$, 
and conversely the action $\phi_{S^{(-1)}(Z)}$ as moving matrix $Z$ to matrix $iI$.

The {\em stabilizer group} of $Z=iI$ (also called isotropy group, the set of group elements $S\in\Sp(d,\bbR)$ whose action fixes $Z$) is the subgroup of {\em symplectic orthogonal matrices} $\SpO(2d,\bbR)$:
\begin{equation}
\SpO(2d,\bbR) = \left\{ \mattwotwo{A}{B}{-B}{A} \ :\ A^\top A+B^\top B=I, A^\top B\in\Sym(d,\bbR) \right\}.
\end{equation}
We have $\SpO(2d,\bbR) =\Sp(2d,\bbR)\cap O(2d)$, where $O(2d)$ is the {\em group of orthogonal matrices} of dimension  $2d\times 2d$:
\begin{equation}
O(2d) :=\left\{ R\in M(2d,\bbR)\ :\ RR^\top=R^\top R=I  \right\}.
\end{equation}
Informally speaking, the elements of $\SpO(2d,\bbR)$ represent the ``Siegel rotations'' in the  upper plane.
The Siegel upper plane is isomorphic to $\Sp(2d,\bbR)/O_d(\bbR)$.

A pair of matrices $(Z_1,Z_2)$ can be transformed into another pair of matrices $(Z_1',Z_2')$ of $\SH(d)$ if and only if
 $\lambda(R(Z_1,Z_2))=\lambda(R(Z_1',Z_2'))$, where $\lambda(M):=\{\lambda_1(M),\ldots, \lambda_d(M)\}$ denotes the {\em spectrum} of matrix $M$.

By noticing that the symplectic group elements $M$ and $-M$ yield the same symplectic map, we define
the orientation-preserving isometry group of the Siegel upper plane as the {\em real projective symplectic group} 
$\PSp(d,\bbR)=\Sp(d,\bbR)/\{\pm I_{2d}\}$ 
(generalizing the group $\PSL(2,\bbR)$ obtained when $d=1$).

The geodesics in the Siegel upper space can be obtained by applying symplectic transformations to the geodesics of the {\em positive-definite manifold} (geodesics on the SPD manifold) which is a totally geodesic submanifold of $\SU(d)$.
Let $Z_1=iP_1$ and $Z_2=iP_2$.
Then the geodesic $Z_{12}(t)$ with $Z_{12}(0)=Z_1$ and $Z_{12}(1)=Z_2$ is expressed as:
\begin{equation}
Z_{12}(t)=i P_1^{\frac{1}{2}} \Exp(t\ \Log(P_1^{-\frac{1}{2}} P_2 P_1^{-\frac{1}{2}})) P_1^{\frac{1}{2}},
\end{equation}
where $\Exp(M)$ denotes the {\em matrix exponential}:
\begin{equation}
\Exp(M)=\sum_{i=0}^\infty \frac{1}{i!}M^i,
\end{equation}
and $\Log(M)$ is the  {\em principal matrix logarithm}, unique when matrix $M$ has all positive eigenvalues.

The equation of the geodesic emanating from $P$ with tangent vector $S\in T_p$ (symmetric matrix) on the SPD manifold  is: 
\begin{equation}
\gamma{P,S}(t)=P^{\frac{1}{2}} \Exp(t P^{-\frac{1}{2}} S P^{-\frac{1}{2}} ) P^{\frac{1}{2}}.
\end{equation}

Both the exponential and the principal logarithm of a matrix $M$ can be calculated in cubic time when the matrices are diagonalizable:
Let $V$ denote the matrix of eigenvectors so that we have the following decomposition: 
\begin{equation}
M=V\ \diag(\lambda_1,\ldots,\lambda_d)\ V^{-1},
\end{equation}
 where $\lambda_1, \ldots, \lambda_d$ are the corresponding eigenvalues of eigenvectors.
Then for a scalar function $f$ (e.g., $f(u)=\exp(u)$ or $f(u)=\log u$), we define the corresponding matrix function $f(M)$ as
\begin{equation}
f(M) := V\ \diag(f(\lambda_1),\ldots,f(\lambda_d))\ V^{-1}.
\end{equation}

The volume element of the Siegel upper plane is $2^{\frac{d(d-1)}{2}}\dv$ where $\dv$ is the volume element of the $d(d+1)$-dimensional Euclidean space expressed in the Cartesian coordinate system.

\section{The Siegel disk domain and the Kobayashi distance}\label{sec:SiegelDisk}

The {\em Siegel disk}~\cite{Siegel-1943}  is an open convex complex matrix domain defined by
\begin{eqnarray}
\SD(d) &:=& \left\{  W\in\Sym(d,\bbC) \st I-\barW W\succ  0\right\}.
\end{eqnarray}
The Siegel disk can be written equivalently as $\SD(d) := \left\{  W\in\Sym(d,\bbC) \st I-W\barW\succ  0\right\}$ or 
$\SD(d) := \left\{  W\in\Sym(d,\bbC) \st  \|W\|_O <1\right\}$.
In the Cartan classification~\cite{Cartan-1935}, the Siegel disk is a Siegel domain of type III.

When $d=1$, the Siegel disk $\SD(1)$ coincides with the Poincar\'e disk: $\SD(1)=\bbD$.
The Siegel disk was described by Siegel~\cite{Siegel-1943} (page 2, called domain E to contrast with domain H of the upper space) and Hua in his 1948's paper~\cite{hua1947geometries2} (page 205) on the geometries of matrices~\cite{wan1996geometry}. Siegel's paper~\cite{Siegel-1943} in 1943 only considered the Siegel upper plane.
Here, the Siegel (complex matrix) disk is not to be confused with the other notion of Siegel disk in complex dynamics which is a connected component in the Fatou set.

The boundary $\partial\SD(d)$ of the Siegel disk  is called the {\em Shilov boundary}~\cite{clerc2009geometry,Freitas-1999,freitas2004revisiting}): $\partial\SD(d) := \left\{  W\in\Sym(d,\bbC) \st  \|W\|_O =1\right\}$. 
We have $\partial\SD(d)=\Sym(d,\bbC)\cap U(d,\bbC)$, where 
\begin{equation}
U(d,\bbC)=\{UU^*=U^*U=I\ : \ U\in M(d,\bbC)\}
\end{equation}
 is the group of $d\times d$ unitary matrices.
Thus $\partial\SD(d)$ is the set of {\em symmetric}  $d\times d$ unitary matrices with determinant of unit module.
The Shilov boundary is a {\em stratified manifold} where each stratum is defined as a space of constant rank-deficient matrices~\cite{bassanelli1983horospheres}.

The metric in the Siegel disk is: 
\begin{equation}
\ds_D^2 = \tr \left( (I-W\barW)^{-1} \dW (I-W\barW)^{-1}\dbarW \right).
\end{equation}
When $d=1$, we  recover $\ds_D^2=\frac{1}{(1-|w|^2)^2}\dw\dbarw$ which is the usual metric in the Poincar\'e disk (up to a missing factor of $4$, see Eq.~\ref{eq:PoincareDiskMetric}).

This Siegel metric induces a {\em K\"ahler geometry}~\cite{Barbaresco-2011} with the following {\em K\"ahler potential}:
\begin{equation}
K(W) = -\tr\left(\Log \left( I-W^HW\right)\right).
\end{equation}

The {\em Kobayashi distance}~\cite{Kobayashi-1967} between $W_1$ and $W_2$ in $\SD(d)$ is calculated~\cite{bassanelli-1983} as follows:
\begin{equation}\label{eq:DistSiegelDisk}
\rho_D(W_1,W_2) = \log\left( \frac{1+\|\Phi_{W_1}(W_2)\|_O}{1-\|\Phi_{W_1}(W_2)\|_O} \right),
\end{equation}
where
\begin{equation}\label{eq:translationSiegelDisk}
\Phi_{W_1}(W_2)=(I-W_1\barW_1)^{-\frac{1}{2}} (W_2-W_1) (I-\barW_1W_2)^{-1} (I-\barW_1W_1)^{\frac{1}{2}},
\end{equation}
is a Siegel translation which moves $W_1$ to the origin $O$ (matrix with all entries set to $0$) of the disk: 
We have $\Phi_{W}(W)=0$.
In the Siegel disk domain, the Kobayashi distance~\cite{Kobayashi-1967} coincides with the Carath\'eodory distance~\cite{Caratheodory-1927} and yields a metric distance.
Notice that the Siegel disk distance, although a spectral distance function via the operator norm, is {\em not smooth} because of it uses the maximum singular value. Recall that the Siegel upper plane distance uses {\em all} eigenvalues of a matrix cross-ratio $R$.

It follows that the cost of calculating a Kobayashi distance in the Siegel disk is cubic:
We require to compute a {\em symmetric matrix square root}~\cite{sra2015matrix} in Eq.~\ref{eq:translationSiegelDisk}, and
then compute the largest singular value for the operator norm in Eq.~\ref{eq:DistSiegelDisk}.

Notice that when $d=1$, the ``1d'' scalar matrices commute, and we have:
\begin{eqnarray}
\Phi_{w_1}(w_2) &=& (1-w_1\barw_1)^{-\frac{1}{2}} (w_2-w_1) (1-\barw_1w_2)^{-1} (1-\barw_1w_1)^{\frac{1}{2}},\\
&=& \frac{w_2-w_1}{1-\barw_1w_2}.
\end{eqnarray}
This corresponds to a hyperbolic translation of $w_1$ to $0$ (see Eq.~\ref{eq:transrot}).
Let us call the geometry of the Siegel disk the Siegel-Poincar\'e geometry.

We observe the following special cases of the Siegel-Poincar\'e distance:
\begin{itemize}

\item Distance to the origin: When $W_1=0$ and $W_2=W$, we have $\Phi_0(W)=W$, and therefore the distance in the disk between a matrix $W$ and the origin $0$ is:
\begin{equation}\label{eq:distSiegelDiskO}
\rho_D(0,W) =  \log \left(\frac{1+\|W\|_O}{1-\|W\|_O}\right).  
\end{equation}
In particular, when $d=1$, we recover the formula of Eq.~\ref{eq:PoincareDiskOrigin}:
$\rho_D(0,w) = \log \left(\frac{1+|w|}{1-|w|}\right)$.

\item When $d=1$, we have $W_1=[w_1]$ and $W_2=[w_2]$, and
\begin{equation}
\rho_D(W_1,W_2) = \rho_D(w_1,w_2).
\end{equation}

\item Consider diagonal matrices $W=\diag(w_1,\ldots,w_d)\in\bbSD(d)$ and $W'=\diag(w_1',\ldots,w_d')\in\bbSD(d)$.
We have $|w_i|\leq 1$ for $i\in\{1,\ldots,d\}$. Thus the diagonal matrices belong to the {\em polydisk domain}.
Then we have
\begin{equation}
\rho_D(W_1,W_2)=\sqrt{\sum_{i=1}^d \rho_D^2(w_i,w_i')}.
\end{equation}
Notice that the polydisk domain is a Cartesian product of 1D complex disk domains, but it is {\em not} the {\em unit $d$-dimensional complex ball}
$\{z\in\mathbb{C}^d \ :\ \sum_{i=1^d} z_i\bar{z}_i=1 \}$. 
\end{itemize}

We can convert a matrix $Z$ in the Siegel upper space to an equivalent matrix $W$ in the Siegel disk  
 by using the following  {\em matrix Cayley transformation} for $Z\in \SH_d$:
\begin{equation}
W_{U\rightarrow D}(Z):=(Z-iI)(Z+iI)^{-1}\in \SD(d).
\end{equation}

Notice that the imaginary positive-definite matrices $iP$ of the upper plane (vertical axis) are mapped to  
\begin{equation}
W_{U\rightarrow D}(iP):=(P-I)(P+I)^{-1}\in \SD(d),
\end{equation}
i.e., the real symmetric matrices belonging to the horizontal-axis of the disk.

The inverse transformation for a matrix $W$ in the Siegel disk   is
\begin{equation}
Z_{D\rightarrow U}(W)=i\left(I+W\right)\left(I-W\right)^{-1}\in\SH(d),
\end{equation}
a matrix in the Siegel upper space.
With those mappings, the origin of the disk $0\in\SD(d)$ coincides with matrix $iI\in\SH(d)$ in the upper space.

A key property is that the geodesics passing through the matrix origin $0$ are expressed by {\em straight line segments} in the Siegel disk.
We can check that 
\begin{equation}
\rho_D(0,W)=\rho_D(0,\alpha W)+\rho_D(\alpha W,W),
\end{equation}
 for any $\alpha\in [0,1]$.

To describe the geodesics between $W_1$ and $W_2$, we first move $W_1$ to $0$ and $W_2$ to $\Phi_{W_1}(W_2)$.
Then the geodesic between $0$ and $\Phi_{W_1}(W_2)$ is a straight line segment, and we map back this geodesic via $\Phi^{-1}{W_1}(\cdot)$.
The inverse of a symplectic map is a symplectic map which corresponds to the action of an element of the complex symplectic group.

The {\em complex symplectic group} is
\begin{equation}
\Sp(d,\bbC)= \left\{ M^\top J M =J,  M=\mattwotwo{A}{B}{C}{D}\in M(2d,\bbC)   \right\},
\end{equation}
 with 
\begin{equation}
J=\mattwotwo{0}{I}{-I}{0},
\end{equation}
for the $d\times d$ identity matrix  $I$.
Notice that the condition $M^\top J M = J$ amounts to check that
\begin{equation}
AB^\top=BA^\top,\quad CD^\top=DC^\top,\quad AD^\top-BC^\top=I.
\end{equation}

The conversions between the Siegel upper plan to the Siegel disk (and vice versa) can be expressed using {\em complex symplectic transformations} associated to the matrices:
\begin{eqnarray}
W(Z)=\mattwotwo{I}{-iI}{I}{iI}.Z=(Z-iI)(Z+iI))^{-1},\\
Z(W)=\mattwotwo{iI}{iI}{-I}{I}.W=i\left(I+W\right)\left(I-W\right)^{-1}.
\end{eqnarray}

Figure~\ref{fig:ConversionUpperDisk} depicts the conversion of the upper plane to the disk, and vice versa.

\begin{figure}
\centering
\includegraphics[width=\columnwidth]{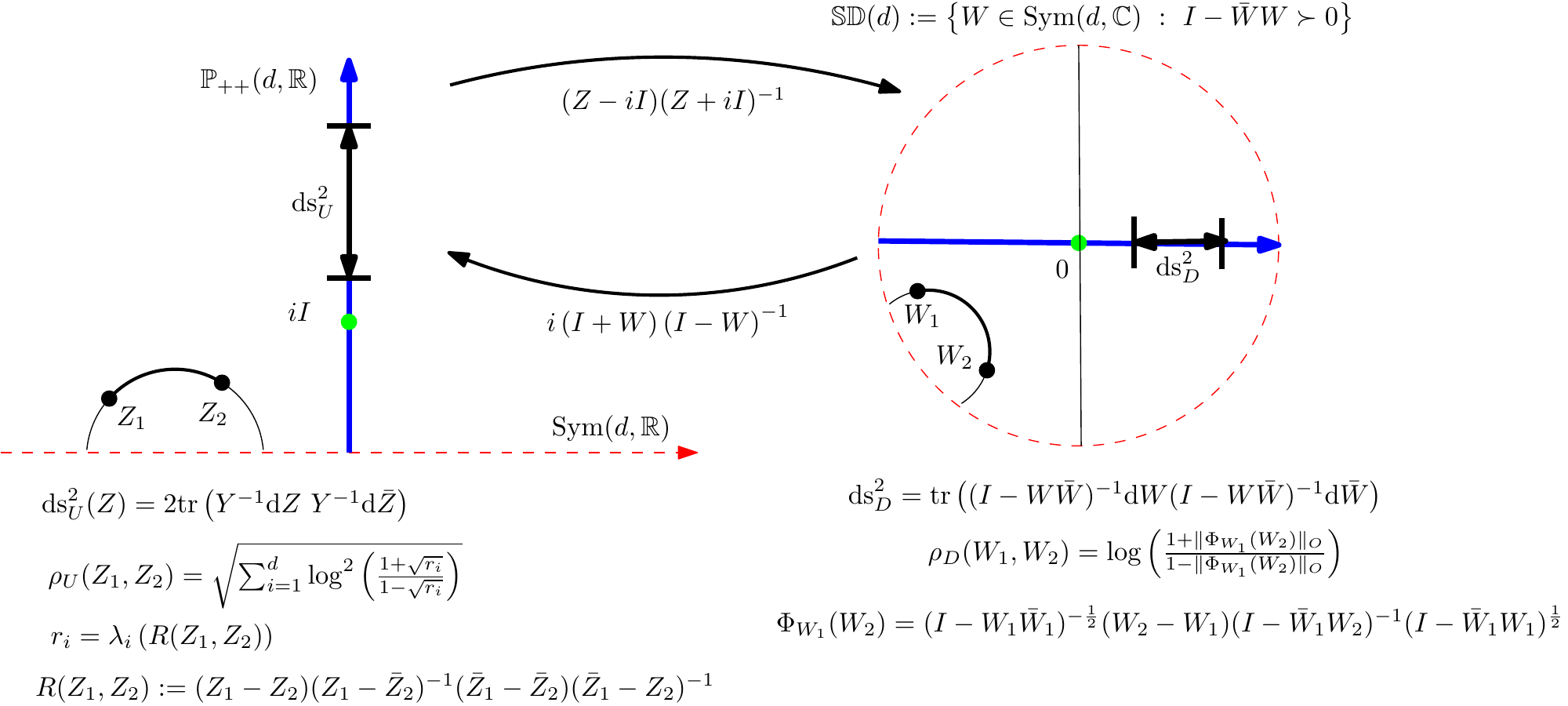}

\caption{Illustrating the properties and conversion between the Siegel upper plane and the Siegel disk.\label{fig:ConversionUpperDisk}}
\end{figure}

The orientation-preserving isometries in the Siegel disk is the {\em projective complex symplectic group}
 $\PSp(d,\bbC)=\Sp(d,\bbC)/\{\pm I_{2d}\}$.

It can be shown that
\begin{equation}
\Sp(d,\bbC)= \left\{   M=\mattwotwo{A}{B}{\bar{B}}{\bar{A}}\in M(2d,\bbC)   \right\},
\end{equation}
with
\begin{eqnarray}
A^\top\bar{B}-B^H A&=&0,\\
A^\top\bar{A}-B^HB&=&I.
\end{eqnarray}
and the left action of $g\in\Sp(d,\bbC)$ is
\begin{equation}
g.W= (AW+B)(\bar{A}W+\bar{B})^{-1}.
\end{equation}

The {\em isotropy group} at the origin $0$ is 
\begin{equation}
\left\{ \mattwotwo{A}{0}{0}{\bar{A}} \ :\ A\in U(d) \right\},
\end{equation}
where $U(d)$ is the {\em unitary group}: $U(d)=\{ U\in \GL(d,\bbC) \ :\ U^HU=UU^H=I\}$.

Thus we can ``rotate'' a matrix $W$ with respect to the origin so that its imaginary part becomes $0$:
There exists $A$ such that $\Re(AWW^{-1}\bar{A}^{-1})=0$.

More generally, we can define a Siegel rotation~\cite{mitchell1955potential} in the disk with respect to a center $W_0\in\SD(d)$ as follows:
\begin{equation}
R_{W_0}(W) = (AW-AW_0)(B-B\barW_0W)^{-1},
\end{equation}
where
\begin{eqnarray}
\bar{A}A &=& (I-W_0\barW_0)^{-1},\\
\bar{B}B &=& (I-\barW_0W_0)^{-1},\\
\bar{A}{A}W_0&=&W_0\bar{B}B.
\end{eqnarray}

Interestingly, the Poincar\'e disk can be embedded {\em non-diagonally} onto the Siegel upper plane~\cite{rong2018non}.

In complex dimension $d=1$, the Kobayashi distance $\rho_W$ coincides with the Siegel distance $\rho_U$.
Otherwise, we calculate the Siegel distance in the Siegel disk as
\begin{equation}
\rho_U(W_1,W_2):=\rho_U\left(Z_{D\rightarrow U}(W_1),Z_{D\rightarrow U}(W_2)\right).
\end{equation}

\section{The Siegel-Klein geometry: Distance and geodesics}\label{sec:SiegelKlein}

We define the Siegel-Klein geometry as the Hilbert geometry for the Siegel disk model.
Section~\ref{sec:HG} concisely explains the Hilbert geometry induced by an open bounded convex domain.
In \S\ref{sec:HGSiegelDisk}, we study the Hilbert geometry of the Siegel disk domain.
Then we report the Siegel-Klein distance in~\S\ref{sec:SKdistance} and study some of its particular cases.
Section~\S\ref{sec:SiegelConversions} presents the conversion procedures between the Siegel-Poincar\'e disk and the Siegel-Klein disk.
In~\S\ref{sec:SKdistapprox}, we design a fast guaranteed method to approximate the Siegel-Klein distance.
Finally, we introduce the Hilbert-Fr\"obenius distances to get simple bounds on the Siegel-Klein distance in \S\ref{sec:HF}.

\subsection{Background on Hilbert geometry}\label{sec:HG}

\begin{figure}
\centering
\includegraphics[width=0.35\columnwidth]{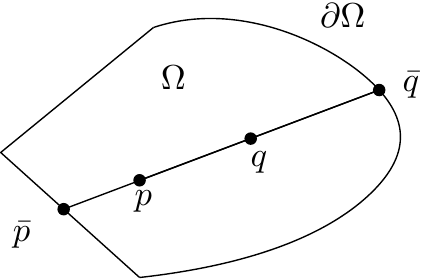}

\caption{Hilbert distance induced by a bounded open convex domain $\Omega$.\label{fig:HilbertOmega}}
\end{figure}

Consider a normed vector space $(V,\|\cdot\|)$, and define the Hilbert distance~\cite{Hilbert-1895,beardon1999klein} for an open bounded convex domain $\Omega$ as follows:

\begin{Definition}[Hilbert distance]
The Hilbert distance is defined for any open bounded convex domain $\Omega$ and a prescribed positive factor $\kappa>0$ by 
\begin{equation}
H_{\Omega,\kappa}(p,q) :=  \left\{
 \begin{array}{ll}
 \kappa \log \left|\CR(\bar{p},p;q,\bar{q})\right|, & p\not=q,\\
 0 & p=q.
\end{array}
\right.
\end{equation}
where $\bar{p}$ and $\bar{q}$ are the unique two intersection points of the line $(pq)$ 
with the  boundary $\partial\Omega$ of the domain $\Omega$ as depicted in Figure~\ref{fig:HilbertOmega}, and  $\CR$ denotes the cross-ratio of four points (a projective invariant):
\begin{equation}
\CR(a,b;c,d) = \frac{\|a-c\| \|b-d\|}{\|a-d\| \|b-c\|}.
\end{equation}
\end{Definition}

When $p\not=q$, we have:
\begin{equation}
H_{\Omega,\kappa}(p,q) :=  \kappa \log \left(\frac{\|\bar{q}-p\| \|\bar{p}-q \|}{\|\bar{q}-q\| \|\bar{p}-p\|}\right).
\end{equation}

The Hilbert distance is a {\em metric distance} which does {\em not} depend on the underlying norm of the vector space:

\begin{Proposition}[Formula of Hilbert distance]\label{prop:Hilberformula}
The Hilbert distance between two points $p$ and $q$ of an open bounded convex domain $\Omega$ is
\begin{equation}
H_{\Omega,\kappa}(p,q) =\left\{
\begin{array}{ll}
\kappa \log \left|\frac{\alpha_+(1-\alpha_-)}{\alpha_-(\alpha_+-1)}\right|, & p\not=q,\\
0 & p=q.
\end{array}
\right.,
\end{equation}
where $\bar{p}=p+\alpha^-(q-p)$ and $\bar{q}=p+\alpha^+(q-p)$ are the two intersection points of the line $(pq)$ with the boundary $\partial\Omega$ of the domain $\Omega$.
\end{Proposition}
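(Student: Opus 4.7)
The plan is to reduce the four vector distances appearing in the cross-ratio to one-dimensional expressions in the scalar parameters $\alpha_-$ and $\alpha_+$, exploiting the fact that $p$, $q$, $\bar p$, $\bar q$ are all collinear.

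First I would parametrize the supporting line $(pq)$ by $\ell(t) = p + t(q-p)$, so that $\ell(0)=p$, $\ell(1)=q$, $\ell(\alpha_-)=\bar p$ and $\ell(\alpha_+)=\bar q$ by hypothesis. Observe that (say) $\alpha_-<0<1<\alpha_+$ when $\bar p,\bar q$ are labelled in the natural order on the line, but I will avoid assuming the sign ordering since the final expression is taken in absolute value. For any two values $s,t\in\mathbb{R}$, we have $\ell(s)-\ell(t) = (s-t)(q-p)$, hence $\|\ell(s)-\ell(t)\| = |s-t|\,\|q-p\|$.

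Next I would apply this identity to each of the four distances that appear in the expanded Hilbert distance
\begin{equation*}
H_{\Omega,\kappa}(p,q) = \kappa \log\left(\frac{\|\bar q - p\|\,\|\bar p - q\|}{\|\bar q - q\|\,\|\bar p - p\|}\right),
\end{equation*}
giving $\|\bar q - p\| = |\alpha_+|\,\|q-p\|$, $\|\bar p - q\| = |1-\alpha_-|\,\|q-p\|$, $\|\bar q - q\| = |\alpha_+-1|\,\|q-p\|$, and $\|\bar p - p\| = |\alpha_-|\,\|q-p\|$. The common factor $\|q-p\|$ cancels in numerator and denominator, leaving
\begin{equation*}
H_{\Omega,\kappa}(p,q) = \kappa \log\left|\frac{\alpha_+(1-\alpha_-)}{\alpha_-(\alpha_+-1)}\right|,
\end{equation*}
as claimed; the $p=q$ case is the trivial one built into the definition.

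There is essentially no obstacle: the argument is a single collinearity computation. The only thing worth double-checking is that the definition of the cross-ratio in the preceding display really matches $\CR(\bar p,p;q,\bar q)$ as written, and that the norm used in the Hilbert distance is the same norm of the ambient vector space used to measure $\|q-p\|$ (so the cancellation is legitimate); this is exactly the norm-independence noted just before the proposition, which guarantees the identity holds in any norm on $V$.
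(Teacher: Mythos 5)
Your proposal is correct and matches the paper's own argument: the paper likewise parametrizes the line through $p$ and $q$, writes each of the four distances as $|{\cdot}|\,\|q-p\|$ using the parameters $\alpha_-$ and $\alpha_+$, and cancels the common factor to obtain the stated formula. Your version is just slightly more careful about signs/absolute values and about noting norm-independence, which is fine but not a different route.
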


\begin{proof}
For distinct points $p$ and $q$ of $\Omega$, let $\alpha^+>1$ be such that $\bar{q}=p+\alpha^+(q-p)$, and $\alpha_-<0$ such that
$\bar{p}=p+\alpha^-(q-p)$.
Then we have
$\|\bar{q}-p\|=\alpha_+\|q-p\|$, $\|\bar{p}-p\|=|\alpha_-| \|q-p\|$, $\|q-\bar{q}\|=(\alpha_+-1)\|p-q\|$ and $\|\bar{p}-q\|=(1-\alpha_)\|p-q\|$.
Thus we get
\begin{eqnarray}
H_{\Omega,\kappa}(p,q) &=&  \kappa \log \frac{\|\bar{q}-p\| \|\bar{p}-q \|}{\|\bar{q}-q\| \|\bar{p}-p\|},\\
&=& \kappa \log\left(\frac{\alpha_+(1-\alpha_-)}{|\alpha_-|(\alpha_+-1)}\right),
\end{eqnarray}
and $H_\Omega(p,q)=0$ if and only if $p=q$.
\end{proof}

We may also write the source points $p$ and $q$ as linear interpolations of the extremal points $\bar{p}$ and $\bar{q}$ on the boundary:
$p=(1-\beta_p)\bar{p}+\beta_p\bar{q}$ and $q=(1-\beta_q)\bar{p}+\beta_q\bar{q}$ with $0<\beta_p<\beta_q<1$ for distinct points $p$ and $q$. 
In that case, the Hilbert distance can be written as
\begin{eqnarray}\label{eq:HilbertDist1D}
H_{\Omega,\kappa}(p,q) =\left\{
\begin{array}{ll}
\kappa \log \left(\frac{1-\beta_p}{\beta_p} \frac{\beta_q}{1-\beta_q}\right) & \beta_p\not=\beta_q,\\
0 & \beta_p=\beta_q.
\end{array}
\right.
\end{eqnarray}

The projective Hilbert space $(\Omega,H_\Omega)$ is a metric space.
Notice that the above formula has demonstrated that 
\begin{equation}
H_{\Omega,\kappa}(p,q)=H_{\Omega\cap (pq),\kappa}(p,q).
\end{equation}
That is, the Hilbert distance between two points of a $d$-dimensional domain $\Omega$ is equivalent to the Hilbert distance between the two points on the 1D domain $\Omega\cap (pq)$ defined by $\Omega$ restricted to the line $(pq)$ passing through the points $p$ and $q$.

Notice that the boundary $\partial\Omega$ of the domain may not be smooth (e.g., $\Omega$ may be a simplex~\cite{nielsen2018clustering} or a polytope~\cite{nielsen2017balls}).
The Hilbert geometry for the unit disk centered at the origin with $\kappa=\frac{1}{2}$ yields the Klein model~\cite{Klein-1873} 
(or Klein-Beltrami model~\cite{Beltrami-1868}) of hyperbolic geometry.
The Hilbert geometry for an ellipsoid   yields the {\em Cayley-Klein hyperbolic model}~\cite{Cayley-1859,richter2011perspectives,CKclassification-2016} generalizing the Klein model.
The Hilbert geometry for a simplicial polytope is isometric to a normed vector space~\cite{de1993hilbert,nielsen2018clustering}.
We refer to the handbook~\cite{papadopoulos2014handbook} for a survey of recent results on Hilbert geometry.
The Hilbert geometry of the elliptope (i.e., space of correlation matrices) was studied in~\cite{nielsen2018clustering}.
Hilbert geometry may be studied from the viewpoint of {\em Finslerian geometry} which is Riemannian if and only if the domain $\Omega$ is an ellipsoid (i.e., Klein or Cayley-Klein hyperbolic geometries).
Last but not least, it is interesting to observe the similarity of the Hilbert distance which relies on a geometric cross-ratio with  the 
Siegel distance (Eq.~\ref{eq:SiegelDistance}) in the upper space  which relies on a matrix generalization 
of the cross-ratio (Eq.~\ref{eq:matrixcr}).

\subsection{Hilbert geometry of the Siegel disk domain}\label{sec:HGSiegelDisk}

\begin{figure}
\centering
\includegraphics[width=0.35\columnwidth]{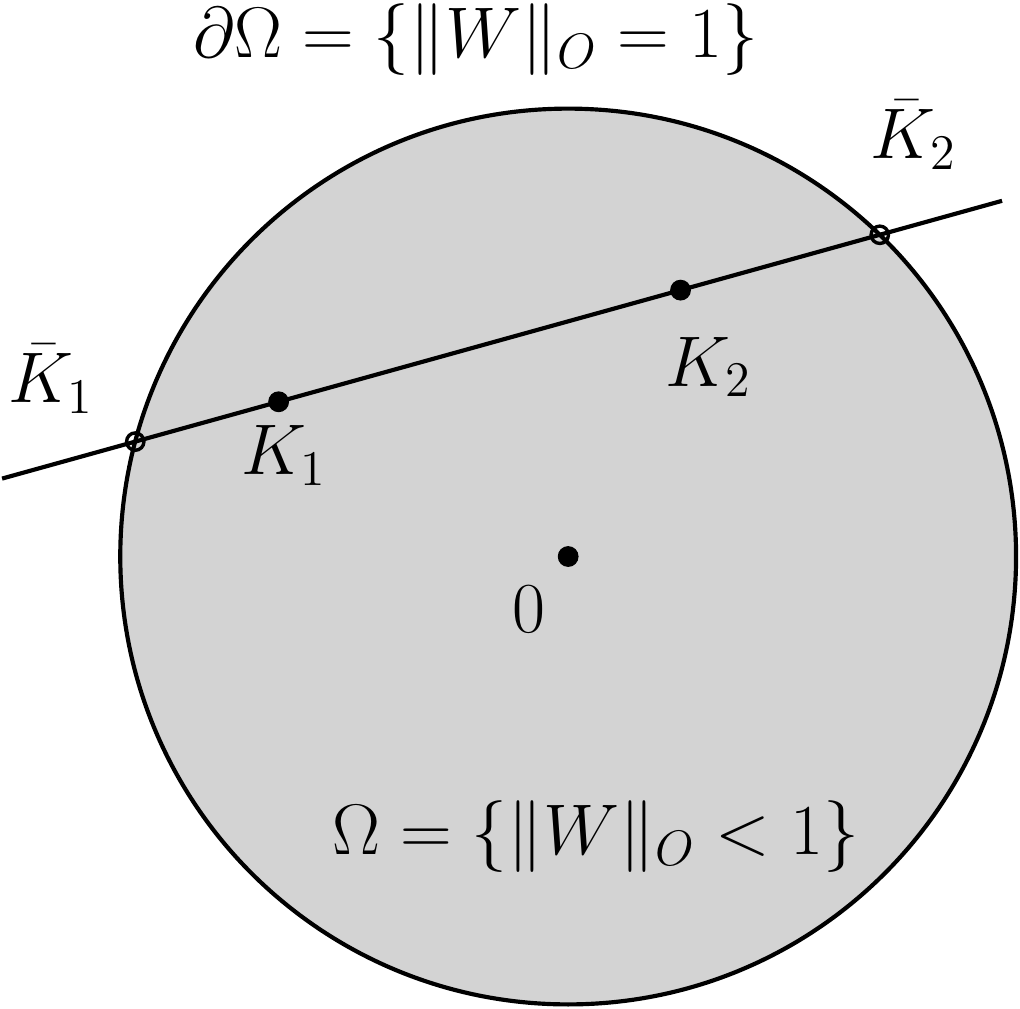}

\caption{Hilbert geometry for the Siegel disk: The Siegel-Klein disk model.\label{fig:HilbertKlein}}
\end{figure}

Let us consider the Siegel-Klein disk model which is defined as the Hilbert geometry for the Siegel disk domain $\Omega=\SD(d)$ as depicted in Figure~\ref{fig:HilbertKlein} with $\kappa=\frac{1}{2}$.

\begin{Definition}[Siegel-Klein geometry]\label{def:SiegelKlein}
The Siegel-Klein disk model is the Hilbert geometry for the  open bounded convex domain $\Omega=\SD(d)$ with prescribed constant $\kappa=\frac{1}{2}$.
The Siegel-Klein distance is 
\begin{equation}
\rho_K(K_1,K_2):=H_{\SD(d),\frac{1}{2}}(K_1,K_2).
\end{equation}
\end{Definition}

When $d=1$, the Siegel-Klein disk is the Klein disk model of hyperbolic geometry, and the Klein distance~\cite{HVD-2010} between two any points
 $k_1\in\bbC$ and $k_2\in\bbC$ restricted to the unit disk is
\begin{equation}
\rho_K(k_1,k_2) =  \mathrm{arccosh}\left(
\frac{1-(\Re(k_1)\Re(k_2)+\Im(k_1)\Im(k_2))}{\sqrt{(1-|k_1|)(1-|k_2|)}}
\right),
\end{equation}
where
\begin{equation}
\mathrm{arccosh}(x)=\log \left(x+\sqrt{x^{2}-1}\right), \quad x\geq 1.
\end{equation}

This formula can be retrieved from the Hilbert distance induced by the Klein unit disk~\cite{richter2011perspectives}.

\subsection{Calculating and approximating the Siegel-Klein distance}\label{sec:SKdistance}

The Siegel disk domain $\SD(d)= \left\{  W\in\Sym(d,\bbC) \st I-\barW W\succ  0\right\}$
 can be rewritten using the operator norm  as
\begin{equation}
\SD(d) = \left\{W\in\Sym(d,\bbC) \st \|W\|_O < 1 \right\}.
\end{equation}

Let $\{K_1+\alpha(K_2-K_1), \alpha\in\bbR\}$ denote the line passing through (matrix) points $K_1$ and $K_2$.
That line intersects the Shilov boundary when
\begin{equation}
\|K_1+\alpha(K_2-K_1)\|_O = 1.
\end{equation}

When $K_1\not=K_2$, there are two unique solutions since a line intersects the boundary of a bounded open convex domain in at most two points: 
Let one solution be $\alpha_+$ with $\alpha_+>1$, and the other solution be $\alpha_-$ with $\alpha_-<0$.
The Siegel-Klein distance is then defined as

\begin{equation}\label{eq:SiegelKleinDist}
\rho_K(K_1,K_2)= \frac{1}{2} \log \left( \frac{\alpha_+(1-\alpha_-)}{|\alpha_-|(\alpha_+-1)} \right),
\end{equation}
where $\barK_1=K_1+\alpha_-(K_2-K_1)$ and $\barK_2=K_1+\alpha_+(K_2-K_1)$ are the extremal matrices belonging to the Shilov boundary $\partial\SD(d)$.

Notice that matrices $K_1$ and/or $K_2$ may be rank deficient.
We have $\rank(K_1+\lambda(K_2-K_1))\leq \min(d, \rank(K_1)+\rank(K_2))$, see~\cite{marsaglia1964bounds}.

In practice, we may perform a bisection search on the matrix line $(K_1K_2)$ to approximate these two extremal points $\barK_1$ and $\barK_2$ 
(such that these matrices are ordered along the line as follows: $\barK_1$, $K_1$, $K_2$, $\barK_2$).
We may find a lower bound for $\alpha_-$ and a upper bound for  $\alpha_+$ as follows:
We seek $\alpha$ on the line $(K_1K_2)$ such that $K_1+\alpha(K_2-K_1)$ falls outside the Siegel disk domain:
\begin{equation}
1< \|K_1+\alpha(K_2-K_1)\|_O.
\end{equation}
Since $\|\cdot\|_O$ is a matrix norm, we have
\begin{equation}
1< \|K_1+\alpha(K_2-K_1)\|_O \leq \|K_1\|_O +|\alpha|\ \|(K_2-K_1)\|_O.
\end{equation}
Thus we deduce that
\begin{equation}
 |\alpha|  > \frac{1-\|K_1\|_O}{\|(K_2-K_1)\|_O}.
\end{equation}

\subsection{Siegel-Klein distance to the origin}\label{sec:SKdistanceO}

When $K_1=0$ (the $0$ matrix denoting the origin of the Siegel disk), and $K_2=K\in\SD(d)$, it is easy to solve the equation:
\begin{equation} 
\| \alpha K\|_O = 1.
\end{equation}
We have $|\alpha|=\frac{1}{\|K\|_O}$, that is, 
\begin{eqnarray}\label{eq:alphaorigin}
\alpha_+ &=& \frac{1}{\|K\|_O}>1, \\
\alpha_- &=& -\frac{1}{\|K\|_O}<0.
\end{eqnarray}
In that case, the Siegel-Klein distance of Eq.~\ref{eq:SiegelKleinDist} is expressed as:
\begin{eqnarray}
\rho_K(0,K)&=&\log \left( \frac{1+\frac{1}{\|K\|_O}}{\frac{1}{\|K\|_O}-1}\right),\\
&=&  \frac{1}{2} \log\left( \frac{1+\|K\|_O}{1-\|K\|_O}\right),\\
&=& 2\ \rho_D(0,K),
\end{eqnarray}
where $\rho_D(0,W)$ is defined in Eq.~\ref{eq:distSiegelDiskO}.

\begin{Theorem}[Siegel-Klein distance to the origin]\label{thm:SKdistorigin}
The Siegel-Klein distance of matrix $K\in\SD(d)$ to the origin $O$ is 
\begin{equation}\label{eq:sksp}
\rho_K(0,K)= \frac{1}{2} \log\left( \frac{1+\|K\|_O}{1-\|K\|_O}\right).
\end{equation}
\end{Theorem}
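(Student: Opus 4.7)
The plan is to apply Definition~\ref{def:SiegelKlein} directly, using the closed-form expression of the Hilbert distance from Proposition~\ref{prop:Hilberformula} with $\kappa = \frac{1}{2}$, and exploiting the simple parametrization of the line through the origin.

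First, I would recall that the Siegel disk domain admits the operator-norm characterization $\SD(d) = \{W \in \Sym(d,\bbC) : \|W\|_O < 1\}$, so that its Shilov boundary consists of symmetric matrices with $\|W\|_O = 1$. The line $(0K)$ through the origin and $K$ is parametrized as $\alpha K$ for $\alpha \in \bbR$, and the homogeneity of the operator norm gives $\|\alpha K\|_O = |\alpha|\, \|K\|_O$. Setting this equal to $1$ yields exactly two intersection points with $\partial\SD(d)$, namely those corresponding to
\begin{equation}
\alpha_+ = \frac{1}{\|K\|_O} > 1, \qquad \alpha_- = -\frac{1}{\|K\|_O} < 0,
\end{equation}
as already recorded in Eq.~\ref{eq:alphaorigin}.

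Next, I would substitute these values into the Hilbert distance formula of Proposition~\ref{prop:Hilberformula} with $\kappa = \frac{1}{2}$. A straightforward algebraic simplification of
\begin{equation}
\frac{\alpha_+(1-\alpha_-)}{|\alpha_-|(\alpha_+-1)} = \frac{\frac{1}{\|K\|_O}\bigl(1+\frac{1}{\|K\|_O}\bigr)}{\frac{1}{\|K\|_O}\bigl(\frac{1}{\|K\|_O}-1\bigr)} = \frac{1+\|K\|_O}{1-\|K\|_O}
\end{equation}
(multiplying numerator and denominator by $\|K\|_O$) then yields the claimed expression. This computation is essentially the display chain just preceding the theorem statement, so no obstacle arises; the only point requiring care is confirming that the factor $|\alpha_-|$ (rather than $\alpha_-$) in the Hilbert formula is consistent with our sign convention, which is immediate since $\alpha_- < 0$.

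There is no hard part here. The content of the theorem is a direct specialization of the Hilbert distance, and the simplification is driven entirely by the homogeneity of the operator norm $\|\cdot\|_O$ under real scaling, which reduces the boundary-hitting condition to a scalar equation. The only potentially subtle point is justifying that the Shilov boundary is indeed the correct ``boundary'' appearing in the Hilbert cross-ratio for this convex body --- but since $\SD(d)$ is open and convex with boundary $\{\|W\|_O = 1\}\cap \Sym(d,\bbC)$, and the line $\{\alpha K : \alpha \in \bbR\}$ lies entirely within $\Sym(d,\bbC)$, the usual Hilbert-geometry setup applies verbatim.
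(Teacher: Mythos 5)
Your proof is correct and follows essentially the same route as the paper: solve $\|\alpha K\|_O=1$ via homogeneity of the operator norm to obtain $\alpha_\pm=\pm\frac{1}{\|K\|_O}$, then substitute into the Hilbert (Siegel-Klein) distance formula with $\kappa=\frac{1}{2}$ and simplify. The algebra and the sign handling of $|\alpha_-|$ are exactly as in the paper's computation preceding the theorem, so nothing further is needed.
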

The constant $\kappa=\frac{1}{2}$ is chosen in order to ensure that when $d=1$ the corresponding Klein disk has negative {\em unit} curvature.
The result can be easily extended to the case of the Siegel-Klein distance between $K_1$ and $K_2$ where the origin $O$ belongs to the line $(K_1K_2)$.
In that case, $K_2=\lambda K_1$ for some $\lambda\in\bbR$ (e.g., $\lambda=\frac{\tr(K_2)}{\tr(K_1)}$ where $\tr$ denotes the matrix trace operator).
It follows that
\begin{eqnarray}
\|K_1+\alpha(K_2-K_1)\|_O &=& 1,\\
|1+\alpha(\lambda-1)| &=& \frac{1}{\|K_1\|_O}.
\end{eqnarray}
Thus we get the two values defining the intersection of $(K_1K_2)$ with the Shilov boundary:
\begin{eqnarray}
\alpha' &=& \frac{1}{\lambda-1} \left(\frac{1}{\|K_1\|_O}-1\right) ,\\
\alpha'' &=& \frac{1}{1-\lambda} \left(1+\frac{1}{\|K_1\|_O}\right).
\end{eqnarray}
We then apply formula Eq.~\ref{eq:SiegelKleinDist}:
\begin{eqnarray}\label{eq:SiegelKleinDistLineO}
\rho_K(K_1,K_2) &=& \frac{1}{2}  \left|\log\left( \frac{\alpha'(1-\alpha'')}{\alpha''(\alpha'-1)}  \right)\right|,\\
&=& \frac{1}{2}  \left|\log  \left( \frac{1-\|K_1\|_O}{1+\|K_1\|_O} \frac{\|K_1\|_O(1-\lambda)-(1+\|K_1\|_O)}{\|K_1\|_O(\lambda-1)-(1-\|K_1\|_O)} \right) \right|.
\end{eqnarray}

\begin{Theorem}
The Siegel-Klein distance between two points $K_1\not =0$ and $K_2$ on a line $(K_1K_2)$ passing through the origin is
$$
\rho_K(K_1,K_2)=\frac{1}{2}  \left|\log \left( \frac{1-\|K_1\|_O}{1+\|K_1\|_O} \frac{\|K_1\|_O(1-\lambda)-(1+\|K_1\|_O)}{\|K_1\|_O(\lambda-1)-(1-\|K_1\|_O)} \right) \right|,
$$
where $\lambda=\frac{\tr(K_2)}{\tr(K_1)}$.
\end{Theorem}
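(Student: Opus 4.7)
The plan is to exploit directly the three ingredients set up in the paragraph immediately preceding the theorem: (i) the collinearity of $0$, $K_1$, $K_2$ forces $K_2=\lambda K_1$ for a scalar $\lambda\in\bbR$, (ii) the operator norm is positively homogeneous, so the Shilov-boundary condition restricted to the line becomes a one-variable scalar equation, and (iii) the closed-form Hilbert-distance expression in Proposition~\ref{prop:Hilberformula} only requires the two intersection parameters.

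First, I would justify that $\lambda$ can be recovered as $\lambda=\tr(K_2)/\tr(K_1)$ whenever $\tr(K_1)\neq 0$: applying the (linear) trace functional to $K_2=\lambda K_1$ yields the claim. If $\tr(K_1)=0$ one may replace the trace by any linear functional not vanishing on $K_1$ (e.g.\ a suitable matrix entry); this is a harmless technicality that does not affect the final formula.

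Second, I would parametrize the line as
$$
K(\alpha)=K_1+\alpha(K_2-K_1)=\bigl(1+\alpha(\lambda-1)\bigr)K_1,
$$
and use $\|cM\|_O=|c|\,\|M\|_O$ to rewrite the boundary condition $\|K(\alpha)\|_O=1$ as the scalar equation $|1+\alpha(\lambda-1)|=1/\|K_1\|_O$. Solving this absolute-value equation yields precisely the two roots $\alpha'$ and $\alpha''$ already displayed in the excerpt, which parametrize the two extremal points $\barK_1, \barK_2\in\partial\SD(d)$ along the line $(K_1K_2)$.

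Third, I would substitute $\alpha',\alpha''$ into the Hilbert-distance formula of Proposition~\ref{prop:Hilberformula} with $\kappa=\tfrac12$ and simplify. The main (minor) obstacle is sign bookkeeping: depending on whether $\lambda>1$ or $\lambda<1$, and on $\|K_1\|_O<1$, it is $\alpha'$ or $\alpha''$ that plays the role of the negative root $\alpha_-$ (and the other of $\alpha_+>1$). However, the cross-ratio inside the logarithm is inverted under the swap $(\alpha_-,\alpha_+)\leftrightarrow(\alpha_+,\alpha_-)$, so taking the absolute value of the log absorbs this ambiguity, producing the single expression claimed in the theorem regardless of the sign of $\lambda-1$. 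A short algebraic simplification then puts the ratio into the form $\frac{1-\|K_1\|_O}{1+\|K_1\|_O}\cdot\frac{\|K_1\|_O(1-\lambda)-(1+\|K_1\|_O)}{\|K_1\|_O(\lambda-1)-(1-\|K_1\|_O)}$, completing the proof.
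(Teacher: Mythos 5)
Your proposal is correct and follows essentially the same route as the paper: write $K_2=\lambda K_1$ (with $\lambda=\tr(K_2)/\tr(K_1)$), use positive homogeneity of $\|\cdot\|_O$ to reduce the Shilov-boundary condition to the scalar equation $|1+\alpha(\lambda-1)|=1/\|K_1\|_O$, read off the two roots $\alpha',\alpha''$, and plug them into the Hilbert-distance formula with $\kappa=\tfrac12$. Your explicit remark that the roles of $\alpha_\pm$ swap according to the sign of $\lambda-1$, and that the cross-ratio inverts under that swap so the outer absolute value absorbs the ambiguity, is a useful clarification that the paper leaves implicit behind the $|\log(\cdot)|$.
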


\subsection{Converting Siegel-Poincar\'e matrices from/to Siegel-Klein matrices}\label{sec:SiegelConversions}

From Eq.~\ref{eq:sksp}, we deduce that we can convert a matrix $K$ in the Siegel-Klein disk to a corresponding matrix $W$
 in the Siegel-Poincar\'e disk, 
and vice versa, as follows:

\begin{itemize}

\item Converting $K$ to $W$: 
We convert a matrix $K$ in the Siegel-Klein model to an equivalent matrix $W$ in the Siegel-Poincar\'e model as follows:
\begin{equation}
C_{K\rightarrow D}(K) = \frac{1}{1+\sqrt{1-\|K\|_O^2}}\ K.
\end{equation}
This conversion corresponds to a {\em radial contraction} with respect to the origin $0$ since $\frac{1}{1+\sqrt{1-\|K\|_O^2}}\leq 1$ (with equality for matrices belonging to the Shilov boundary). 

\item Converting $W$ to $K$: 
We convert a matrix $W$ in the Siegel-Poincar\'e model to an equivalent matrix $K$ in the Siegel-Klein model as follows:
\begin{equation}
C_{D\rightarrow K}(W) =  \frac{2}{1+\|W\|_O^2}\ W.
\end{equation}
This conversion corresponds to a {\em radial expansion} with respect to the origin $0$ since $\frac{2}{1+\|W\|_O^2}\geq 1$ (with equality for matrices on the Shilov boundary). 
\end{itemize}

\begin{Proposition}[Conversions Siegel-Poincar\'e$\Leftrightarrow$Siegel-Klein disk]\label{prop:convPK}
The conversion of a matrix $K$ of the Siegel-Klein model to its equivalent matrix $W$ in the Siegel-Poincar\'e model, and vice-versa, is done by the following radial contraction and expansion functions: 
$C_{K\rightarrow D}(K) = \frac{1}{1+\sqrt{1-\|K\|_O^2}}K$ and $C_{D\rightarrow K}(W) =   \frac{2}{1+|W|_O^2} W$.
\end{Proposition}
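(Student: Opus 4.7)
The plan is to mimic the one-dimensional derivation shown earlier in \S\ref{sec:conversions} (the Poincar\'e$\leftrightarrow$Klein disk conversion), replacing the complex modulus $|\cdot|$ by the operator norm $\|\cdot\|_O$. The underlying idea is that both the Siegel-Poincar\'e and the Siegel-Klein disk models live on the same set $\{W\in\Sym(d,\bbC) : \|W\|_O<1\}$, and in both models the radial segments emanating from the origin $0$ are geodesics. So I would argue that a matrix $W$ in the Siegel-Poincar\'e disk and its Siegel-Klein counterpart $K$ must lie on the same Euclidean ray through $0$, i.e.\ $K=\lambda W$ for some positive scalar $\lambda$, and the scalar must be chosen so that the two distance-to-origin formulas agree.

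More precisely, the first step is to invoke Theorem~\ref{thm:SKdistorigin} and Eq.~\ref{eq:distSiegelDiskO}, and to require
\begin{equation}
\rho_K(0,K)=\rho_D(0,W),\qquad K=\lambda W,\ \lambda>0.
\end{equation}
Since the operator norm is absolutely homogeneous, $\|K\|_O=\lambda\|W\|_O$, so writing $k\defeq\|K\|_O$ and $w\defeq\|W\|_O$ this collapses to the scalar identity
\begin{equation}
\tfrac12\log\frac{1+k}{1-k} \;=\; \log\frac{1+w}{1-w},
\end{equation}
equivalently $\frac{1+k}{1-k}=\bigl(\frac{1+w}{1-w}\bigr)^2$. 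Clearing denominators and simplifying yields the single relation $k(1+w^2)=2w$, i.e.\ $k=\frac{2w}{1+w^2}$. Since the direction vector is $W$ (resp.\ $K$), this immediately gives $K=\frac{2}{1+\|W\|_O^2}W=C_{D\to K}(W)$.

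The second step is to invert the scalar relation $k=\frac{2w}{1+w^2}$ on the admissible range $w\in[0,1)$. Solving the quadratic $kw^2-2w+k=0$ and selecting the root with $w<1$ gives $w=\frac{1-\sqrt{1-k^2}}{k}$, which after rationalizing the numerator becomes $w=\frac{k}{1+\sqrt{1-k^2}}$; this is exactly the scalar behind $C_{K\to D}(K)=\frac{1}{1+\sqrt{1-\|K\|_O^2}}K$. One then verifies directly that $C_{D\to K}\circ C_{K\to D}=\mathrm{id}$ and $C_{K\to D}\circ C_{D\to K}=\mathrm{id}$, which is a purely algebraic check with $k$ and $w$ playing the role of the operator norms.

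The only non-routine point is the justification that corresponding points truly sit on the same ray through the origin, so that the reduction $K=\lambda W$ is legitimate rather than an \emph{ad hoc} postulate. I would handle this by noting that in the Siegel-Poincar\'e disk the geodesic from $0$ through $W$ is the Euclidean segment $\{tW:t\in[0,1)\}$ (this is the $W_1=0$ case of the construction recalled just above \S\ref{sec:SiegelKlein}, where $\Phi_0=\mathrm{id}$), and that in the Siegel-Klein disk geodesics are by construction Euclidean straight lines clipped to the domain; hence both models assign the same oriented geodesic ray to a given unit direction, and equality of the (strictly increasing) distance-to-origin functions along that ray uniquely determines the bijection. Once this alignment is established, the algebra above produces the two stated formulas.
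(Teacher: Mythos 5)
Your proof is correct and follows essentially the same route as the paper. The paper presents the conversion formulas in \S\ref{sec:SiegelConversions} as a direct consequence of Theorem~\ref{thm:SKdistorigin} and Eq.~\ref{eq:distSiegelDiskO} (mirroring the one-dimensional conversion worked out in \S\ref{sec:conversions}), which is precisely your strategy: reduce to the scalar relation $\tfrac{1}{2}\log\frac{1+k}{1-k}=\log\frac{1+w}{1-w}$ with $k=\|K\|_O$ and $w=\|W\|_O$, solve to get $k=\frac{2w}{1+w^2}$ and its inverse $w=\frac{k}{1+\sqrt{1-k^2}}$, then read off the matrix formulas by the homogeneity of the operator norm. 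Your added justification that the corresponding matrices lie on the same Euclidean ray through $0$ (radial segments are geodesics from the origin in both the Siegel-Poincar\'e disk, via $\Phi_0=\mathrm{id}$, and the Siegel-Klein disk, by construction) makes explicit a point the paper leaves implicit, and is a welcome clarification rather than a deviation.
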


Figure~\ref{fig:conversions} illustrates the radial expansion/contraction conversions between the Siegel-Poincar\'e and Siegel-Klein matrices.

\begin{figure}
\centering
\includegraphics[width=0.85\columnwidth]{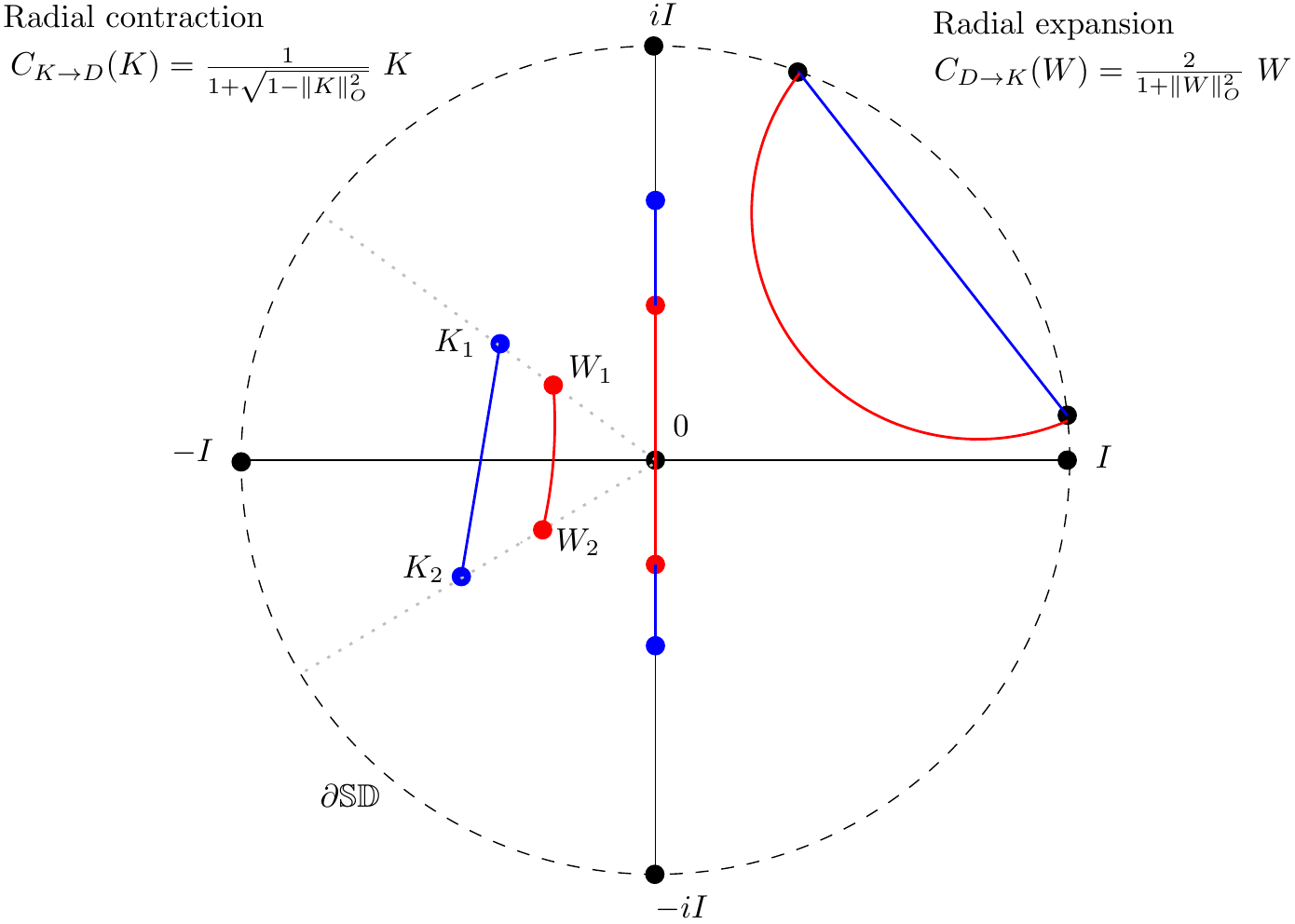}

\caption{Conversions in the Siegel disk domain: Poincar\'e to/from Klein matrices.\label{fig:conversions}}
\end{figure}

The cross-ratio
 $(p, q ; P, Q)=\frac{\|p-P\| \|q-Q\|}{\|p-Q\| \|q-P\|}$ of four collinear points on a line is such that 
$(p, q ; P, Q)=(p, r ; P, Q) \times(r, q ; P, Q)$ whenever $r$ belongs to that line.
By virtue of this cross-ratio property, the (pre)geodesics
 in the Hilbert-Klein disk are Euclidean straight.
Thus we can write the pregeodesics as:
\begin{equation}
\gamma_{K_1,K_2}(\alpha) = (1-\alpha)K_1+\alpha K_2 = K_1+\alpha(K_2-K_1).
\end{equation}

Riemannian geodesics are paths which minimize locally the distance and are parameterized proportionally to the arc-length. 
A pregeodesic is a path which minimizes locally the distance but is not necessarily parameterized proportionally to the arc-length.
For implementing geometric intersection algorithms (e.g., a geodesic with a ball), it is enough to consider pregeodesics.

Another way to get a generic closed-form formula for the Siegel-Klein distance is by using the formula for the Siegel-Poincar\'e disk after converting the matrices to their equivalent matrices in the Siegel-Poincar\'e disk.
We get the following expression:
\begin{eqnarray}\label{eq:SKdist}
\rho_K(K_1,K_2) &=&\rho_D(C_{K\rightarrow D}(K_1),C_{K\rightarrow D}(K_2)),\\
&=& \frac{1}{2}\log\left(\frac{1+\|\Phi_{C_{K\rightarrow D}(K_1)}(C_{K\rightarrow D}(K_2))\|_O}{1-\|\Phi_{C_{K\rightarrow D}(K_1)}(C_{K\rightarrow D}(K_2))\|_O}\right).
\end{eqnarray}

\begin{Theorem}[Formula for the Siegel-Klein distance]\label{thm:SKformulaViaP}
The Siegel-Klein distance between $K_1$ and $K_2$ in the Siegel disk is
$\rho_K(K_1,K_2)=\frac{1}{2}\log\left(\frac{1+\|\Phi_{C_{K\rightarrow D}(K_1)}(C_{K\rightarrow D}(K_2))\|_O}{1-\|\Phi_{C_{K\rightarrow D}(K_1)}(C_{K\rightarrow D}(K_2))\|_O}\right)$.
\end{Theorem}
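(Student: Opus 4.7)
The plan is to chain together two already-established ingredients: the Klein$\leftrightarrow$Poincar\'e conversion of Proposition~\ref{prop:convPK} and the closed-form Kobayashi distance on the Siegel disk recorded in Eq.~\ref{eq:DistSiegelDisk}. Setting $W_i := C_{K\to D}(K_i) = K_i/(1+\sqrt{1-\|K_i\|_O^2})$ for $i=1,2$, the target formula is exactly what one obtains by substituting $W_1,W_2$ into the Siegel-Poincar\'e distance formula, so the real content is the identification $\rho_K(K_1,K_2)=\rho_D(W_1,W_2)$ up to the factor $\kappa=\tfrac{1}{2}$ built into Definition~\ref{def:SiegelKlein}.

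I would establish that identification in two sub-steps. First, I verify the base case $K_1=0$: by Theorem~\ref{thm:SKdistorigin}, $\rho_K(0,K)=\tfrac{1}{2}\log\frac{1+\|K\|_O}{1-\|K\|_O}$, while $\rho_D(0,W)=\log\frac{1+\|W\|_O}{1-\|W\|_O}$ with $\|W\|_O=\|K\|_O/(1+\sqrt{1-\|K\|_O^2})$. These match because of the one-variable algebraic identity
\begin{equation*}
\left(\frac{1+s}{1-s}\right)^{2}=\frac{1+r}{1-r},\qquad s=\frac{r}{1+\sqrt{1-r^{2}}},
\end{equation*}
which, with $t:=\sqrt{1-r^{2}}$, reduces to $(1+t\pm r)^{2}=2(1+t)(1\pm r)$ using $r^{2}+t^{2}=1$. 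Second, I extend from radial pairs to arbitrary pairs by homogeneity: both $\rho_K$ and $\rho_D$ are invariant under their respective automorphism groups acting transitively on $\SD(d)$, and the Siegel translation $\Phi_{W_1}$ appearing on the right-hand side is precisely the tool that sends $(W_1,W_2)$ to $(0,\Phi_{W_1}(W_2))$ on the Siegel-Poincar\'e side, reducing the general case to the base case just established.

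Combining the identification with Eq.~\ref{eq:DistSiegelDisk} immediately yields the stated closed-form expression. The main obstacle I anticipate is the extension step. In dimension $d=1$ this is exactly the classical Klein--Poincar\'e isometry of hyperbolic geometry, but for $d\geq 2$ the Hilbert metric on $\SD(d)$ is a priori distinct from the Kobayashi metric; the cleanest rigorous route is probably to exploit the 1D reduction of the Hilbert cross-ratio along the line $(K_1K_2)$ (Eq.~\ref{eq:HilbertDist1D}), transport that 1D problem through $C_{K\to D}$ into a radial problem inside the Poincar\'e disk, and only there invoke the base case. The delicate point is that $C_{K\to D}$ does not map secant lines to secant lines, so this reduction requires a careful parametrization of the 1D slice, together with checking that the norm $\|\Phi_{W_1}(W_2)\|_O$ is invariant under the isotropy of the origin, which is what allows the radial computation to be lifted back to $(K_1,K_2)$.
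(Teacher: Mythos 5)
Your overall plan follows what the paper itself does --- convert via Proposition~\ref{prop:convPK} and plug into the Kobayashi formula of Eq.~\ref{eq:DistSiegelDisk} --- but the paper offers no argument beyond that assertion, and the two points you flag as delicate are precisely where the argument breaks, and they cannot be repaired. First, the factor $\frac{1}{2}$: your base case proves $\rho_K(0,K)=\log\frac{1+\|W\|_O}{1-\|W\|_O}=\rho_D(0,W)$ for $W=C_{K\rightarrow D}(K)$, with \emph{no} factor $\frac{1}{2}$, whereas the displayed theorem at $K_1=0$ asserts $\rho_K(0,K)=\frac{1}{2}\rho_D(0,W)$; so your base case contradicts rather than verifies the target formula, and ``up to the factor $\kappa=\frac12$'' leaves this unresolved. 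Second, and more seriously, the extension step is not a technicality that a careful parametrization of the 1D slice can fix. Invariance of $\rho_K$ under the projective automorphisms of the convex body $\SD(d)$ and invariance of $\rho_D$ under the symplectic translations $\Phi_{W_1}$ are invariances under two \emph{different} groups; transitivity of each action does not let you transport the base case at the origin to an arbitrary pair, because for that you would need $C_{K\rightarrow D}$ to intertwine the two actions, i.e.\ to be an isometry between $(\SD(d),\rho_K)$ and $(\SD(d),\rho_D)$ --- which is exactly the content of the theorem, and it is false for $d\geq 2$.

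A concrete check on the diagonal slice exhibits the failure. Take $d=2$, $K_1=\diag(a,0)$, $K_2=\diag(0,a)$ with $0<a<1$. The line $K_1+\alpha(K_2-K_1)=\diag(a(1-\alpha),a\alpha)$ meets the Shilov boundary at $\alpha_+=\frac{1}{a}$ and $\alpha_-=-\frac{1-a}{a}$, so Eq.~\ref{eq:SiegelKleinDist} gives $\rho_K(K_1,K_2)=\frac{1}{2}\log\frac{1}{(1-a)^2}=\log\frac{1}{1-a}$ (the same value is returned by the diagonal procedure behind Theorem~\ref{thm:SKdiagformula}). On the other hand, $W_i=C_{K\rightarrow D}(K_i)$ are $\diag(b,0)$ and $\diag(0,b)$ with $b=\frac{a}{1+\sqrt{1-a^2}}$, and $\Phi_{W_1}(W_2)=\diag(-b,b)$ has operator norm $b$, so the right-hand side of the theorem equals $\frac{1}{2}\log\frac{1+b}{1-b}=\frac{1}{2}\arctanh(a)$ (and $\arctanh(a)$ even if the $\frac12$ is dropped), while $\log\frac{1}{1-a}=a+\frac{a^2}{2}+\cdots$ differs from $\arctanh(a)=a+\frac{a^3}{3}+\cdots$ already at second order. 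Hence the identification $\rho_K(K_1,K_2)=\rho_D(C_{K\rightarrow D}(K_1),C_{K\rightarrow D}(K_2))$ is only valid in the radial situation you verified (lines through the origin, where both sides reduce to the same one-dimensional hyperbolic computation, cf.\ Theorem~\ref{thm:SKdistorigin}); off such lines the Hilbert (Siegel-Klein) and Kobayashi (Siegel-Poincar\'e) geometries of $\SD(d)$ genuinely differ when $d\geq 2$, so the isotropy/invariance route you sketch --- or any other route --- cannot close this gap for the statement as written.
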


The isometries in Hilbert geometry have been studied in~\cite{speer2014isometries}.

We now turn our attention to a special case where we can report an efficient and exact linear-time algorithm for calculating the Siegel-Klein distance.

\subsection{Siegel-Klein distance between diagonal matrices}\label{sec:SK:diagonal}
 Let $K_\alpha=K_1+\alpha K_{21}$ with $K_{21}=K_2-K_1$.
When solving for the general case, we seek for the extremal values of $\alpha$ such that:
\begin{eqnarray}
I-\overline{K}_\alpha K_\alpha &\succ& 0,\\
I-(\barK_1+\alpha \barK_{21})(K_1+\alpha K_{21}) &\succ& 0,\\
I-(\barK_1 K_1 +\alpha (\barK_1 K_{21} +\barK_{21}K_1)+\alpha^2 \barK_{21} K_{21}  ) &\succ& 0,\\
\barK_1 K_1 +\alpha (\barK_1 K_{21} +\barK_{21}K_1)+\alpha^2 \barK_{21} K_{21} &\prec& I.
\end{eqnarray}
This last equation is reminiscent to a Linear Matrix Inequality~\cite{el2000advances} (LMI, i.e., $\sum_i y_iS_i\succ 0$ with $y_i\in\bbR$ and $S_i\in\Sym(d,\bbR)$ where the coefficients $y_i$ are however {\em linked} between them).

Let us consider the special case of diagonal matrices corresponding to the polydisk domain: 
$K=\diag(k_1,\ldots,k_d)$ and $K'=\diag(k_1',\ldots,k_d')$ of the Siegel disk domain.

First, let us start with the simple case $d=1$, i.e.,  the Siegel disk $\SD(1)$ which is the complex open unit disk $\{ k\in\bbC\ :\ \bar{k}{k}<1\}$.
Let $k_\alpha=(1-\alpha)k_1+\alpha k_2=k_1+\alpha k_{21}$ with $k_{21}=k_2-k_1$. 
We have
$\bar{k}_\alpha k_\alpha=a\alpha^2+b\alpha+c$ with 
$a=\bar{k}_{21}k_{21}$, $b=\bar{k}_1k_{21}+\bar{k}_{21}k_1$ and $c=\bar{k}_1k_1$.
To find the two intersection points of line $(k_1k_2)$ with the boundary of $\SD(1)$, we need to solve $\bar{k}_\alpha k_\alpha=1$.
This amounts to solve an {\em ordinary quadratic equation} since all coefficients $a$, $b$, and $c$ are provably reals.
Let $\Delta=b^2-4ac$ be the discriminant ($\Delta>0$ when $k_1\not=k_2$).
We get the two solutions $\alpha_m=\frac{-b-\sqrt{\Delta}}{2a}$ and $\alpha_M=\frac{-b+\sqrt{\Delta}}{2a}$, and apply the 1D formula for the Hilbert distance: 
\begin{equation} 
\rho_K(k_1,k_2)= \frac{1}{2} \log \left(\frac{\alpha_M(1-\alpha_m)}{|\alpha_m|(\alpha_M-1)}\right).
\end{equation}
Doing so, we obtain a formula equivalent to Eq.~\ref{eq:DistKlein1D}.

For diagonal matrices with $d>1$, we get the following system of $d$ inequalities:
\begin{equation}
\alpha^2_i \left(\bar{k}_i'-\bar{k}_i\right)\left({k}_i'-{k}_i\right)+\alpha_i \left(\bar{k}_i(k_i'-k_i)+k_i(\bar{k}_i'-\bar{k}_i) \right)+
\bar{k}_ik_i-1  \leq 0, \forall i\in\{1,\ldots, d\}.
\end{equation}

For each inequality, we solve the quadratic equation as in the 1d case above, yielding two solutions $\alpha_i^-$ and $\alpha_i^+$.
Then we satisfy all those constraints by setting
\begin{eqnarray}
\alpha_- &=& \max_{i\in\{1,\ldots,d\}} \alpha_i^-,\\
\alpha_+ &=& \min_{i\in\{1,\ldots,d\}} \alpha_i^+,
\end{eqnarray}
and we compute the Hilbert distance:
\begin{equation} 
\rho_K(K_1,K_2)= \frac{1}{2} \log \left(\frac{\alpha_+(1-\alpha_-)}{|\alpha_-|(\alpha_+-1)}\right).
\end{equation}

\label{proofKlein1d}
\begin{Theorem}[Siegel-Klein distance for diagonal matrices]\label{thm:SKdiagformula}
The Siegel-Klein distance between two diagonal matrices in the Siegel-Klein disk can be calculated exactly in linear time.
\end{Theorem}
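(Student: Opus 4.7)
The plan is to exploit the fact that a diagonal $K_1$ and a diagonal $K_2$ span a pencil $K_\alpha := K_1+\alpha(K_2-K_1)$ consisting entirely of diagonal matrices, which makes the spectrum of $I-\overline{K}_\alpha K_\alpha$ immediately readable. The Shilov boundary condition $\|K_\alpha\|_O=1$ and the domain condition $I-\overline{K}_\alpha K_\alpha\succ 0$ therefore decouple into the $d$ independent scalar conditions $|k_i+\alpha(k_i'-k_i)|^2<1$, one per diagonal entry. In particular, the two boundary values $\alpha_-$ and $\alpha_+$ along the line $(K_1K_2)$ are determined by the intersection of $d$ intervals on $\bbR$, each cut out by a single one-dimensional problem of the type already solved in the $d=1$ case immediately preceding the theorem.

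First I would verify explicitly that for each $i$ the equation $|k_i+\alpha(k_i'-k_i)|^2=1$ expands to a quadratic $a_i\alpha^2+b_i\alpha+c_i-1=0$ whose coefficients $a_i=|k_i'-k_i|^2$, $b_i=\bar{k}_i(k_i'-k_i)+k_i(\bar{k}_i'-\bar{k}_i)$, $c_i=|k_i|^2$ are real, and that when $k_i\ne k_i'$ its discriminant is strictly positive, yielding two real roots $\alpha_i^-<\alpha_i^+$ with $\alpha_i^-\le 0$ and $\alpha_i^+\ge 1$ (because $K_1,K_2\in\SD(d)$ already satisfy the strict inequality at $\alpha=0,1$). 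Coordinates with $k_i=k_i'$ contribute no constraint and are skipped. Each of these quadratics is solved in $O(1)$ time by the standard closed-form formula, so the entire local step takes $O(d)$.

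Next, since the feasible set is the intersection $\bigcap_{i=1}^d [\alpha_i^-,\alpha_i^+]$ of closed intervals straddling $[0,1]$, the extremal parameters are
\begin{equation*}
\alpha_-=\max_{i\in\{1,\ldots,d\}} \alpha_i^-,\qquad \alpha_+=\min_{i\in\{1,\ldots,d\}} \alpha_i^+,
\end{equation*}
computable in $O(d)$ time by a single scan. These are precisely the two intersection parameters of the line $(K_1K_2)$ with the Shilov boundary $\partial\SD(d)$ used in the Hilbert distance formula of Proposition~\ref{prop:Hilberformula}. Substituting into Eq.~\ref{eq:SiegelKleinDist} yields $\rho_K(K_1,K_2)=\frac{1}{2}\log\bigl(\alpha_+(1-\alpha_-)/(|\alpha_-|(\alpha_+-1))\bigr)$ in constant additional time, giving the advertised $O(d)$ overall cost and an exact (not approximate) evaluation.

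The only potentially delicate point, and the one I would verify carefully in writing the full proof, is the sign/ordering claim $\alpha_i^-\le 0<1\le\alpha_i^+$ for every coordinate; this is what guarantees that the intersection of the intervals actually contains the segment $[0,1]$ and hence that $\alpha_-$ and $\alpha_+$ identify the correct pair of boundary matrices on either side of $K_1,K_2$ along the line. This follows from $|k_i|<1$ and $|k_i'|<1$ (so the quadratic $a_i\alpha^2+b_i\alpha+c_i-1$ is negative at both $\alpha=0$ and $\alpha=1$, forcing its two real roots to lie outside $[0,1]$), but it deserves to be stated explicitly so that the argument does not silently assume which of $\alpha_i^\pm$ is the lower root.
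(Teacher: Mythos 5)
Your proposal is correct and follows essentially the same route as the paper: reduce to the $d$ per-coordinate real quadratics $|k_i+\alpha(k_i'-k_i)|^2=1$, take $\alpha_-=\max_i\alpha_i^-$ and $\alpha_+=\min_i\alpha_i^+$, and substitute into the Hilbert distance formula, all in $O(d)$ time. In fact you supply two details the paper leaves implicit — the degenerate coordinates with $k_i=k_i'$ and the verification that each quadratic is negative on $[0,1]$ so that $\alpha_i^-\le 0<1\le\alpha_i^+$ — which strengthens rather than changes the argument.
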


Notice that the proof extends to triangular matrices as well.

When the matrices are non-diagonal, we have to solve analytically the equation:

\begin{eqnarray}
&&\max\ |\alpha|,\\
\mbox{such that}&&\alpha^2 S_2+\alpha S_1+S_0 \prec 0,
\end{eqnarray}
with the following  {\em Hermitian} matrices (with all real eigenvalues):
\begin{eqnarray}
S_2 &=& \barK_{21} K_{21} =S_2^H,\\
S_1 &=& \barK_1 K_{21} +\barK_{21}K_1=S_1^H,\\
S_0 &=& \barK_1 K_1-I=S_0^H.
\end{eqnarray}

Although $S_0$ and $S_2$ commute, it is not necessarily the case for $S_0$ and $S_1$, or $S_1$ and $S_2$.

When $S_0$, $S_1$ and $S_2$ are  {\em simultaneously diagonalizable} via {\em congruence}~\cite{bustamante2019solving}, the optimization problem becomes:
\begin{eqnarray}
&&\max\ |\alpha|,\\
\mbox{such that}&&\alpha^2 D_2+\alpha D_1  \prec -D_0,
\end{eqnarray}
 where $D_i=P^\top S_iP$ for some $P\in\GL(d,\bbC)$, 
and we apply Theorem~\ref{thm:SKdiagformula}.
The same result applies for simultaneously diagonalizable matrices $S_0$, $S_1$ and $S_2$ via {\em similarity}: $D_i=P^{-1}S_i P$ with $P\in\GL(d,\bbC)$.


Notice that the Hilbert distance (or its squared distance) is {\em not} a separable distance, even in the case of diagonal matrices.
(But recall that the {\em squared} Siegel-Poincar\'e distance in the upper plane is separable for diagonal matrices.)

When $d=1$, we have
\begin{equation}
\rho_U(z_1,z_2)=\rho_D(w_1,w_2)=\rho_K(k_1,k_2).
\end{equation}

We now investigate a guaranteed fast  scheme for approximating the Siegel-Klein distance in the general case.

\subsection{A fast guaranteed approximation of the Siegel-Klein distance}\label{sec:SKdistapprox}
In the general case, we use the bisection approximation algorithm which is a {\em geometric approximation technique} that requires to only calculate  operator norms (and not the square root matrices required in the functions $\Phi_{\cdot}(\cdot)$ for calculating the Siegel distance in the disk domain).

We have the following key property of the Hilbert distance:

\begin{Property}[Bounding Hilbert distance]\label{prop:bh}
Let $\Omega_+\subset \Omega\subset \Omega_-$ be strictly nested open convex bounded domains.
Then we have the following inequality for the corresponding Hilbert distances:
\begin{equation}
H_{\Omega_+,\kappa}(p,q) \geq H_{\Omega,\kappa}(p,q) \geq H_{\Omega_-,\kappa}(p,q).
\end{equation}
\end{Property}

\begin{figure}
\centering
\includegraphics[width=0.75\columnwidth]{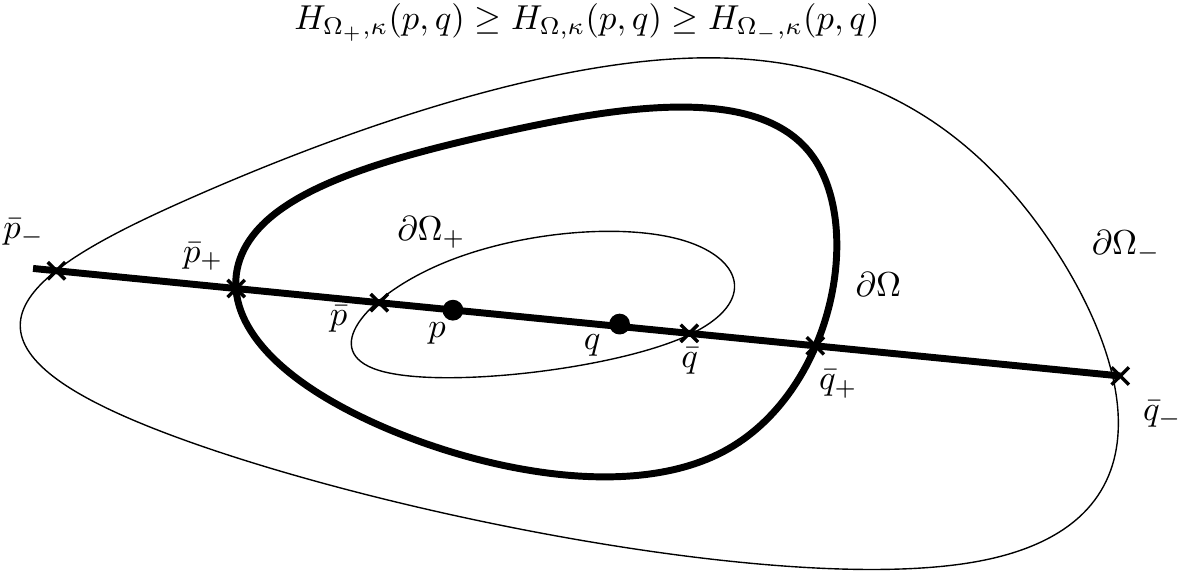}

\caption{Inequalities of the Hilbert distances induced by nested bounded open convex domains.\label{fig:nesteddomains}}
\end{figure}

Figure~\ref{fig:nesteddomains} illustrates the Property~\ref{prop:bh} of Hilbert distances corresponding to nested domains.
Notice that when $\Omega_-$ is a large enclosing ball of $\Omega$ with radius increasing to infinity, we have $\alpha_-\simeq \alpha_+$, and therefore the Hilbert distance tends to zero.

\begin{figure}
\centering
\includegraphics[width=0.65\columnwidth]{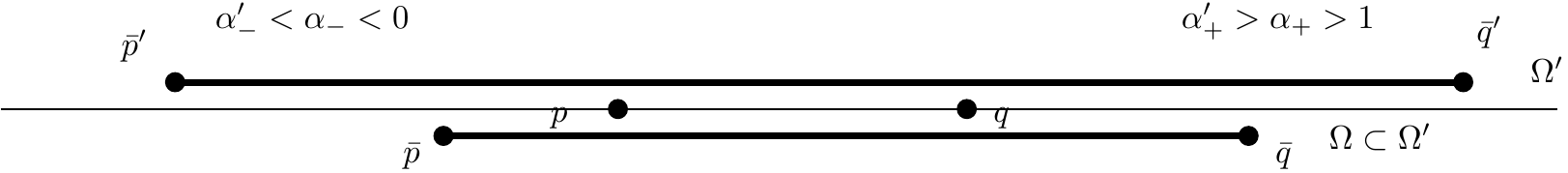}

\caption{Comparison of the Hilbert distances $H_{\Omega,\kappa}(p,q)$ and $H_{\Omega',\kappa}(p,q)$ induced by nested open interval domains $\Omega\subset\Omega'$: $H_{\Omega,\kappa}(p,q)\geq H_{\Omega',\kappa}(p,q)$.\label{fig:twonesteddomains}}
\end{figure}

\begin{proof}
Recall that $H_{\Omega,\kappa}(p,q)=H_{\Omega\cap (pq),\kappa}(p,q)$, i.e., the  Hilbert distance with respect to domain $\Omega$ can be calculated as an equivalent $1$-dimensional Hilbert distance by considering the open bounded (convex) {\em interval} $\Omega\cap (pq)=[\bar{p}\bar{q}]$.
Furthermore, we have $[\bar{p}\bar{q}]\subset [\bar{p}'\bar{q}']=\Omega'\cap (pq)$ (with set containment $\Omega\subset\Omega'$).
Therefore let us consider the 1D case as depicted in Figure~\ref{fig:twonesteddomains}.
Let us choose $p<q$ so that we have $\bar{p}'\leq \bar{p}<p<q<\bar{q}\leq \bar{q}'$.
In 1D, the Hilbert distance is expressed as
\begin{equation}
H_{\Omega,\kappa}(p,q) :=  \kappa \log \left(\frac{|\bar{q}-p|\ |\bar{p}-q|}{|\bar{q}-q|\ |\bar{p}-p|}\right),
\end{equation}
for a prescribed constant $\kappa>0$.
Therefore it follows that
\begin{equation}
H_{\Omega,\kappa}(p,q)-H_{\Omega',\kappa}(p,q) :=  \kappa \log \left(
\frac{|\bar{q}-p|\ |\bar{p}-q|}{|\bar{q}-q|\ |\bar{p}-p|} \times 
\frac{|\bar{q}'-q|\ |\bar{p}'-p|}{|\bar{q}'-p|\ |\bar{p}'-q|}
\right).
\end{equation}

We can rewrite the argument of the logarithm as follows:
\begin{eqnarray}
\frac{|\bar{q}-p|\ |\bar{p}-q|}{|\bar{q}-q|\ |\bar{p}-p|} \times 
\frac{|\bar{q}'-q|\ |\bar{p}'-p|}{|\bar{q}'-p|\ |\bar{p}'-q|} &=&
 \frac{(\bar{q}-p)(\bar{q}'-q)}{(\bar{q}-q)(\bar{q}'-p)} \times \frac{(p-\bar{p}')(q-\bar{p})}{(p-\bar{p})(q-\bar{p}')},\\
&=&\CR(\bar{q},\bar{q}';p,q) \times \CR(p,q;\bar{p}',\bar{p}),
\end{eqnarray}
with 
\begin{equation}
\CR(a,b;c,d) = \frac{|a-c|\ |b-d|}{|a-d|\ |b-c|} = \frac{ \frac{|a-c|}{|b-c|} }{ \frac{|a-d|}{|b-d|} }.
\end{equation}
Since $\bar{p}'\leq \bar{p}<p<q<\bar{q}\leq \bar{q}'$, we have $\CR(\bar{q},\bar{q}';p,q)\geq 1$ and $\CR(p,q;\bar{p}',\bar{p})\geq 1$, see~\cite{richter2011perspectives}.
Therefore we deduce that
$H_{\Omega,\kappa}(p,q)\geq H_{\Omega',\kappa}(p,q)$ when $\Omega\subset\Omega'$.
\end{proof}

Therefore the bisection search for finding the values of $\alpha_-$ and $\alpha_+$ yields both lower and upper bounds on 
the exact Siegel-Klein distance as follows:
Let $\alpha_-\in (l_-,u_-)$ and $\alpha_+\in (l_+,u_+)$ where $l_-$, $u_-$, $l_+$ , $u_+$ are real values defining the extremities of the intervals.
Using Property~\ref{prop:bh}, we get the following theorem:

\begin{Theorem}[Lower and upper bounds on the Siegel-Klein distance]\label{prop:LBboundSK}
The Siegel-Klein distance between two matrices $K_1$ and $K_2$ of the Siegel disk is bounded as follows:
\begin{equation}
\rho_K(l_-,u_+)  \leq \rho_K(K_1,K_2) \leq \rho_K(u_-,l_+), 
\end{equation}
where
\begin{equation} 
\rho_K(\alpha_m,\alpha_M) := \frac{1}{2} \log \left(\frac{\alpha_M(1-\alpha_m)}{|\alpha_m|(\alpha_M-1)}\right).
\end{equation}
\end{Theorem}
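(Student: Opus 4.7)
The plan is to reduce the whole statement to a one-dimensional question on the line $(K_1K_2)$ and then invoke Property~\ref{prop:bh} on three nested open intervals. First I would use the identity $H_{\Omega,\kappa}(p,q) = H_{\Omega\cap(pq),\kappa}(p,q)$ established in \S\ref{sec:HG} to replace the bounded convex domain $\SD(d)$ by its intersection with the complex line $\{K_1+\alpha(K_2-K_1) : \alpha\in\bbR\}$. In the affine parameter $\alpha$, this intersection becomes exactly the open interval $(\alpha_-,\alpha_+)$ containing $0$ (parameter of $K_1$) and $1$ (parameter of $K_2$). The target $\rho_K(K_1,K_2)$ is then a Hilbert distance on this $1$-D interval with $\kappa=\tfrac12$, and by Proposition~\ref{prop:Hilberformula} equals $\tfrac12\log\bigl(\alpha_+(1-\alpha_-)/(|\alpha_-|(\alpha_+-1))\bigr)$, which is precisely the $\rho_K(\alpha_m,\alpha_M)$ expression appearing in the statement when one substitutes $\alpha_m=\alpha_-$ and $\alpha_M=\alpha_+$.

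Next I would unpack what the bisection search delivers, namely the strict ordering
\begin{equation*}
l_- \;<\; \alpha_- \;<\; u_- \;\leq\; 0 \;<\; 1 \;\leq\; l_+ \;<\; \alpha_+ \;<\; u_+ ,
\end{equation*}
where the middle weak inequalities hold as soon as the initial bracket is chosen so that $u_-<0<1<l_+$ (a mild hypothesis that is preserved by bisection). From these inequalities the three open intervals $(u_-,l_+)$, $(\alpha_-,\alpha_+)$, and $(l_-,u_+)$ are nested,
\begin{equation*}
(u_-,l_+) \;\subset\; (\alpha_-,\alpha_+) \;\subset\; (l_-,u_+),
\end{equation*}
and each one contains the parameters $0$ and $1$ of $K_1$ and $K_2$ strictly in its interior, so the Hilbert distance is well defined in all three.

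Then I would apply Property~\ref{prop:bh} to this chain of nested $1$-D domains with $\kappa=\tfrac12$. Taking $\Omega_+=(u_-,l_+)$ and $\Omega_-=(l_-,u_+)$ immediately gives
\begin{equation*}
H_{(u_-,l_+),\frac12}(0,1) \;\geq\; H_{(\alpha_-,\alpha_+),\frac12}(0,1) \;\geq\; H_{(l_-,u_+),\frac12}(0,1),
\end{equation*}
and rewriting each term through the formula of Proposition~\ref{prop:Hilberformula} as $\rho_K(\alpha_m,\alpha_M)$ with the appropriate bracketing pair yields the announced sandwich
$\rho_K(l_-,u_+) \leq \rho_K(K_1,K_2) \leq \rho_K(u_-,l_+)$.

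I do not expect any serious obstacle: the non-routine ingredient, namely the monotonicity of the Hilbert metric under domain inclusion, is already done in Property~\ref{prop:bh}. The only care that needs to be exercised is the bookkeeping of signs and the verification that the bisection brackets remain admissible (i.e.\ $u_-<0$ and $l_+>1$), which is why I would pin down the parameter ordering explicitly before invoking the property.
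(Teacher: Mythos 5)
Your proposal is correct and follows essentially the same route as the paper: reduce to the one-dimensional domain $\Omega\cap(K_1K_2)$, observe that the bisection brackets give the nested intervals $(u_-,l_+)\subset(\alpha_-,\alpha_+)\subset(l_-,u_+)$, and invoke the monotonicity of the Hilbert distance under domain inclusion (Property~\ref{prop:bh}) together with the formula of Proposition~\ref{prop:Hilberformula}. Your explicit bookkeeping of the ordering $l_-<\alpha_-<u_-\leq 0<1\leq l_+<\alpha_+<u_+$ (so that $K_1,K_2$ stay inside the inner interval) is a point the paper leaves implicit, but the argument is the same.
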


Figure~\ref{fig:LBUBHilbertKlein} depicts the guaranteed lower and upper bounds obtained by performing the bisection search for approximating the point $\bar{K}_1\in (\bar{K}_1'',\bar{K}_1')$ and the points $\bar{K}_2\in (\bar{K}_2',\bar{K}_2'')$.

We have:
\begin{equation}
\CR(\bar{K}_1',K_1;K_2,\bar{K}_2') \geq \CR(\bar{K}_1,K_1;K_2,\bar{K}_2) \geq \CR(\bar{K}_1'',K_1;K_2,\bar{K}_2''),
\end{equation}
where $\CR(a,b;c,d) = \frac{\|a-c\| \|b-d\|}{\|a-d\| \|b-c\|}$ denotes the cross-ratio. 
Hence we have
\begin{equation}
H_{\Omega',\frac{1}{2}}(K_1,K_2)  \geq \rho_K(K_1,K_2) \geq H_{\Omega'',\frac{1}{2}}(K_1,K_2).
\end{equation}

Notice that the approximation of the Siegel-Klein distance by line bisection requires only to calculate an operator norm $\|M\|_O$ at each step: This involves calculating the smallest and largest eigenvalues of $M$, or the largest eigenvalue of $M\bar{M}$.
To get a $(1+\epsilon)$-approximation, we need to perform $O(\log\frac{1}{\epsilon})$ dichotomic steps.
This yields a fast method to approximate the Siegel-Klein distance compared with the costly exact calculation of the Siegel-Klein distance of Eq.~\ref{eq:SKdist} which requires to calculate $\Phi_{\cdot}(\cdot)$ functions: This involves the calculation of a square root of a complex matrix.
Furthermore, notice that the operator norm can be numerically approximated using a Lanczos's power iteration scheme~\cite{LanczosApprox-1992,higham2010computing} (see also~\cite{li2016tight}).

\begin{figure}
\centering
\includegraphics[width=0.65\columnwidth]{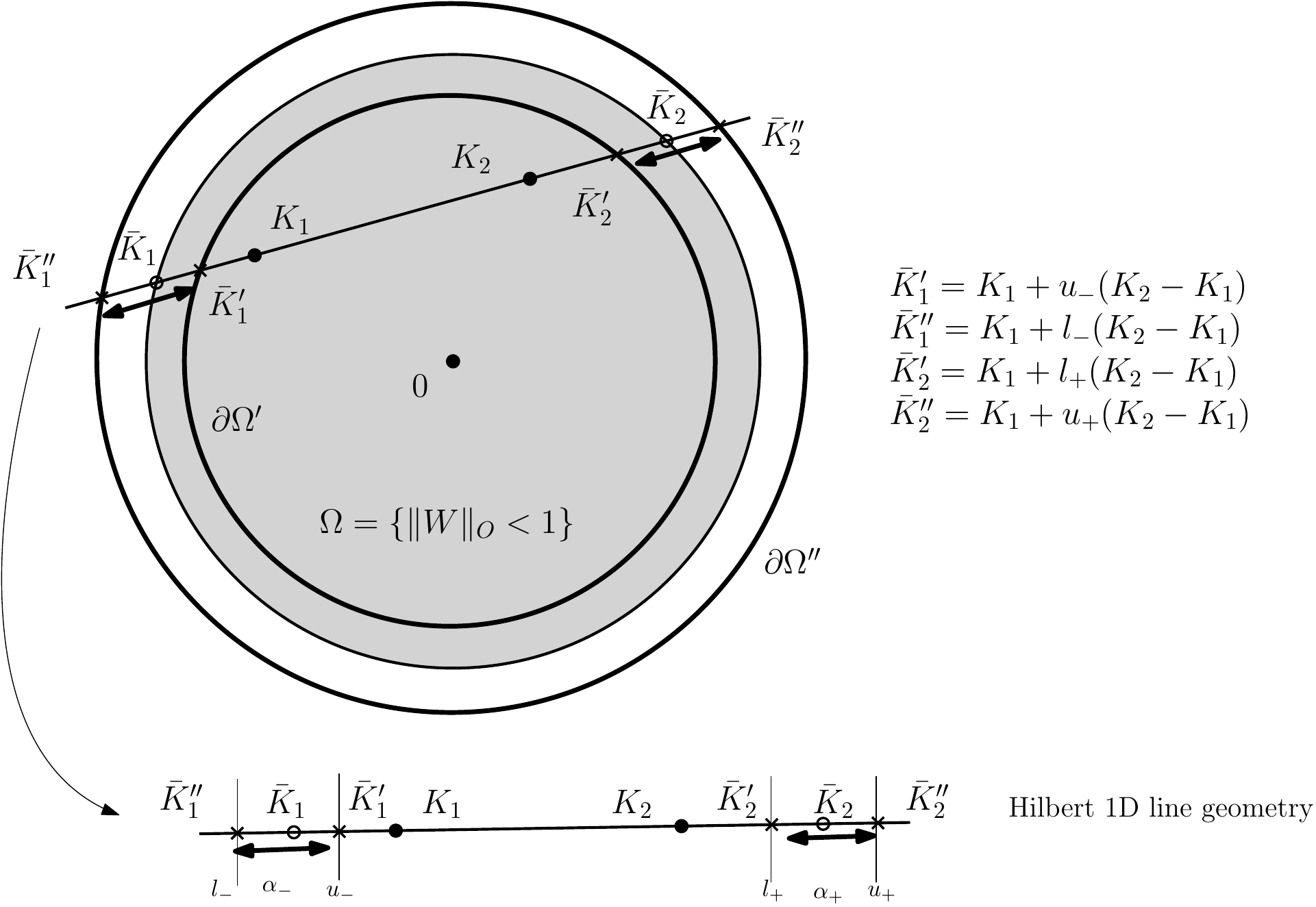}

\caption{Guaranteed lower and upper bounds for the Siegel-Klein distance by considering nested open matrix balls.\label{fig:LBUBHilbertKlein}}
\end{figure}

\subsection{Hilbert-Fr\"obenius distances and fast simple bounds on the Siegel-Klein distance}\label{sec:HF}

Let us notice that although the Hilbert distance does {\em not} depend on the chosen norm in the vector space,  the Siegel complex ball $\SD(d)$ 
{\em is} defined according to the operator norm.
In a finite-dimensional vector space, all norms are said ``equivalent'': 
That is, given two norms $\|\cdot\|_a$ and $\|\cdot\|_b$ of vector space $X$, there exists positive constants $c_1$ and $c_2$ such that
\begin{equation}
 c_1\|x\|_a \leq  \|x\|_a \leq  c_2\|x\|_b,\ \forall x\in X.
\end{equation} 

In particular, this property holds for the operator norm and Fr\"obenius norm of finite-dimensional complex matrices with positive constants $c_d$, $C_d$, $c_d'$ and $C_d'$ depending on the dimension $d$ of the square matrices:
\begin{eqnarray}
 c_d \|M\|_O  &\leq  \|M\|_F \leq&  C_d \|M\|_O,\quad \forall M\in M(d,\bbC),\\
 c_d' \|M\|_F  &\leq  \|M\|_O \leq&  C_d' \|M\|_F,\quad \forall M\in M(d,\bbC).
\end{eqnarray} 
As mentioned in the introduction, we have $\|M\|_O \leq \|M\|_F$.

Thus the Siegel ball domain $\SD(d)$ may be enclosed by an open Fr\"obenius ball $\FD\left(d,\frac{1}{(1+\epsilon)c_d}\right)$ (for any $\epsilon>0$)  with
 \begin{equation}
\FD(d,r) \eqdef \left\{
 M\in M(d,\bbC)\ :\ \|M\|_F < r
\right\}.
\end{equation}

Therefore we have 
\begin{equation}
 H_{\FD\left(d,\frac{1}{c_d}\right),\frac{1}{2}} (K_1,K_2) \leq \rho_K(K_1,K_2),
\end{equation}
where $H_{\FD(d,r),\frac{1}{2}}$ denotes the {\em Fr\"obenius-Klein distance}, i.e., the Hilbert distance induced by the Fr\"obenius balls $\FD(d,r)$ with constant $\kappa=\frac{1}{2}$.  

Now, we can calculate in closed-form the Fr\"obenius-Klein distance by computing the two intersection points of the line $(K_1K_2)$ 
with the Fr\"obenius ball $\FD(d,r)$. This amounts to solve an ordinary quadratic equation $\|K_1+\alpha(K_2-K_1)\|^2_F=r$ for parameter $\alpha$:
\begin{equation}
\|K_{21}\|_F^2 \alpha^2+ \left(\sum_{i,j} K_{21}^{i,j}\bar{K}_1^{i,j}+K_1^{i,j}\bar{K}_{21}^{i,j} \right)\alpha + (\|K_{1}\|_F^2-r) =0,
\end{equation}
where $K^{i,j}$ denotes the coefficient of matrix $K$ at row $i$ and column $j$.
Notice that $\left(\sum_{i,j} K_21^{i,j}\bar{K}_1^{i,j}+K_1^{i,j}\bar{K}_{21}^{i,j} \right)$ is a real.
Once $\alpha_-$ and $\alpha_+$ are found, we apply the 1D formula of the Hilbert distance of Eq.~\ref{eq:HilbertDist1D}.

We summarize the result as follows:

\begin{Theorem}[Lower bound on Siegel-Klein distance]\label{thm:lbskdist}
The Siegel-Klein distance is lower bounded by the Fr\"obenius-Klein distance for the unit complex Fr\"obenius ball, and it can be calculated in $O(d^2)$ time.
\end{Theorem}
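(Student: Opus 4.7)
The plan is to combine the monotonicity of Hilbert distances under set inclusion (Property~\ref{prop:bh}) with the equivalence of the operator and Fr\"obenius matrix norms, and then to count the arithmetic operations in the resulting one-dimensional calculation.

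First, I would fix a constant $C_d>0$ such that $\|M\|_F\leq C_d\|M\|_O$ holds for every $M\in M(d,\bbC)$; as noted in the excerpt just before the theorem, such a $C_d$ exists by the equivalence of norms on the finite-dimensional space $M(d,\bbC)$, and one can take $C_d=\sqrt d$ since the Fr\"obenius norm is the $\ell_2$-norm of the singular values while the operator norm is the largest singular value. This inequality gives the set-containment $\SD(d)\subset \FD(d,C_d)$: indeed, if $\|M\|_O<1$ then $\|M\|_F<C_d$, so any point of the Siegel ball also lies inside the Fr\"obenius ball of radius $C_d$ (and more generally $1/c_d$ with the notation of the excerpt).

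Second, I would invoke Property~\ref{prop:bh} with $\Omega=\SD(d)$ and $\Omega_-=\FD(d,C_d)$, which immediately yields
\[
\rho_K(K_1,K_2)\;=\;H_{\SD(d),\frac{1}{2}}(K_1,K_2)\;\geq\;H_{\FD(d,C_d),\frac{1}{2}}(K_1,K_2),
\]
i.e.\ the Siegel-Klein distance is bounded below by the Fr\"obenius-Klein distance, as claimed.

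Third, for the complexity, I would just follow the computation outlined in the paragraph preceding the theorem. The two intersection parameters $\alpha_-<0$ and $\alpha_+>1$ of the line $(K_1K_2)$ with $\partial\FD(d,C_d)$ are the roots of the quadratic $\|K_1+\alpha(K_2-K_1)\|_F^2=C_d^2$, whose three real coefficients are sums over the $d^2$ matrix entries of $K_1$ and $K_{21}=K_2-K_1$ and hence assemble in $O(d^2)$ additions and multiplications. Solving the quadratic and plugging $(\alpha_-,\alpha_+)$ into the one-dimensional Hilbert formula of Eq.~\ref{eq:HilbertDist1D} is then $O(1)$, so the total cost is $O(d^2)$.

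There is essentially no hard step: the monotonicity of the Hilbert distance does all the work, and the complexity count is bookkeeping. The only minor thing to verify along the way is that the middle coefficient $\sum_{i,j}\bigl(K_{21}^{i,j}\bar K_1^{i,j}+K_1^{i,j}\bar K_{21}^{i,j}\bigr)$ is real (so that the quadratic has real roots separated by the interval $[0,1]$), but this is immediate since the expression equals $2\,\Re\,\tr(K_{21}^{\,}K_1^{H})$, which avoids any need to compute the costly operator norm or a matrix square root.
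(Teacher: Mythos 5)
Your argument is correct and follows essentially the same route as the paper: use the norm equivalence $\|M\|_F \leq C_d\|M\|_O$ (with $C_d=\sqrt{d}$) to obtain the containment $\SD(d)\subset\FD(d,\sqrt{d})$, apply the nested-domain monotonicity of the Hilbert metric (Property~\ref{prop:bh}) to conclude $\rho_K \geq H_{\FD(d,\sqrt{d}),\frac{1}{2}}$, and then observe that the Fr\"obenius--Klein distance reduces to a scalar quadratic whose three real coefficients are $O(d^2)$ sums of matrix entries. Your remarks that the middle coefficient equals $2\,\Re\,\tr(K_{21}K_1^H)$ (hence is real) and that the boundary condition should be $\|K_1+\alpha K_{21}\|_F^2=C_d^2$ (i.e.\ equated to the \emph{square} of the radius, fixing a small typo in the paper's displayed quadratic) are welcome refinements of the paper's sketch.

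One point deserves emphasis when comparing your proof to the theorem statement. The theorem speaks of the ``unit complex Fr\"obenius ball,'' but because $\|M\|_O\leq\|M\|_F$, the unit Fr\"obenius ball $\FD(d,1)$ is \emph{contained in} $\SD(d)$, not the other way around; by Property~\ref{prop:bh}, $H_{\FD(d,1),\frac{1}{2}}$ is therefore an \emph{upper} bound on $\rho_K$, not a lower bound, and the theorem as literally phrased would be false for $d\geq 2$. The lower bound requires the \emph{enclosing} Fr\"obenius ball of radius $C_d=\sqrt{d}$ (the paper's $1/c_d$, which only encloses $\SD(d)$ if one deliberately weakens the constant in $c_d\|M\|_O\leq\|M\|_F$ to $c_d=1/\sqrt{d}$ rather than the tight $c_d=1$). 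You quietly carry the correct radius $C_d$ throughout, so your proof is sound, but what it establishes is ``$\rho_K$ is lower bounded by the Fr\"obenius--Klein distance of the radius-$\sqrt{d}$ Fr\"obenius ball,'' which is what the theorem must be read as intending.
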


\section{The smallest enclosing ball in the SPD manifold and in the Siegel spaces}\label{sec:SEB}

The goal of this section is to compare two implementations of a generalization of the Badoiu and Clarkson's algorithm~\cite{badoiu2003smaller} to approximate the Smallest Enclosing Ball (SEB) of a set of complex matrices:
The implementation using the Siegel-Poincar\'e disk (with respect to the Kobayashi distance $\rho_D$), and the implementation using the Siegel-Klein disk (with respect to the Siegel-Klein distance $\rho_K$).

In general, we may encode a pair of features $(S,P)\in\Sym(d,\bbR)\times\bbP_{++}(d,\bbR)$ in applications 
as a Riemann matrix $Z(S,P):=S+iP$, and consider the underlying geometry of the Siegel upper space. 
For example, anomaly detection of time-series maybe considered 
by considering $(\dot\Sigma(t),\Sigma(t))$ where $\Sigma(t)$ is the covariance matrix at time $t$ and $\dot\Sigma(t)\simeq \frac{1}{\dt}(\Sigma(t+\mathrm{d}t)-\Sigma(t))$ is the approximation of the derivative of the covariance matrix (a symmetric matrix) for a small prescrived 
value of $\mathrm{d}t$.

The generic  Badoiu and Clarkson's algorithm~\cite{badoiu2003smaller} (BC algorithm) for a set $\{p_1,\ldots, p_n\}$ of $n$   points in a metric space $(X,\rho)$ is described as follows:

\begin{itemize}
	\item Initialization: Let $c_1=p_1$ and $l=1$
	\item Repeat $L$ times:
	\begin{itemize}
	\item Calculate the farthest point: $f_l=\arg\min_{i\in[d]}\ \rho(c_{l},p_i)$.
	
	\item Geodesic cut: Let $c_{l+1}=c_{l}\#_{t_l} f_l$, where
	$p\#_{t_l} q$ is the point which satisfies 
	\begin{equation}
	\rho(p,p\#_{t_l}^X q) = t_l \rho(p,q).
	\end{equation}
	
	\item $l\leftarrow l+1$.
	\end{itemize}
	
\end{itemize}

This elementary SEB approximation algorithm has been instantiated to various metric spaces with proofs of convergence according to the sequence $\{t_l\}_l$:
see~\cite{hyperbolicSEB-2015} for the case of hyperbolic geometry, \cite{arnaudon2013approximating} for Riemannian geometry with bounded sectional curvatures, \cite{BCBregman-2005,nielsen2006approximating} for dually flat spaces (a non-metric space equipped with a Bregman divergences~\cite{nielsen2007visualizing,boissonnat2010bregman}), etc.
In Cartan-Hadamard manifolds~\cite{arnaudon2013approximating}, we require the series $\sum_i t_i$ to diverge while the series $\sum_i t_i^2$ to converge.
The number of iterations $L$ to get a $(1+\epsilon)$-approximation of the SEB depends on the underlying geometry and the sequence $\{t_l\}_l$.
For example, in Euclidean geometry, setting $t_l=\frac{1}{l+1}$ with $L=\frac{1}{\epsilon^2}$ steps  yield a $(1+\epsilon)$-approximation of the SEB~\cite{badoiu2003smaller}.

We start by recalling the Riemannian generalization of the BC algorithm, and then consider the Siegel spaces.

\subsection{Approximating the smallest enclosing ball in Riemannian spaces}

We first instantiate a particular example of Riemannian space, the space of Symmetric Positive-Definite matrix manifold (PD or SPD manifold for short), and then consider the general case on a Riemannian manifold $(M,g)$.

\subsubsection{Approximating the SEB on the SPD manifold}

Given $n$  positive-definite matrices~\cite{bougerol1993kalman,fletcher2011horoball} $P_1,\ldots, P_n$ of size $d\times d$,  we ask to calculate the SEB  with circumcenter $P^*$ minimizing the following objective function:
\begin{equation}
\min_{P\in \PD(d)}\max_{i\in \{1,\ldots,n\}}\ \rho_\PD(P,P_i).
\end{equation}
This is a minimax optimization problem.
The SPD cone is {\em not} a complete metric space with respect to the {\em Fr\"obenius distance}, but is a {\em complete metric space} with respect to the  {\em natural Riemannian distance}.

When the minimization is performed with respect to the Fr\"obenius distance, we can solve this problem using techniques of Euclidean computational geometry~\cite{boissonnat1998algorithmic,badoiu2003smaller} by {\em vectorizing} the PSD matrices $P_i$ into corresponding vectors $v_i=\vec(P_i)$ of $\bbR^{d\times d}$ such that
$\|P-P'\|_F=\|\vec(P)-\vec(P')\|_2$, where $\vec(\cdot): \Sym(d,\bbR)\rightarrow\bbR^{d\times d}$ vectorizes a matrix by stacking its column vectors. In fact, since the matrices are symmetric, it is enough to {\em half-vectorize} the matrices: $\|P-P'\|_F=\|\vec^+(P)-\vec^+(P')\|_2$, where $\vec^+(\cdot): \Sym_{++}(d,\bbR)\rightarrow\bbR^{\frac{d(d+1)}{2}}$ , see~\cite{nielsen2017fast}.

\begin{Property}
The smallest enclosing ball of a finite set of positive-definite matrices is unique.
\end{Property}

Let us mention the two following proofs:
\begin{itemize}
\item The SEB is well-defined and unique since the SPD manifold is a {\em Bruhat-Tits space}: That is, 
a complete metric space enjoying a semiparallelogram law:
For any $P_1,P_2\in\PD(d)$ and geodesic midpoint $P_{12}=P_1(P_1^{-1}P_2)^{\frac{1}{2}}$ (see below), we have:
\begin{equation}
\rho_\PD^2(P_1, P_2)+ 4 \rho_\PD^2(P, P_{12}) \leq 2 \rho_\PD^2(P, P_1)+2 d_\PD^2(P, P_2),\ \forall P\in\PD(d).
\end{equation}
See~\cite{lang2012math} page 83 or~\cite{bhatia2009positive} Chapter~6).
In a Bruhat-Tits space, the SEB is guaranteed to be unique~\cite{lang2012math,bruhat1972groupes}.

\item Another proof of the uniqueness of the SEB on a SPD manifold consists in noticing that the SPD manifold is a Cartan-Hadamard manifold~\cite{arnaudon2013approximating}, and the SEB on Cartan-Hadamard manifolds are guaranteed to be unique.
\end{itemize}

We shall use the  invariance property of the Riemannian distance by congruence: 
\begin{equation}
\rho_\PD\left(C^\top P_1C,C^\top P_2C\right)=\rho_\PD(P_1,P_2),\quad\forall C\in\GL(d,\bbR).
\end{equation}

In particular, choosing $C=P_1^{-\frac{1}{2}}$, we get
\begin{equation}
\rho_\PD(P_1,P_2)=\rho\left(I,P_1^{-\frac{1}{2}}P_2P_1^{-\frac{1}{2}}\right).
\end{equation}
The geodesic from $I$ to $P$ is $\gamma_{I,P}(\alpha)=\Exp(\alpha\Log P)=P^\alpha$.
The set  $\{\lambda_i(P^\alpha)\}$  of the $d$ eigenvalues of $P^\alpha$ coincide with the set 
 $\{\lambda_i(P)^\alpha\}$ of eigenvalues of $P$ raised to the power $\alpha$ (up to a permutation).

Thus to cut the geodesic $I\#_t^\PD P$, we have to solve the following problem:
\begin{equation}
\rho_\PD(I,P^\alpha)=t\times \rho_\PD(I,P).
\end{equation}
That is
\begin{eqnarray}
\sqrt{\sum_i \log^2 \lambda_{i}(P)^\alpha} &=& t\times \sqrt{ \sum_i \log^2 \lambda_{i}(P)},\\
\alpha \times \sqrt{\sum_i \log^2 \lambda_{i}(P)} &=& t\times \sqrt{ \sum_i \log^2 \lambda_{i}(P)}.
\end{eqnarray}
The solution is $\alpha=t$.
Thus $I\#_t^\PD P=P^t$.
For arbitrary $P_1$ and $P_2$, we first apply the congruence transformation with $C=P_1^{-\frac{1}{2}}$, 
use the solution $I\#_t^\PD CPC^\top =(CPC^\top)^t$, and apply the inverse congruence transformation with $C^{-1}=P_1^{\frac{1}{2}}$.
It follows the theorem:

\begin{Theorem}[Geodesic cut on the SPD manifold]\label{thm:SPDcut}
For any $t\in (0,1)$, we have the closed-form expression of the geodesic cut on the manifold of positive-definite matrices:
\begin{eqnarray}
P_1\#_t^\PD P_2 &=&   P_1^{\frac{1}{2}} \Exp\left(t\ \Log \left(P_1^{-\frac{1}{2}}P_2 P_1^{-\frac{1}{2}}\right) \right) P_1^{\frac{1}{2}},\\
&=&  P_1^{\frac{1}{2}} \left(P_1^{-\frac{1}{2}} P_2 P_1^{-\frac{1}{2}}\right)^t P_1^{\frac{1}{2}},\\
&=& P_1 (P_1^{-1}P_2)^t,\\
&=& P_2 (P_2^{-1}P_1)^{1-t}.
\end{eqnarray}
\end{Theorem}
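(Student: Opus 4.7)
My plan is to leverage two ingredients already established in the excerpt: (i) the congruence invariance $\rho_\PD(C^\top P_1 C, C^\top P_2 C) = \rho_\PD(P_1, P_2)$ for any $C \in \GL(d,\bbR)$, and (ii) the base-point identity $I \#_t^\PD P = P^t$ that was verified immediately before the theorem by computing $\rho_\PD(I, P^\alpha) = \alpha \sqrt{\sum_i \log^2 \lambda_i(P)}$ and solving for $\alpha = t$. The geodesic cut $P_1 \#_t^\PD P_2$ is characterized (up to the geodesic being unique) by the two-point conditions: it lies on the geodesic from $P_1$ to $P_2$, and $\rho_\PD(P_1, P_1\#_t^\PD P_2) = t\, \rho_\PD(P_1,P_2)$.

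First I would reduce to the base case by choosing $C = P_1^{-1/2}$. The congruence $X \mapsto C^\top X C = P_1^{-1/2} X P_1^{-1/2}$ sends $P_1 \mapsto I$ and $P_2 \mapsto Q := P_1^{-1/2} P_2 P_1^{-1/2}$, and preserves all Riemannian distances; in particular, it sends geodesics to geodesics and preserves the affine parameter. Hence $P_1 \#_t^\PD P_2$ is the pre-image under this congruence of $I \#_t^\PD Q$, which by the base case equals $Q^t$. Transforming back yields
\begin{equation}
P_1 \#_t^\PD P_2 = P_1^{1/2} \bigl( P_1^{-1/2} P_2 P_1^{-1/2} \bigr)^t P_1^{1/2}.
\end{equation}

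Next I would derive the equivalent expressions. The identity $M^t = \Exp(t \Log M)$ for a matrix $M$ with positive spectrum (which applies since $Q \succ 0$) gives the first listed form. For the compact form $P_1(P_1^{-1}P_2)^t$, I would use the similarity identity $A (B A)^t A^{-1} = (AB)^t$ valid whenever the relevant powers are defined for matrices with positive eigenvalues: setting $A = P_1^{1/2}$ and $B = P_1^{-1/2} P_2$ gives $P_1^{1/2} (P_1^{-1/2} P_2 P_1^{-1/2} \cdot P_1^{1/2})^t = P_1^{1/2}(P_1^{-1/2}P_2)^t$, which I would then rearrange using $P_1^{1/2}(P_1^{-1/2} P_2)^t = P_1 (P_1^{-1} P_2)^t$ after another application of the same similarity trick with $A=P_1^{1/2}$. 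The final form $P_2(P_2^{-1}P_1)^{1-t}$ follows from the symmetry of the geodesic under the involution $t \mapsto 1-t$, i.e., $P_1\#_t^\PD P_2 = P_2 \#_{1-t}^\PD P_1$, which itself is an immediate consequence of the defining two-point characterization together with uniqueness of geodesics on the Cartan--Hadamard SPD manifold.

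The main obstacle is not conceptual but formal: carefully justifying the similarity manipulations $A \, f(A^{-1} M A)\, A^{-1} = f(M)$ for the specific function $f(X) = X^t$ when $X$ has positive spectrum, and ensuring that all fractional powers invoked are the unique principal ones. Once that is in hand, the four expressions for $P_1 \#_t^\PD P_2$ collapse into one another by routine algebraic rewriting, and the theorem follows.
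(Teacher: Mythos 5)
Your proof follows essentially the same route as the paper: reduce to the base point $I$ via the congruence $X \mapsto P_1^{-1/2} X P_1^{-1/2}$ (using the congruence-invariance of $\rho_\PD$), invoke the base-case identity $I \#_t^\PD P = P^t$ established just before the theorem, and then transform back. The only addition you make is spelling out the algebraic equivalences among the four listed expressions (via similarity identities $f(A M A^{-1}) = A f(M) A^{-1}$ and the $t \leftrightarrow 1-t$ symmetry), which the paper states without justification; that is a correct and welcome elaboration, though the intermediate display in your similarity manipulation is a bit garbled (the factor $P_1^{-1/2}$ should cancel before taking the $t$-th power, i.e., one uses $P_1^{-1/2}P_2P_1^{-1/2} = P_1^{1/2}(P_1^{-1}P_2)P_1^{-1/2}$ directly to get $(P_1^{-1/2}P_2P_1^{-1/2})^t = P_1^{1/2}(P_1^{-1}P_2)^t P_1^{-1/2}$).
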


The matrix $P_1^{-\frac{1}{2}} P_2 P_1^{-\frac{1}{2}}$ can be rewritten using the orthogonal eigendecomposition as $U D U^\top$, where 
$D$ is the diagonal matrix of generalized eigenvalues. Thus the PD geodesic can be rewritten as
\begin{equation}
P_1\#_t^\PD P_2 = P_1^{\frac{1}{2}} U  D^t U^\top P_1^{\frac{1}{2}}.
\end{equation}

We instantiate the generic algorithm to positive-definite matrices as follows:
\vskip 0.3cm

\begin{center}
\fbox{
\begin{minipage}{0.8\textwidth}
\noindent Algorithm $\mathrm{ApproximatePDSEB}(\{P_1,\ldots,P_n\},L)$:
\begin{itemize}
	\item Initialization: Let $C_1=P_1$ and $l=1$
	\item Repeat $L$ times:l
	\begin{itemize}
	\item Calculate the index of the farthest matrix: 
	\begin{equation*}
	f_l = \arg\min_{i\in\{1,\ldots, d\}}\ \rho_\PD(C_{t},P_i).
	\end{equation*}
	
	\item Geodesic walk: 
	
	\begin{equation*}
	C_{l+1}= C_l^{\frac{1}{2}} \left(C_l^{-\frac{1}{2}} P_{f_l} C_l^{-\frac{1}{2}}\right)^l C_l^{\frac{1}{2}}
	\end{equation*}
	
	\item $l\leftarrow l+1$.
	\end{itemize}
	
\end{itemize}
\end{minipage}
}
\end{center}

The complexity of the algorithm is in $O(d^3nT)$ where $T$ is the number of iterations, $d$ the row dimension of the square matrices $P_i$ and $n$ the number of matrices.

Observe that the solution corresponds to  the  arc-length parameterization of the geodesic with boundary values on the SPD manifold:
\begin{equation}
\gamma_{P_1,P_2}(t)=P_1^{\frac{1}{2}} \exp(t \Log(P_1^{-\frac{1}{2}} P_2 P_1^{-\frac{1}{2}})) P_1^{\frac{1}{2}}.
\end{equation}

The curve $\gamma_{P_1,P_2}(t)$ is a geodesic for any affine-invariant metric distance $\rho_\psi(P_1,P_2)=\|\Log P_1^{-\frac{1}{2}} P_2 P_1^{-\frac{1}{2}}\|_\psi$ where $\|M\|_\psi=\psi(\lambda_1(M),\ldots,\lambda_d(M))$ is a symmetric gauge norm~\cite{mostajeran2019affine}.

In fact, we have shown the following property:

\begin{Property}[Riemannian geodesic cut]\label{prop:riegeocut}
Let $\gamma_{p,q}(t)$ denote the Riemannian geodesic linking $p$ and $q$ on a Riemannian manifold $(\calM,g)$ (i.e., parameterized proportionally to the arc-length and with respect to the Levi-Civita connection induced by the metric tensor $g$). Then we have
\begin{equation}
p_1\#_t^g p_2=\gamma_{p_1,p_2}(t)=\gamma_{p_2,p_1}(1-t).
\end{equation}
\end{Property}

Thus it follows the following generic Riemannian algorithm:
\vskip 0.3cm

\begin{center}
\fbox{
\begin{minipage}{0.8\textwidth}
\noindent Algorithm $\mathrm{ApproximateRieSEB}(\{p_1,\ldots,p_n\},g,L)$:
\begin{itemize}
	\item Initialization: Let $c_1=p_1$ and $l=1$
	\item Repeat $L$ times:
	\begin{itemize}
	\item Calculate the index of the farthest point: 
	\begin{equation*}
	f_l = \arg\min_{i\in\{1,\ldots, d\}}\ \rho_g(c_{l},p_i).
	\end{equation*}
	
	\item Geodesic walk: 
	
	\begin{equation*}
	c_{l+1}=  \gamma_{c_l,p_{f_l}}\left(t_l\right).
	\end{equation*}
	
	\item $l\leftarrow l+1$.
	\end{itemize}
	
\end{itemize}
\end{minipage}
}
\end{center}

Theorem~1 of~\cite{arnaudon2013approximating} guarantees the convergence of the {\sc ApproximateRieSEB} algorithm provided that we have a lower bound and an upper bound on the sectional curvatures of the manifold $(M,g)$.
The sectional curvatures of the PD manifold have been proven to be negative~\cite{helgason1979differential}.
The SPD manifold is a Cartan-Hadamard manifold with  scalar curvature  $\frac{1}{8}d(d+1)(d+2)$~\cite{andai2004information} depending on the dimension $d$ of the matrices.
Notice that we can identify $P\in\PD(d)$ with an element of the quotient space $\GL(d,\bbR)/O(d)$ since $O(d)$ is the isotropy subgroup of the $\GL(d,\bbR)$ for the action $P\mapsto C^\top PC$ (i.e., $I\mapsto C^\top IC=I$ when $C\in O(d)$).
Thus we have $\PD(d)\cong \GL(d,\bbR)/O(d)$.
The SEB with respect to the Thompson metric
\begin{equation}
\rho_T(P_1,P_2) := \max\left\{\log\lambda_\max(P_2P_1^{-1}), \log\lambda_\max(P_1P_2^{-1})\right\}
\end{equation}
 has been studied in~\cite{mostajeran2019affine}.

\subsection{Implementation in the Siegel-Poincar\'e disk}\label{sec:seb:SiegelPoincare}

Given $n$ $d\times d$ complex matrices $W_1,\ldots, W_n\in\SD(d)$,  we ask to find the smallest-radius enclosing ball with center $W*$ minimizing the following objective function:
\begin{equation}
\min_{W\in \SD(d)}\max_{i\in \{1,\ldots,n\}}\ \rho_D(W,W_i).
\end{equation}

This problem may have potential applications in image morphology~\cite{StructureTensorImageFiltering-2014} or anomaly detection of covariance matrices~\cite{tavallaee2008novel}.
We may model the dynamics of a covariance matrix time-series $\Sigma(t)$ by the representation $(\Sigma(t),\dot\Sigma(t))$ where
 $\dot\Sigma(t)= \frac{d}{\mathrm{d}{t}}\Sigma(t)\in\Sym(d,\bbR)$ and use the Siegel SEB to detect anomalies, see~\cite{cont2010information} for detection anomaly based on Bregman SEBs.

The Siegel-Poincar\'e upper plane and disk are {\em not} Bruhat-Tits space, but spaces of  non-positive curvatures~\cite{d1983characterization}. Indeed, when $d=1$, the Poincar\'e disk is not a Bruhat-Space.

Notice that when $d=1$, the hyperbolic ball in the Poincar\'e disk have Euclidean shape.
This is not true anymore when $d>1$:
Indeed, the equation of the ball centered at the origin $0$:
\begin{equation}
\mathrm{Ball}(0,r)= \left\{
 W\in\SD(d)\ :\ \log\left( \frac{1+\|W\|_O}{1-\|W\|_O}\right) \leq r
\right\},
\end{equation}
amounts to
\begin{equation}
\mathrm{Ball}(0,r)= \left\{ W\in\SD(d)\ :\ \|W\|_O  \leq \frac{e^r-1}{e^r+1} \right\}.
\end{equation}
When $d=1$, $\|W\|_O=|w|=\|(\Re(w),\Im(w))\|_2$, and Poincar\'e balls have Euclidean shapes. 
Otherwise, when $d>1$, $\|W\|_O=\sigma_{\max}(W)$ and $\sigma_{\max}(W) \leq  \frac{e^r-1}{e^r+1}$ is not a complex Fr\"obenius ball.

In order to apply the generic algorithm, we need to implement the geodesic cut operation $W_1\#_t W_2$.
We consider the complex symplectic map $\Phi_{W_1}(W)$ in the Siegel disk that maps $W_1$ to $0$ and $W_2$ to $W_2'=\Phi_{W_1}(W_2)$.
Then the geodesic between $0$ and $W_2'$ is   a straight line.

We need to find $\alpha(t) W=0\#_t^\SD W$ (with $\alpha(t)>0$) such that $\rho_D(0,\alpha(t) W)=t \rho_D(0,W)$.
That is, we shall solve the following equation:
\begin{equation}
\log \left(\frac{1+\alpha(t) \|W\|_O}{1-\alpha(t) \|W\|_O}\right) = t\times \log\left(\frac{1+  \|W\|_O}{1-\|W\|_O}\right).
\end{equation}
We find the exact solution as
\begin{equation}\label{eq:GeodesicCutSiegel0}
\alpha(t) = \frac{1}{\|W\|_O}  \frac{ (1+  \|W\|_O)^t - (1-  \|W\|_O)^t}{ (1+  \|W\|_O)^t + (1- \|W\|_O)^t}.
\end{equation}

\begin{Proposition}[Siegel-Poincar\'e geodesics from the origin]\label{prop:SPgeodesicOrigin}
The geodesic in the Siegel disk is 
\begin{equation}
\gamma_{0,W}^\SD(t)=\alpha(t)W
\end{equation}
with
$$
\alpha(t) = \frac{1}{\|W\|_O}  \frac{ (1+  \|W\|_O)^t - (1-  \|W\|_O)^t}{ (1+  \|W\|_O)^t + (1-  \|W\|_O)^t}.
$$
\end{Proposition}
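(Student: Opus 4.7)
The plan is to parameterize the geodesic from the origin as $\gamma(t) = \alpha(t)\,W$ with a scalar $\alpha(t) \geq 0$, and then determine $\alpha(t)$ by imposing the arc-length proportionality condition $\rho_D(0, \gamma(t)) = t\,\rho_D(0, W)$. This works because the paper has already noted the key property that geodesics passing through the origin are Euclidean straight segments in the Siegel-Poincar\'e disk, so the only remaining task is to fix the parameterization.

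To justify the straight-line image, I would appeal to the additivity identity $\rho_D(0, W) = \rho_D(0, \alpha W) + \rho_D(\alpha W, W)$ for any $\alpha \in [0,1]$ stated earlier in Section~\ref{sec:SiegelDisk}. This identity means every point of the Euclidean segment $\{\alpha W : \alpha \in [0,1]\}$ lies on a shortest path from $0$ to $W$; since the Kobayashi distance yields a uniquely-geodesic metric space on $\SD(d)$, this segment is the unique geodesic image, and it remains only to reparameterize it proportionally to arc length.

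To compute $\alpha(t)$, I would use Theorem~\ref{thm:SKdistorigin}'s companion formula $\rho_D(0,K) = \log\frac{1+\|K\|_O}{1-\|K\|_O}$ (Eq.~\ref{eq:distSiegelDiskO}) together with the positive homogeneity $\|\alpha W\|_O = \alpha\|W\|_O$ for $\alpha \geq 0$. The condition $\rho_D(0, \alpha(t) W) = t\,\rho_D(0, W)$ then reads
\begin{equation}
\log \frac{1+\alpha(t)\|W\|_O}{1-\alpha(t)\|W\|_O} \;=\; t\,\log \frac{1+\|W\|_O}{1-\|W\|_O}.
\end{equation}
Exponentiating both sides yields
\begin{equation}
\frac{1+\alpha(t)\|W\|_O}{1-\alpha(t)\|W\|_O} \;=\; \left(\frac{1+\|W\|_O}{1-\|W\|_O}\right)^{t} \;=\; \frac{(1+\|W\|_O)^t}{(1-\|W\|_O)^t},
\end{equation}
which is a simple linear equation in the unknown $u \defeq \alpha(t)\|W\|_O$. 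Solving via componendo-dividendo gives $u = \frac{(1+\|W\|_O)^t - (1-\|W\|_O)^t}{(1+\|W\|_O)^t + (1-\|W\|_O)^t}$, whence the claimed formula for $\alpha(t)$ after dividing by $\|W\|_O$.

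There is no real obstacle here beyond this routine algebra; the nontrivial geometric content, namely that the geodesic image is a Euclidean ray, has already been established. For a sanity check, one verifies the boundary values $\alpha(0) = 0$ (giving $\gamma(0) = 0$) and $\alpha(1) = \frac{1}{\|W\|_O}\cdot\frac{(1+\|W\|_O)-(1-\|W\|_O)}{(1+\|W\|_O)+(1-\|W\|_O)} = 1$ (giving $\gamma(1) = W$), and one notes that $\alpha(t)\|W\|_O < 1$ for $t \in [0,1]$ so that $\gamma(t)$ remains inside $\SD(d)$.
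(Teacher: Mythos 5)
Your proposal is correct and follows essentially the same route as the paper: the paper likewise takes for granted that geodesics through the origin are Euclidean straight (citing the additivity $\rho_D(0,W)=\rho_D(0,\alpha W)+\rho_D(\alpha W,W)$) and then determines $\alpha(t)$ by solving $\log\frac{1+\alpha(t)\|W\|_O}{1-\alpha(t)\|W\|_O}=t\log\frac{1+\|W\|_O}{1-\|W\|_O}$, exactly your computation. Your boundary-value sanity check and the explicit appeal to uniqueness of geodesics are minor additions, not a different argument.
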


Thus the midpoint $W_1\#^\SD W_2:=W_1\#^\SD_{\frac{1}{2}} W_2$ of $W_1$ and $W_2$ can be found as follows:
\begin{equation}
W_1\#^\SD W_2 = \Phi_{W_1}^{-1}\left(0\#^\SD \Phi_{W_1}(W_2)\right),
\end{equation}
where
\begin{eqnarray}
0\#^\SD W &=& \alpha\left(\frac{1}{2}\right)W,\\
&=&   \frac{1}{\|W\|_O}  \frac{ \sqrt{1+  \|W\|_O} - \sqrt{1-  \|W\|_O}}{ \sqrt{1+  \|W\|_O} + \sqrt{1-  \|W\|_O}} W.
\end{eqnarray}

To summarize, the algorithm recenters at every step the current center $C_t$ to the Siegel disk origin $0$:

\begin{center}
\fbox{
\begin{minipage}{0.8\textwidth}
\noindent Algorithm $\mathrm{ApproximateSiegelSEB}(\{W_1,\ldots,W_n\})$:
\begin{itemize}
	\item Initialization: Let $C_1=0$ and $l=1$.
	
	\item Compute $W_i'=\Phi_{C_1}(W_i)$ for all $i\in \{1,\ldots, n\}$.
	
	\item Repeat $L$ times:
	\begin{itemize}
	\item Calculate the index of the farthest point: $F_l=\arg\min_{i\in[d]} \rho_D(0,W_i')$.
	
	\item Geodesic cut: Let $C_{l+1}=0\#_{t_l}^\SD W_{F_l}$.
	
	\item Recenter $C_{l+1}$ to the origin for the next iteration: 
	Compute $W_i'=\Phi_{C_{l+1}}(W_i)$ for all $i\in \{1,\ldots,n\}$.
	Set $C_{l+1}=0$.
	
	\item $l\leftarrow l+1$.
	\end{itemize}
	
	\item Let the approximate circumcenter be mapped back to be consistent with the input:
	\begin{equation}
	\tilde{C}=\Phi_{C_{1}}^{-1}(\Phi_{C_{2}}^{-1}(\ldots \Phi_{C_L}^{-1}(0))\ldots).
	\end{equation}
	This amounts to calculate the symplectic map associated to the matrix
	$S=C_1^{(-1)}\times \ldots\times C_L^{(-1)}$. 
	Overall it costs $L$ matrix multiplications plus the cost of evaluation of the symplectic map defined by $S$.
	
\end{itemize}
\end{minipage}
}
\end{center}

The farthest point to the current approximation of the circumcenter can be calculated using the data-structure of the Vantage Point Tree (VPT), see~\cite{nielsen2009bregmanVPT}.

The Riemannian curvature tensor of the Siegel space is non-positive~\cite{Siegel-1943,hua1954estimation} and the sectional curvatures are non-positive~\cite{d1983characterization} and bounded above by a negative constant.
In our implementation, we chose the step sizes $t_l=\frac{1}{l+1}$.
Barbaresco~\cite{Barbaresco-MIG-2013} also adopted this iterative recentering operation for calculating the median in the Siegel disk.
However at the end of his algorithm, he does not map back the median  among the source matrix set.
Recentering is costly because we need to calculate a square root matrix to calculate $\Phi_{C}(W)$.
A great advantage of Siegel-Klein space is that we have straight geodesics anywhere in the disk so we do not need to perform recentering.

\subsection{Fast implementation in the Siegel-Klein disk}\label{sec:SEB-SK}

The main advantage of implementing the Badoiu and Clarkson's algorithm~\cite{badoiu2003smaller} in the Siegel-Klein disk is to avoid to perform the costly recentering operations (which require calculation of square root matrices).
Moreover, we do not have to roll back our approximate circumcenter at the end of the algorithm.

First, we state the following expression of the geodesics in the Siegel disk:

\begin{Proposition}[Siegel-Klein geodesics from the origin]\label{prop:SKgeodesicOrigin}
The geodesic from the origin in the Siegel-Klein disk is expressed
\begin{equation}
\gamma_{0,K}^\SK(t)=\alpha(t)K
\end{equation}
with
\begin{equation}
\alpha(t) = \frac{1}{\|K\|_O}  \frac{ (1+ \|K\|_O)^t - (1- \|K\|_O)^t}{ (1+ \|K\|_O)^t + (1- \|K\|_O)^t}.
\end{equation}
\end{Proposition}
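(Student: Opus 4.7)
The plan is to exploit the fact that the Siegel-Klein geometry is a Hilbert geometry of the convex domain $\SD(d)$, so by the very definition of Hilbert distance (and the cross-ratio discussion culminating in $H_{\Omega,\kappa}(p,q)=H_{\Omega\cap(pq),\kappa}(p,q)$), pregeodesics through any two points are straight Euclidean line segments clipped to the domain. In particular, the pregeodesic issued from the origin $0$ and passing through $K$ is the affine ray $\{\alpha K : \alpha \in [0,\alpha_+)\}$ restricted to $\SD(d)$, with $\alpha_+ = 1/\|K\|_O$ obtained from Eq.~\ref{eq:alphaorigin}. Hence the geodesic necessarily has the form $\gamma_{0,K}^\SK(t) = \alpha(t)\,K$ for some scalar function $\alpha(t)$, and it only remains to identify the reparametrization making it proportional to arc-length.

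Next, I would impose the defining arc-length condition $\rho_K(0,\gamma_{0,K}^\SK(t)) = t\,\rho_K(0,K)$. Using Theorem~\ref{thm:SKdistorigin} on both sides with $\|\alpha(t)K\|_O = \alpha(t)\|K\|_O$ (valid since $\alpha(t)\ge 0$), this reads
\begin{equation*}
\tfrac12\log\left(\frac{1+\alpha(t)\|K\|_O}{1-\alpha(t)\|K\|_O}\right) \;=\; t\cdot \tfrac12\log\left(\frac{1+\|K\|_O}{1-\|K\|_O}\right).
\end{equation*}
Setting $k\defeq\|K\|_O$ and exponentiating gives $\frac{1+\alpha(t)k}{1-\alpha(t)k} = R^t$ with $R\defeq\frac{1+k}{1-k}$.

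The final step is pure algebra: solving this M\"obius-type equation for $\alpha(t)$ yields
\begin{equation*}
\alpha(t)\,k \;=\; \frac{R^t-1}{R^t+1} \;=\; \frac{(1+k)^t-(1-k)^t}{(1+k)^t+(1-k)^t},
\end{equation*}
after clearing the common factor $(1-k)^t$ from numerator and denominator, which is exactly the claimed formula upon dividing by $k=\|K\|_O$. One checks the boundary conditions $\alpha(0)=0$ and $\alpha(1)=1/k\cdot \frac{(1+k)-(1-k)}{(1+k)+(1-k)}\cdot\frac{1}{1} = 1$, so $\gamma_{0,K}^\SK(0)=0$ and $\gamma_{0,K}^\SK(1)=K$ as required. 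There is no real obstacle here; the content is entirely that Hilbert geodesics through the origin are radial, and the rest is algebraic inversion of the distance formula of Theorem~\ref{thm:SKdistorigin}. It is also worth remarking that the same algebraic manipulation applied to Eq.~\ref{eq:distSiegelDiskO} yields Proposition~\ref{prop:SPgeodesicOrigin}, which explains why the two expressions for $\alpha(t)$ in the Siegel-Poincar\'e and Siegel-Klein disks coincide formally despite the factor~$\tfrac12$ discrepancy in the distance-to-origin formulas (the factor cancels on both sides of the arc-length equation).
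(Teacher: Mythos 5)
Your proposal is correct and follows essentially the paper's route: the paper proves this by noting that $\rho_K(0,K)=\frac{1}{2}\rho_D(0,K)$, so the factor $\frac12$ cancels in the arc-length condition and the radial reparameterization $\alpha(t)$ of Proposition~\ref{prop:SPgeodesicOrigin} carries over verbatim, which is exactly the cancellation you point out in your closing remark. Your only difference is cosmetic: you redo the algebra of Eq.~\ref{eq:GeodesicCutSiegel0} directly with Theorem~\ref{thm:SKdistorigin} instead of citing the Siegel-Poincar\'e result, and both rest on the same observation that Hilbert (pre)geodesics through the origin are radial.
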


The proof follows straightforwardly from Proposition~\ref{prop:SPgeodesicOrigin} because we have
$\rho_K(0,K)=\frac{1}{2}\rho_D(0,K)$.

\section{Conclusion and perspectives}\label{sec:concl}

In this work, we have generalized the Klein model of hyperbolic geometry to the Siegel disk domain of complex matrices by considering the Hilbert geometry induced by the Siegel disk, an open bounded convex complex matrix domain.
We compared this Siegel-Klein disk model with its Hilbert distance called the {\em Siegel-Klein distance} $\rho_K$ to both the Siegel-Poincar\'e disk model (Kobayashi distance $\rho_W$) and the Siegel-Poincar\'e upper plane (Siegel distance $\rho_U$).
We show how to convert  matrices $W$ of the Siegel-Poincar\'e disk model into equivalent matrices $K$ of Siegel-Klein disk model and matrices $Z$ in the Siegel-Poincar\'e upper plane via symplectic maps.
When the dimension $d=1$, we have the following equivalent hyperbolic distances: 
\begin{equation}
\rho_D(w_1,w_2)=\rho_K(k_1,k_2)=\rho_U(z_1,z_2).
\end{equation}

Since the geodesics in the Siegel-Klein disk are by construction straight, this model is well-suited to implement techniques of computational geometry~\cite{boissonnat1998algorithmic}.
Furthermore, the calculation of  the Siegel-Klein disk does {\em not} require to recenter one of its arguments to the disk origin, a computationally costly Siegel translation operation.
We reported a linear-time algorithm for computing the exact Siegel-Klein distance $\rho_K$ between diagonal matrices of the disk (Theorem~\ref{thm:SKdiagformula}), and a
 fast way to numerically approximate the Siegel distance by bisection searches with {\em guaranteed} lower and upper bounds (Theorem~\ref{prop:LBboundSK}).
Finally, we  demonstrated the algorithmic advantage of using the Siegel-Klein disk model instead of the Siegel-Poincar\'e disk model for approximating the smallest-radius enclosing ball of a finite set of complex matrices in the Siegel disk.
 In future work, we shall consider more generally the Hilbert geometry of homogeneous complex domains  and investigate quantitatively the Siegel-Klein geometry in applications ranging from  radar processing~\cite{Barbaresco-MIG-2013},
 image morphology~\cite{SiegelDescriptor-2016},  computer vision, to machine learning~\cite{RiemannTheta-2020}.
For example, the fast and robust guaranteed approximation of the Siegel-Klein distance may proved useful 
 for performing clustering analysis in image morphology~\cite{StructureTensorImageFiltering-2014,angulo2014morphological,SiegelDescriptor-2016}. 

\vskip 0.5cm
\noindent Additional material is available online at 
\begin{verbatim}
https://franknielsen.github.io/SiegelKlein/
\end{verbatim}

\vskip 1cm
\noindent {\bf Acknowledgments}: The author would like to thank Marc Arnaudon, Fr\'ed\'eric Barbaresco, Yann Cabanes, and Ga\"etan Hadjeres for fruitful discussions, pointing out several relevant references, and feedback related to the Siegel domains.

\appendix

\section*{Notations and main formulas}

{\small

\begin{supertabular}{ll}
\underline{Complex matrices}:\\ 
Number field $\bbF$ & Real $\bbR$ or complex $\bbC$\\
$M(d,\bbF)$ & Space of square $d\times d$ matrices in $\bbF$\\
$\Sym(d,\bbR)$  & Space of real symmetric matrices\\
$0$ & matrix with all coefficients equal to zero (disk origin)\\
Fr\"obenius norm & $\|M\|_F =\sqrt{\sum_{i,j} |M_{i,j}|^2}$\\
Operator norm & $\|M\|_O=\sigma_\max(M)=\max_i \{|\lambda_i(M)|\}$\\
\underline{Domains}: &\\
Cone of SPD matrices &  $\PD(d,\bbR)=\{ P\succ 0 \ :\ P\in\Sym(d,\bbR)\}$\\
Siegel-Poincar\'e upper plane & $\SH(d) = \left\{ Z = X + iY \st X\in\Sym(d,\bbR), Y\in\PD(d,\bbR)\right\}$\\
Siegel-Poincar\'e disk & $\SD(d) = \left\{  W\in\Sym(d,\bbC) \st I-\barW W\succ  0\right\}$\\
\underline{Distances}:&\\
Siegel distance & $\rho_U(Z_1,Z_2) = \sqrt{\sum_{i=1}^d \log^2\left(\frac{1+\sqrt{r_i}}{1-\sqrt{r_i}}\right)}$\\
& $r_i=\lambda_i\left(R(Z_1,Z_2)\right)$\\
& $R(Z_1,Z_2)  := (Z_1-Z_2)(Z_1-\barZ_2)^{-1} (\barZ_1 -\barZ_2) (\barZ_1 -\barZ_2)^{-1}$\\
Upper plane metric & $\ds_U (Z) = 2 \tr\left(Y^{-1}\dZ\ Y^{-1}\dZbar\right)$\\
PD distance & $\rho_\PD(P_1,P_2) = \|\Log(P_1^{-1}P_2)\|_F= \sqrt{\sum_{i=1}^d \log^2\left(\lambda_i(P_1^{-1}P_2)\right)}$\\
PD metric & $\ds_\PD(P)=\tr\left((P^{-1}\dP)^2\right)$\\
Kobayashi distance  & $\rho_D(W_1,W_2) = \log\left( \frac{1+\|\Phi_{W_1}(W_2)\|_O}{1-\|\Phi_{W_1}(W_2)\|_O} \right)$\\
Translation in the disk	& $\Phi_{W_1}(W_2)=(I-W_1\barW_1)^{-\frac{1}{2}} (W_2-W_1) (I-\barW_1W_2)^{-1} (I-\barW_1W_1)^{\frac{1}{2}}$\\
Disk distance to origin & $\rho_D(0,W) =  \log \left(\frac{1+\|W\|_O}{1-\|W\|_O}\right)$\\
Siegel-Klein distance & $\rho_K(K_1,K_2)=\left\{
\begin{array}{ll}
\frac{1}{2} \log \left|\frac{\alpha_+(1-\alpha_-)}{\alpha_-(\alpha_+-1)}\right|, & K_1\not=K_2,\\
0 & K_1=K_2
\end{array}
\right.$\\
& $\|(1-\alpha_-)K_1+\alpha_- K_2\|_O=1$ ($\alpha_-<0$),  $\|(1-\alpha_+)K_1+\alpha_+ K_2\|_O=1$ ($\alpha_+>1$)  \\
Seigel-Klein distance to $0$ &  $\rho_K(0,K)= \frac{1}{2} \log\left( \frac{1+\|K\|_O}{1-\|K\|_O}\right)$\\
\underline{Symplectic maps and groups}:\\
Symplectic map & $\phi_S(Z)=(AZ+B)(CZ+D)^{-1}$ with $S\in\Sp(d,\bbR)$ (upper plane)\\
& $\phi_S(W)$ with $S\in\Sp(d,\bbC)$ (disk)\\
Symplectic group & $\Sp(d,\bbF)=\left\{
 \mattwotwo{A}{B}{C}{D},   AB^\top=BA^\top, CD^\top=DC^\top,\quad AD^\top-BC^\top=I
\right\}$\\
 & $A,B,C,D\in\ M(d,\bbF)$\\
group composition law & matrix multiplication\\
group inverse law & $S^{(-1)}\eqdef  \mattwotwo{D^\top}{-B^\top}{-C^\top}{A^\top}$\\
Translation in $\bbH(d)$ of $Z=A+iB$ to $iI$ & $T_U(Z)=\mattwotwo{(B^{\frac{1}{2}})^\top}{0}{-(AB^{-\frac{1}{2}})^\top}{(B^{-\frac{1}{2}})^\top}$\\
 symplectic orthogonal matrices &  $\SpO(2d,\bbR) = \left\{ \mattwotwo{A}{B}{-B}{A} \ :\ A^\top A+B^\top B=I, A^\top B\in\Sym(d,\bbR) \right\}$\\
(rotations in $\bbSH(d)$) &\\
Translation to $0$ in $\SD(d)$ & $\Phi_{W_1}(W_2)=(I-W_1\barW_1)^{-\frac{1}{2}} (W_2-W_1) (I-\barW_1W_2)^{-1} (I-\barW_1W_1)^{\frac{1}{2}}$\\
$\Isom^+(\bbS)$ & Isometric orientation preserving group of  generic space $\bbS$\\
$\Moeb(d)$ & group of M\"obius transformations\\
\end{supertabular}

}

\section{The deflation method: Approximating the eigenvalues}\label{sec:deflation}

A matrix $M\in M(d,\bbC)$ is diagonalizable  if there exists a non-singular matrix $P$ and a diagonal matrix $\Lambda=\diag(\lambda_1,\ldots,\lambda_d)$ such that $M=P\Lambda P^{-1}$.
A Hermitian matrix (i.e., $M=M^*:=\bar{M}^\top$) is a diagonalizable self-adjoint matrix which has all real eigenvalues and admits a basis of orthogonal eigenvectors.
We can compute all eigenvalues of a Hermitian matrix $M$  by repeatedly
applying the (normalized) power method using the so-called deflation method (see~\cite{loehr2014advanced}, Chapter 10, and~\cite{izaac2018computational}, Chapter 7).
The {\em deflation method} proceeds iteratively to calculate numerically the (normalized) eigenvalues $\lambda_i$'s and eigenvectors $v_i$'s as follows:

\begin{enumerate}
\item Let $l=1$ and $M_1=M$.
\item Initialize at random a normalized vector $x_0\in\bbC^d$ (i.e., $x_0$ on the unit sphere with $x_0^*x_0=1$) 
\item For $j$ in  $(0,\ldots, L_l-1)$: 
\begin{eqnarray*}
x_{j+1}&\leftarrow& M_lx_j \\
x_{j+1}&\leftarrow& \frac{x_{j+1}}{x_{j+1}^*x_{j+1}}
\end{eqnarray*}
\item Let $v_l=x_{L_l}$ and $\lambda_l=x_{L_l}^* M_l x_{L_l}$
\item Let $l\leftarrow l+1$. If $l\leq d$ then let $M_l=M_{l-1}-\lambda_{l-1}v_{l-1}v_{l-1}^*$ and goto 2.
\end{enumerate}

The deflation method reports the eigenvalues $\lambda_i$'s such that
$$
|\lambda_1|>|\lambda_2|\geq \ldots\geq |\lambda_d|,
$$
where $\lambda_1$ is the {\em dominant eigenvalue}.

The overall number of normalized power iterations is $L=\sum_{i=1}^d L_i$ (matrix-vector multiplication), 
where the number of iterations of the {\em normalized power method} at stage $l$ can be defined such that we have
 $|x_{L_l}-x_{L_l-1}|\leq \epsilon$, for a prescribed value of $\epsilon>0$. 
Notice that the numerical errors of the eigenpairs $(\lambda_i,v_i)$'s propagate and accumulate at each stage.
That is, at stage $l$, the deflation method calculates the dominant eigenvector on a residual perturbated matrix $M_l$. 
The overall approximation of the eigendecomposition can be appreciated by calculating the last residual matrix:
\begin{equation}
\left\| M-\sum_{i=1}^d \lambda_i v_{i}v_{i}^*\right\|_F.
\end{equation}

The normalized power method exhibits {\em linear convergence} for diagonalizable matrices and {\em quadratic convergence} for Hermitian matrices.
Other numerical methods for numerically calculating the eigenvalues include the {\em Krylov subspace techniques}~\cite{loehr2014advanced,sun2016finite,izaac2018computational}.

\section*{Snippet code}\label{sec:snippet}

We implemented our software library and smallest enclosing ball algorithms in Java\texttrademark{}.

The code below is a snippet written in {\sc Maxima}: A computer algebra system, freely downloadable at \url{http://maxima.sourceforge.net/}

{\small
\begin{verbatim}
/* Code in Maxima */
/* Calculate the Siegel metric distance in the Siegel upper space */

load(eigen);

/* symmetric */
S1: matrix( [0.265,   0.5],
    [0.5 , -0.085]);

/* positive-definite */
P1: matrix( [0.235,   0.048],
    [0.048 ,  0.792]);

/* Matrix in the Siegel upper space */
Z1: S1+%i*P1;

S2:  matrix( [-0.329,  -0.2],
   [-0.2 , -0.382]);

P2: matrix([0.464,   0.289],
    [0.289  , 0.431]);

Z2: S2+%i*P2;

/* Generalized Moebius transformation */
R(Z1,Z2) := 
((Z1-Z2).invert(Z1-conjugate(Z2))).((conjugate(Z1)-conjugate(Z2)).invert(conjugate(Z1)-Z2));

R12: ratsimp(R(Z1,Z2));
ratsimp(R12[2][1]-conjugate(R12[1][2]));

/* Retrieve the eigenvalues: They are all reals */
r: float(eivals(R12))[1];

/* Calculate the Siegel distance */
distSiegel: sum(log( (1+sqrt(r[i]))/(1-sqrt(r[i]))  )**2, i, 1, 2);
\end{verbatim}
}

\bibliographystyle{plain}

\bibliography{SiegelHilbertGeometryBIBV4}

\end{document}